\newtheorem{lemma}{Lemma}
\newtheorem{corollary}{Corollary}
\newtheorem{remark}{Remark}
\newtheorem{proposition}{Proposition}
\newtheorem{definition}{Definition}
\newtheorem{theorem}{Theorem}
\thanks{The author is much obliged to Matei Machedon and Manoussos Grillakis for proposing the problem and their enlightening discussions and advice.}
\title[Well-posedness for Hartree Equation]{The Hartree equation with a constant magnetic field: Well-posedness theory}
\author{Xin Dong}
\date{\today}
\begin{document}

\begin{abstract}
    We consider the Hartree equation for infinitely many electrons with a constant external magnetic field. For the system, we show a local well-posedness result when the initial data is the pertubation of a Fermi sea, which is a non-trace class stationary solution to the system. In this case, the one particle Hamiltonian is the Pauli operator, which possesses distinct properties from the Laplace operator, for example, it has a discrete spectrum and infinite-dimensional eigenspaces. The new ingredient is that we use the Fourier-Wigner transform and the asymptotic properties of associated Laguerre polynomials to derive a collapsing estimate, by which we establish the local well-posedness result.
\end{abstract}
\maketitle

\section{Introduction}

For a system of $N$ electrons moving in a constant magnetic field $B=(0,0,b)$, $(b>0)$, the Hamiltonian is described by 
\begin{equation}
    \hat{\mathcal{H}}_{N}=\sum_{j=1}^{N}h_{j}+\sum_{j>k}^{N}w\left(x_{j}-x_{k}\right),\quad x_{j}\in\mathbb{R}^3,
\end{equation}
and the Schr\"odinger equation is 
\begin{equation}\label{many,body,schrodinger}
    i\,\partial_{t}\Psi_{N}(t,x_{1},x_2,\ldots,x_N)=\hat{\mathcal{H}}_{N}\Psi_{N}(t,x_{1},x_2,\ldots,x_N),\quad \Psi_{N}(t=0)=\Psi_{N,0}\in \wedge^{N}L^2\left(\mathbb{R}^3,\mathbb{C}^2\right),
\end{equation}
where $h=\left(\sigma \cdot (-i\nabla -A)\right)^2$ is the Pauli operator 
\begin{enumerate}
        \item[(a)] $\sigma=(\sigma_{1},\sigma_{2},\sigma_{3})$ are Pauli matrices ;
        \item[(b)] $\displaystyle A=-\frac{b}{2}\left(x^2,-x^1,0\right)$ \footnote{There are other choices of $A$, for example $A=-b(x^2,0,0)$ \cite[Chapter XV]{LandauLifVol3}. We use the one which is fixed by the Coulomb gauge $\nabla\cdot A=0$.} is the vector potential of the field $\mathbf{B}=\nabla\times A$,
\end{enumerate}
$h_{j}$ means $h$ acts on the variable $x_{j}$ (the $j$-th electron) and $w$ is the pairwise interaction potential. By the Pauli exclusion principle, $\Psi_{N}$ is in the space $\wedge^{N}L^2\left(\mathbb{R}^3,\mathbb{C}^2\right)$ of anti-symmetric functions. A direct computation shows
\[
    h=\begin{pmatrix} \left(-i\nabla-A\right)^2 & 0\\ 0 & \left(-i\nabla-A\right)^2\end{pmatrix}-\sigma\cdot\mathbf{B},
\]
while $\sigma\cdot\mathbf{B}=\begin{pmatrix} b & 0\\ 0 & -b\end{pmatrix}$ is harmless for the analysis of the system. For simplicity, we consider the scalar case, i.e. $h=\left(-i\nabla-A\right)^2$.

The initial data $\Psi_{N,0}$ is set to be a Slater determinant 
\begin{align*}
    \Psi_{N,0}(x_{1},x_2,\ldots,x_{N})&=\psi_{1,0}\wedge\psi_{2,0}\wedge\cdots\wedge\psi_{N,0}(x_{1},\ldots,x_{N})\\
    &:=\frac{1}{\sqrt{N!}}\sum_{\pi\in S_{N}}sgn(\pi) \psi_{1,0}(x_{\pi(1)}) \psi_{2,0}(x_{\pi(2)})\cdots \psi_{N,0} (x_{\pi(N)})
\end{align*}
with a family of $N$ orthonormal orbitals $\{\psi_{j,0}\}_{j=1}^{N}$ in $L^2(\mathbb{R}^3)$ ($sgn(\pi)$ denotes the sign
of the permutation $\pi\in S_{N}$). $\Psi_{N,0}$ is presumed as an approximation to the ground state of $\hat{\mathcal{H}}_{N}$. The total energy of Equation (\ref{many,body,schrodinger}) is $\left\langle\Psi_{N},\hat{\mathcal{H}}_{N}\Psi_{N}\right\rangle$. At the initial time $t=0$, a direct computation shows that the energy is in the following form
\begin{align}
    \sum_{j=1}^{N}\left\langle \psi_{j,0},h\psi_{j,0}\right\rangle+\frac{1}{2}\int_{\mathbb{R}^3\times\mathbb{R}^3}\left(\sum_{j=1}^{N}|\psi_{j,0}|^2\right)(x)w(x-y)\left(\sum_{j=1}^{N}|\psi_{j,0}|^2\right)(y)\,dxdy\label{slater,many,body}\\
    -\frac{1}{2}\int_{\mathbb{R}^3\times\mathbb{R}^3}\left(\sum_{j=1}^{N}\psi_{j,0}(x)\bar{\psi}_{j,0}(y)\right)^2w(x-y)\,dxdy.\nonumber
\end{align}

After the time evolution, $\Psi_{N}$ may not necessarily stay as a Slater determinant. Instead, one might expect that in an appropriate sense,
\[
    \Psi_{N}(t,x_1,\ldots,x_N) \approx \left(\psi_1(t)\wedge\psi_2(t)\wedge\cdots\wedge\psi_N(t)\right)(t,x_1,\ldots,x_N),
\]
for short time. While $\psi_{j}(t)$ is described by the following  Hartree-Fock equations, for $j=1,\ldots,N$,
\begin{align}\label{hartree,fock,N}
    \left\{ \begin{array}{l}
    \displaystyle i\,\partial_{t}\psi_{j}(t,x)=\left(h+\rho_{t}*w-X\right)\psi_{j}(t,x), \\
    \displaystyle \psi_{j}(0,x)=\phi_{j,0}(x),\quad \psi_{j,0}\in L^2\left(\mathbb{R}^3\right), 
    \end{array}\right. t\in\mathbb{R},\ x\in\mathbb{R}^3.
\end{align}
where $\rho_{t}(x)$ is the particle density 
\[
    \rho_{t}(x)=\sum_{k=1}^{N}\left|\psi_{k}\right|^2(t,x),
\]
$\rho_{t}*w$ denotes the usual convolution
\[
    \left(\rho_{t}*w\right)(x):=\int_{\mathbb{R}^3}\rho_{t}(x-y)w(y)\,dy,
\]
and $X$ is the integral operator with kernel
\[
    X(t,x,y)=\sum_{k=1}^{N}w(x-y)\psi_{k}(t,x)\bar{\psi}_{k}(t,y).
\]
$\{\psi_{j}(t,x)\}_{j=1}^{N}$ remains an orthonormal set as long as Equations (\ref{hartree,fock,N}) are well-posed. Equations (\ref{hartree,fock,N}) come with a total energy
\begin{align*}
    \sum_{j=1}^{N}\left\langle \psi_{j},h\psi_{j}\right\rangle+\frac{1}{2}\int_{\mathbb{R}^3\times\mathbb{R}^3}\rho_{t}(x)w(x-y)\rho_{t}(y)\,dxdy
    -\frac{1}{2}\int_{\mathbb{R}^3\times\mathbb{R}^3}\left(\sum_{j=1}^{N}\psi_{j}(x)\bar{\psi}_{j}(y)\right)^2w(x-y)\,dxdy.\nonumber,
\end{align*}
which assumes the same expression as (\ref{slater,many,body}) at the initial time.

In a mean field regime and in the absence of the magnetic field, i.e. $h=-\Delta$, with a scaling of the kinetic part and the interaction part, Equations (\ref{hartree,fock,N}) are an effective description of Equation (\ref{many,body,schrodinger}) for certain $w$ and initial data, when $N$ is sufficiently large. See details in \cite{BPS2014}. In \cite{BPS2014}, the exchange term $X$ is of lower order and they also proved that the effective description remains true if Equations (\ref{hartree,fock,N}) are replaced by the following $N$ Hartree equations \footnote{They are called Hartree equations since the operator $h+\rho_{t}*w$ is derived by applying the variational principle to the Hartree product  $\psi_{1}\otimes\cdots\otimes\psi_{N}$ instead of the Slater determinant $\psi_{1}\wedge\cdots\wedge\psi_{N}$ \cite[Chapter Three]{MQC96}.} in the reduced Hartree-Fock \cite{Solovej1991} model, for $j=1,\ldots,N$, 
\begin{align}\label{N,hartree,equations}
    \left\{\begin{array}{l}
        \displaystyle i\,\partial_{t}\psi_{j}(t,x)=\left(h+\rho_{t}*w\right)\psi_{j}(t,x),\\
        \displaystyle \psi_{j}(0,x)=\psi_{j,0}(x),\quad \psi_{j,0}\in L^2\left(\mathbb{R}^3\right), 
    \end{array}\right. t\in\mathbb{R},\ x\in\mathbb{R}^3.
\end{align}   
We refer to \cite{BGGAM2003,EESY04,FW11} for other comparisons on the three dynamics from a perspective of mean field and semi-classical limit and refer to \cite{NS1981,Spohn1981} for a different mean field limit of Equation (\ref{many,body,schrodinger}) on the Vlasov hierarchy.

The problem of our interest is the well-posedness theory of Hartree equations (\ref{N,hartree,equations}) when we take the formal limit of $N$ to be infinite.  In order to give a mathematical description of the problem, we adopt the density matrix formulation of the Hartree equations. Denote the density matrix associated to $\{\psi_{j}\}_{j=1}^{N}$ by 
\begin{equation}\label{kernel,Gamma}
    \Gamma_{N}(t,x,y)=\sum_{j=1}^{N}\psi_{j}(t,x)\bar{\psi}_{j}(t,y),\quad x,y\in\mathbb{R}^3,
\end{equation}
where $\Gamma_{N}$ can be thought as an operator from $L^2\left(\mathbb{R}^3\right)$ to itself and (\ref{kernel,Gamma}) is the expression for the integral kernel of $\Gamma_{N}$, i.e.
\begin{equation}
    \left(\Gamma_{N}f\right)(x)=\int_{\mathbb{R}^{3}}\Gamma_{N}(t,x,y)f(y)\,dy,\quad x\in\mathbb{R}^3.
\end{equation}
If operators have integral kernels, for simplicity, we use the same notations for the operators and their integral kernels. Based on Hartree equations (\ref{N,hartree,equations}),  $\Gamma_{N}$ satisfies the following operator equation
\begin{align}
    \left\{\begin{array}{l}
    \displaystyle i\,\partial_{t}\Gamma_{N} =\left[h+\rho_{\Gamma_{N}}*w,\Gamma_{N}\right],\\ 
    \displaystyle \Gamma_{N}(0,x,y)=\Gamma_{N0}(x,y)=\sum_{j=1}^{N}\psi_{j,0}(x)\bar{\psi}_{j,0}(y),\quad x,y\in\mathbb{R}^3
    \end{array}\right.
\end{align}
where $\rho_{\Gamma_{N}}(t,x)=\Gamma_{N}(t,x,x)$, $[A,B]=AB-BA$ and $\rho_{\Gamma_{N}}*w$ denotes the multiplication operator on $L^2\left(\mathbb{R}^3\right)$ by $\rho_{\Gamma_{N}}*w$. 

Note that $\left\|\Gamma_{N0}\right\|_{tr}=N$. In this formulation, as $N\rightarrow \infty$, the trace norm of $\Gamma_{N0}$ blows up.
Therefore the case we want to study is the following Hartree equation
\begin{align}\label{hatree,density,form}
    \left\{\begin{array}{l}
        i\,\partial_{t}\Gamma =\left[h+\rho_{\Gamma}*w,\Gamma\right],\\
        \Gamma(t=0)=\Gamma_{0}, 
    \end{array}\right.
\end{align}
where $\rho_{\Gamma}(t,x)=\Gamma(t,x,x)$, with $\Gamma_{0}$ not being of trace class. Notice that the Pauli exclusion principle of infinitely many electrons requires that $\Gamma_{0}$ satisfies the operator inequality $0\le\Gamma_{0}\le1$.

In the absence of magnetic fields, i.e. $h=-\Delta$, if $\Gamma_{0}$ is not of trace class, Equation (\ref{hatree,density,form}) was recently studied by several authors \cite{LS2015,LS2014,CHP2017,CHP2018} and they showed global well-posedness and the long time scattering behavior separately for different interaction potentials $w$; if $\Gamma_{0}$ is of trace class, the Hartree-Fock equation \footnote{In this case, the methods for the Hartree-Fock equation can be directly applied to the Hartree equation.} (adding the exchange term to Equation (\ref{hatree,density,form})) has been treated by \cite{BPF1974,BPF1976,Cha1976,Zag1992}.

In the presence of a constant magnetic field, to my knowledge, the author is the first one to consider the Hartree equation when $\Gamma_{0}$ is not of trace class or a Hilbert-Schmidt operator. Since the operator $h$ is now the Pauli operator other than the Laplace operator, the spectrum changes from a continuous one to a discrete one and having no eigenspaces turns into the case that  eigenspaces are of infinite dimension. Even though we mainly care about the case when $\Gamma_{0}$ is not of trace class, to complete the picture, when $\Gamma_{0}$ is of trace class and $w=\frac{1}{|x|}$, we establish a global well-posedness result at the energy level in the appendix.

The explicit form of Equation (\ref{hatree,density,form}) is 
\begin{equation}
    i\,\partial_{t}\Gamma=\left[-\partial_{x^3}^{2}+D^*D+b+\rho_{\Gamma}*w,\Gamma\right],
\end{equation}
where 
\begin{equation}
    D=-2\partial_{\bar{z}}-\frac{b}{2}z,\quad D^*=2\partial_{z}-\frac{b}{2}\bar{z},\quad z=x^1+ix^2.
\end{equation}
Consider first the two dimensional problem 
\begin{align}\label{hatree,density,form,main}
    \left\{\begin{array}{l}
        i\,\partial_{t}\gamma=\left[ H+\rho_{\gamma}*v,\gamma\right],\\
        \gamma(0,x,y)=\gamma_{0}(x,y),   
    \end{array}\right. \quad x,y\in\mathbb{R}^2,
\end{align}
where 
\begin{equation}
    H=D^*D,\quad \rho_{\gamma}(t,x)=\gamma(t,x,x),
\end{equation} 
and $\gamma: L^2\left(\mathbb{R}^2\right)\rightarrow L^2\left(\mathbb{R}^2\right)$. If $v\in L^{1}(\mathbb{R}^2)$ \footnote{For the given family of solutions $\bar{\Pi}_{\phi}$, $\rho_{\bar{\Pi}_{\phi}}=\bar{\Pi}_{\phi}(x,x)=\phi(0)$ is constant. In order for $\rho_{\bar{\Pi}_{\phi}}*v$ to make sense, $v\in L^{1}(\mathbb{R}^2)$.}, Equation (\ref{hatree,density,form,main}) admits one family \footnote{For the other family, see Section \ref{stationary,solutions}.} of non-trace class stationary solutions with integral kernels in the following form
\begin{equation}\label{stationary,solution,form,1}
    \bar{\Pi}_{\phi}(x,y)=\phi\left(|x-y|\right)e^{-i\frac{b\Omega(x,y)}{2}},\quad \Omega(x,y):=x^1y^2-x^2y^1,\ x,y\in\mathbb{R}^2,
\end{equation}
whose derivation is in Section \ref{stationary,solutions}. Inspired by \cite{LS2015,LS2014,CHP2017,CHP2018}, we are interested in the evolution of perturbations of the stationary solutions.

Suppose the pertubation of the stationary solution $\bar{\Pi}_{\phi}$ is $Q(t,x,y)=\gamma(t,x,y)-\bar{\Pi}_{\phi}(x,y)$, then the evolution equation for $Q$ is
\begin{align}\label{equation,pertubation}
    \left\{\begin{array}{l}
        \displaystyle i\,\partial_{t}Q =[H+\rho_{Q}*v,Q]+[\rho_{Q}*v,\bar{\Pi}_{\phi}]  \\
        \displaystyle Q(0,x,y)=Q_{0}(x,y),
    \end{array}\right.\quad x,y\in\mathbb{R}^2,
\end{align}
where $\rho_{Q}(t,x)=Q(t,x,x)$.

The operator $H$ has a discrete spectrum $\sigma(H)=\{2bj\}_{j\in\mathbb{N}}$ and it is decomposed into mutually orthogonal projections $P_{j}$ on $L^2(\mathbb{R}^2)$ with corresponding eigenvalue $2bj$,
\begin{equation*}
    H=\sum_{j=0}^{\infty}2bj\,P_{j}=\frac{b}{2\pi}\sum_{j=0}^{\infty}2bj L_{j}\left(\frac{b}{2}|x-y|^2\right)\exp\left(-\frac{b}{4}|x-y|^2\right)e^{-i\frac{b\Omega(x,y)}{2}}
\end{equation*}
where $P_{j}$ are infinite-dimensional projections and $L_{k}(\lambda)$ are Laguerre polynomials, i.e.
\begin{equation}
     L_{k}(\lambda)=\sum_{j=0}^{k}\binom{k}{j}\frac{(-\lambda)^j}{j!},\ \left(\lambda\in\mathbb{R}\right).
\end{equation}
For more details, see Section \ref{prop,H}.

The physical interpretation of $\bar{\Pi}_{\phi}$ is that when $\phi$ is chosen as 
\begin{equation}\label{stationary,landau}
    \phi\left(x\right)=\frac{b}{2\pi}\sum_{j=0}^{n}L_{j}\left(\frac{b}{2}|x|^2\right)\exp\left(-\frac{b}{4}|x|^2\right),\quad x\in\mathbb{R}^2,
\end{equation}
$\bar{\Pi}_{\phi}$  corresponds to the projection from $L^2\left(\mathbb{R}^2\right)$ onto the first $n+1$ eigenspaces \footnote{In the physics literature, they are called Landau levels.} of $H$, i.e. the possible low energy states of $H$. As an analog of the classical picture of a Fermi sea, we call $\bar{\Pi}_{\phi}$  the Fermi sea. The stationary solution associated to (\ref{stationary,landau}) covers an important physical example in our setting. Let $k_{B}$ be the Boltzmann's constant and $T$ be the absolute temperature, the Fermi-Dirac distribution in the operator form is given by
\begin{equation}\label{Fermi,Dirac,stat}
    \frac{1}{e^{(H-\mu)/k_{B}T}+1}f:=\sum_{j=0}^{\infty}\frac{1}{e^{(2bj-\mu)/k_{B}T}+1}P_{j}f,
\end{equation}
where $f\in L^2(\mathbb{R}^2)$. Setting $\mu=2nb$, the zero temperature limit ($T\rightarrow 0^{+}$) of (\ref{Fermi,Dirac,stat}) is $\mathbf{1}_{\left(H\le 2nb\right)}$, which is exactly the projection associated to (\ref{stationary,landau}). More generally, for any finite $\mu$, at zero or positive temperature, the Fermi-Dirac distribution corresponds to a $\bar{\Pi}_{\phi}$, where
\begin{equation}\label{cover,Fermi,Dirac}
    \phi\left(x\right)=\frac{b}{2\pi}\sum_{j=0}^{\infty}\frac{1}{e^{(2bj-\mu)/k_{B}T}+1}L_{j}\left(\frac{b}{2}|x|^2\right)\exp\left(-\frac{b}{4}|x|^2\right),\quad x\in\mathbb{R}^2.
\end{equation}

From a functional calculus perspective of the stationary solutions (\ref{stationary,solution,form,1}), suppose $l$ is a function defined on the spectrum $\sigma(H)$, i.e. $l$ determines a sequence, $l(H)$ is defined as 
\begin{equation}
    l(H):=\sum_{j=0}^{\infty}l\left(2bj\right)P_{j}=\frac{b}{2\pi}\sum_{j=0}^{\infty}l(2bj)L_{j}\left(\frac{b}{2}|x-y|^2\right)\exp\left(-\frac{b}{4}|x-y|^2\right)e^{-i\frac{b\Omega(x,y)}{2}},
\end{equation}
where we denote $l_{j}=l(2bj)$. $l(H)$ corresponds to $\phi$ in (\ref{stationary,solution,form,1}) in the way
\begin{equation}\label{phi,l,functional,correspondence}
    \phi(x)=\frac{b}{2\pi}\sum_{j=0}^{\infty}l_{j}L_{j}\left(\frac{b}{2}|x|^2\right)\exp\left(-\frac{b}{4}|x|^2\right).
\end{equation}

For our main results, we will use the following norms

\begin{definition}\label{def,main,norm} Suppose $f\in L^2\left(\mathbb{R}^2\right)$, $s\ge 0$,
\begin{align*}
    &\left\|H^{s/2}f\right\|_{L^2}^2:=\sum_{j=0}^{\infty} \left(2bj\right)^s\left\|P_{j}f\right\|_{L^2}^2,\quad \left\|\langle H\rangle^{s/2}f\right\|_{L^2}^2:=\sum_{j=0}^{\infty} \left\langle 2bj\right\rangle^s\left\|P_{j}f\right\|_{L^2}^2,\\
    &\left\|\bar{H}^{s/2}f\right\|_{L^2}:=\left\|H^{s/2}\bar{f}\right\|_{L^2},\quad \left\|\langle \bar{H}\rangle^{s/2}f\right\|_{L^2}:=\left\|\langle H\rangle^{s/2}\bar{f}\right\|_{L^2},
\end{align*}
where $\langle 2bj\rangle=\left(1+\left(2bj\right)^2\right)^{1/2}$ and $\bar{H}$ is the complex conjugation of $H$, i.e.
\[
    \bar{H}=\overline{D^*}\bar{D}=\left(2\partial_{\bar{z}}-\frac{b}{2}z\right)\left(-2\partial_{z}-\frac{b}{2}\bar{z}\right).
\]
\end{definition}

With respect to the new norms, we obtain a local well-posedness result of Equation (\ref{equation,pertubation}). To state the result, first recall that a mild solution of Equation (\ref{equation,pertubation}) is a solution satisfying the integral equation
\begin{equation}
    Q(t,x,y)=e^{-it(H_{x}-\bar{H}_{y})}Q_{0}(x,y)-i\int_{0}^{t}e^{-it(H_{x}-\bar{H}_{y})(t-\tau)}\left[\rho_{Q}*v,Q+\bar{\Pi}_{\phi}\right]\, d\tau
\end{equation}
in an appropriate space. The appropriate space in this paper is defined as a Banach space $\mathbf{N}_{T}$ endowed with the norm,
\begin{equation}
    \|Q(t,x,y)\|_{\mathbf{N}_{T}}:= \sup_{(q,r)\in S}\|\langle H_{x}\rangle^{1/2} \langle \bar{H}_{y}\rangle^{1/2}Q(t,x,y)\|_{\left(L_{I_{T}}^{q} L_x^{r} L_{y}^2\right)\cap \left(L_{I_{T}}^{q}L_{y}^{r}L_{x}^2\right)} +\| \langle\nabla_{x}\rangle^{9/8}\rho_{Q}(t,x)\|_{L_{I_{T}}^{2}L^2_{x}},
\end{equation}
where $I_{T}=[0,T]$ and
\begin{equation}\label{admissble,partial}
    S=\left\{(q,r)\left| \left(\frac{1}{q},\frac{1}{r}\right)\hbox{ is in the line segment connecting } \left(\frac{1}{\infty},\frac{1}{2}\right)\hbox{ and } \left(\frac{1}{4},\frac{1}{4}\right)\right.\right\}.
\end{equation}
The first part of $\mathbf{N}_{T}$ is the Strichartz norm and the set (\ref{admissble,partial}) is a subset of admissible pairs $(q,r)$ which satisfy 
\begin{equation}\label{admissible,pairs}
    \frac{1}{q}+\frac{1}{r}=\frac{1}{2},\quad 2<q\le\infty.
\end{equation}
The second part of $\mathbf{N}_{T}$ involves the collapsing term $\rho_{Q}$, whose estimate is the main new ingredient in this paper. The theorem that we want to prove is as follows
\begin{theorem}\label{pauli,local,wellposed}
Consider Equation (\ref{equation,pertubation}) and suppose that $v\in L^1\left(\mathbb{R}^2\right)$ and
\begin{equation}\label{condition,stationary}
    \phi(x)=\phi(|x|), \quad\left\|\langle H\rangle^{1/2}\langle \bar{H}\rangle^{1/2}\phi\right\|_{L^2}<\infty,\quad x\in\mathbb{R}^2.
\end{equation}
If the initial data $Q_{0}(x,y)$ satisfies
\[
    \left\Vert\langle H_{x}\rangle^{1/2} \langle \bar{H}_{y}\rangle^{1/2}Q_{0}(x,y)\right\Vert_{L^2_x L^2_y}<\infty,
\]
then for sufficiently short time $T$,  Equation (\ref{equation,pertubation}) has a mild solution in the Banach space $\mathbf{N}_{T}$.
\end{theorem}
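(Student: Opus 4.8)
I would prove Theorem~\ref{pauli,local,wellposed} by a Banach fixed point argument for the Duhamel map
\[
  \Phi[Q](t) := e^{-it(H_{x}-\bar H_{y})}Q_{0} - i\int_{0}^{t} e^{-i(t-\tau)(H_{x}-\bar H_{y})}\bigl[\rho_{Q}*v,\,Q+\bar\Pi_{\phi}\bigr](\tau)\,d\tau
\]
on the ball $B_{R}=\{Q\in\mathbf{N}_{T}:\|Q\|_{\mathbf{N}_{T}}\le R\}$, with $R\sim\|\langle H_{x}\rangle^{1/2}\langle\bar H_{y}\rangle^{1/2}Q_{0}\|_{L^2_xL^2_y}$ and $T$ chosen small at the end. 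Here $e^{-it(H_{x}-\bar H_{y})}$ is the free evolution of the linearized equation $i\,\partial_{t}Q=[H,Q]$ (equivalently $Q\mapsto e^{-itH}Q\,e^{itH}$ on kernels); since $[H,\bar\Pi_{\phi}]=0$ and $\rho_{\bar\Pi_{\phi}}*v=\phi(0)\|v\|_{L^1}\cdot\mathrm{Id}$, this is consistent with $\bar\Pi_{\phi}$ being a stationary solution. The weight $\langle H_{x}\rangle^{1/2}\langle\bar H_{y}\rangle^{1/2}$ commutes with the propagator, so every estimate below is carried out after applying it.

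\textbf{Step 1: linear estimates.} First I would establish
\[
  \bigl\|e^{-it(H_{x}-\bar H_{y})}Q_{0}\bigr\|_{\mathbf{N}_{T}}\lesssim \bigl\|\langle H_{x}\rangle^{1/2}\langle\bar H_{y}\rangle^{1/2}Q_{0}\bigr\|_{L^2_xL^2_y}.
\]
The Strichartz part of the $\mathbf{N}_{T}$-norm comes from the Strichartz inequalities for $e^{-it(H_{x}-\bar H_{y})}$ (proved earlier in the paper); all pairs in $S$ have $q\ge 4$, so we stay away from the endpoint and no Christ--Kiselev device is needed for the retarded versions. The remaining term $\|\langle\nabla_{x}\rangle^{9/8}\rho_{\,\cdot\,}\|_{L^{2}_{I_T}L^{2}_{x}}$ is exactly the \emph{collapsing estimate}: one passes $\rho_{e^{-it(H_x-\bar H_y)}Q_0}$ through the Fourier--Wigner transform, turning it into a series in the matrix elements of $Q_0$ against the Landau eigenfunctions with time factors $e^{-2ib(j-k)t}$ weighted by associated Laguerre polynomials, and then invokes the large-index asymptotics of the $L_{n}^{(m)}$ to extract $L^{2}$-in-time integrability on the bounded interval $I_{T}$ with the stated $\tfrac{9}{8}$-derivative loss.

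\textbf{Step 2: nonlinear estimates, self-mapping and contraction.} For the Duhamel term I would apply the inhomogeneous Strichartz estimate together with the collapsing estimate (now applied to the Duhamel integral) to reduce matters to bounding $\langle H_{x}\rangle^{1/2}\langle\bar H_{y}\rangle^{1/2}[\rho_{Q}*v,\,Q+\bar\Pi_{\phi}]$ in a dual Strichartz norm with a factor of $L^{q'}_{I_T}$ in time. This needs: (i) a Leibniz/commutator lemma for $\langle H_{x}\rangle^{1/2}$ applied to the product of the multiplication operator $\rho_{Q}*v$ with a kernel $K$, which (since $H=D^{*}D$ with $D,D^{*}$ first order, and since $H^{9/8}(\mathbb{R}^{2})$ multiplies the magnetic-$H^{1}$ space) distributes the weight onto $\langle\nabla_{x}\rangle^{9/8}(\rho_{Q}*v)$ and $\langle H_{x}\rangle^{1/2}K$ modulo lower-order terms — here $\tfrac{9}{8}>1$ is used — combined with $\|\langle\nabla\rangle^{9/8}(\rho_{Q}*v)\|_{L^2}\le\|v\|_{L^1}\|\langle\nabla\rangle^{9/8}\rho_{Q}\|_{L^2}$ by Young and $v\in L^1$; and (ii) the splitting $[\rho_{Q}*v,Q]+[\rho_{Q}*v,\bar\Pi_{\phi}]$, the first piece being quadratic in $Q$ and controlled by pairing the $L^{2}_{I_T}$-bound on $\rho_{Q}$ with the Strichartz bounds on $Q$ (H\"older in time yields a positive power $T^{\theta}$ from the gap $q>2$), and the second being linear in $Q$ through $\rho_{Q}$ and estimated the same way, using the hypothesis $\|\langle H\rangle^{1/2}\langle\bar H\rangle^{1/2}\phi\|_{L^2}<\infty$ to control the stationary kernel $\bar\Pi_{\phi}$. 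Altogether $\|\Phi[Q]\|_{\mathbf{N}_{T}}\le C_{0}\|\langle H_{x}\rangle^{1/2}\langle\bar H_{y}\rangle^{1/2}Q_{0}\|_{L^2_xL^2_y}+C_{1}T^{\theta}(\|Q\|_{\mathbf{N}_T}^{2}+\|Q\|_{\mathbf{N}_T})$, so $\Phi$ maps $B_{R}$ into itself for $R=2C_{0}\|\langle H_{x}\rangle^{1/2}\langle\bar H_{y}\rangle^{1/2}Q_0\|_{L^2_xL^2_y}$ and $T$ small; repeating the same estimates on $\Phi[Q_{1}]-\Phi[Q_{2}]$ and using bilinearity of the commutator and of $Q\mapsto\rho_{Q}$ gives $\|\Phi[Q_{1}]-\Phi[Q_{2}]\|_{\mathbf{N}_T}\le C_{2}T^{\theta}(R+1)\|Q_{1}-Q_{2}\|_{\mathbf{N}_T}$, a contraction, and the fixed point is the desired mild solution in $B_R\subset\mathbf{N}_T$.

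\textbf{Main obstacle.} The only nonstandard point is the collapsing estimate in Step~1. Because $\sigma(H)=2b\mathbb{N}$, the propagator $e^{-itH}$ is $\tfrac{\pi}{b}$-periodic: there is no global dispersive decay, and the classical Klainerman--Machedon bilinear/collapsing bound $\|\rho_{e^{it\Delta_x}e^{-it\Delta_y}\gamma_0}\|_{L^{2}_{t,x}}\lesssim\bigl\||\nabla|^{1/2}_{x}|\nabla|^{1/2}_{y}\gamma_0\bigr\|_{L^2}$ has no direct analogue. Making the Fourier--Wigner/Laguerre substitute work — controlling the associated-Laguerre asymptotics uniformly in both indices, summing the resulting series, and producing the clean $L^{2}_{I_T}L^{2}_{x}$ bound with precisely $\langle\nabla_{x}\rangle^{9/8}$ — is the technical heart of the argument; once it is in hand, Steps 2 and 3 are routine.
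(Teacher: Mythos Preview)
Your plan is the paper's: contraction on $\mathbf{N}_{T}$, linear input from Strichartz plus the collapsing theorem, and the nonlinear reduction to $\bigl\|\langle H_{x}\rangle^{1/2}\langle\bar H_{y}\rangle^{1/2}[\rho_{Q}*v,\,\cdot\,]\bigr\|_{L^{1}_{I_T}L^{2}_{x,y}}$. Two places where your sketch is looser than the paper and worth sharpening:

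\emph{The ``Leibniz lemma.''} The paper does not use (and explicitly says it does not have) a fractional Leibniz rule for $\langle H\rangle^{s/2}$. What is used is the $L^{2}$ identity $\|\langle H\rangle^{1/2}f\|_{L^{2}}^{2}=\|Df\|_{L^{2}}^{2}+\|f\|_{L^{2}}^{2}$, which trades the nonlocal weight for the first-order operator $D$, followed by the \emph{exact} product rule $D(fg)=(Df)g-2f\,\partial_{\bar z}g$. One full derivative lands on $\rho_{Q}*v$; the extra $1/8$ then comes from Sobolev $\dot H^{1/8}\hookrightarrow L^{16/7}$ in $x$, paired via H\"older against $\|\langle\bar H_{y}\rangle^{1/2}Q\|_{L^{16}_{x}L^{2}_{y}}$, which is controlled by $\|\langle H_{x}\rangle^{1/2}\langle\bar H_{y}\rangle^{1/2}Q\|_{L^{2}_{x}L^{2}_{y}}$ through the magnetic Sobolev lemma. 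Your phrase ``$H^{9/8}$ multiplies the magnetic $H^{1}$ space'' is a correct corollary of this computation, but the mechanism is the $D$-product rule at integer order, not a fractional multiplier theorem.

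\emph{The stationary piece.} The term $[\rho_{Q}*v,\bar\Pi_{\phi}]$ is not ``estimated the same way'': $\bar\Pi_{\phi}$ is not Hilbert--Schmidt, so you cannot just plug $K=\bar\Pi_{\phi}$ into the product estimate. The paper instead computes $D_{x}\bar\Pi_{\phi}$, $\bar D_{y}\bar\Pi_{\phi}$, $D_{x}\bar D_{y}\bar\Pi_{\phi}$ explicitly and checks that each is a function of $x-y$ alone times $e^{-ib\Omega(x,y)/2}$; taking $L^{2}_{y}$ then gives a constant (namely $\|D\phi\|_{L^{2}}$, $\|\bar D\phi\|_{L^{2}}$, $\|D\bar D\phi\|_{L^{2}}$), and the $L^{2}_{x}$ integral lands entirely on $\langle\nabla\rangle(\rho_{Q}*v)$. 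This translation invariance is special to $\bar\Pi_{\phi}$ --- it fails for $\Pi_{\phi}$, which is why that family is excluded --- and is precisely what turns the hypothesis $\|\langle H\rangle^{1/2}\langle\bar H\rangle^{1/2}\phi\|_{L^{2}}<\infty$ into a usable bound.
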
 
\begin{remark}
$\left\|\langle H\rangle^{1/2}\langle \bar{H}\rangle^{1/2}\phi\right\|_{L^2}$ is essentially $\left\|D\bar{D}\phi\right\|_{L^2}+\left\|D\phi\right\|_{L^2}+\left\|\bar{D}\phi\right\|_{L^2}+\left\|\phi\right\|_{L^2}$. By the relation (\ref{phi,l,functional,correspondence}), the condition on the corresponding $\{l_{j}\}$ is $\sum_{j=0}^{\infty}j^2l^2_{j}<\infty$. Thus (\ref{cover,Fermi,Dirac}) satisfies the condition (\ref{condition,stationary}).
\end{remark}
\begin{remark}
For the Banach space $\mathbf{N}_{T}$, we can increase the size of the set $S$ as long as it does not include to endpoint $\left(\frac{1}{2},\frac{1}{\infty}\right)$. Consequently, the existence time may decrease. 
\end{remark}

Since the norm $\mathbf{N}_{T}$ contains $\| \langle\nabla_{x}\rangle^{9/8}\rho_{Q}(t,x)\|_{L_{I_{T}}^{2}L^2_{x}}$, the proof of Theorem \ref{pauli,local,wellposed} is based on the following collapsing estimate.

\begin{theorem}[Collapsing Estimate]\label{pauli,collapsing,estimate}
Suppose $\gamma(t,x,y)=e^{-i(H_{x}-\bar{H}_{y})t}\gamma_{0}(x,y)$ is the solution to the linear equation
\begin{align}\label{homogeneous,pauli}
    \left\{\begin{array}{l}
        i\,\partial_{t}\gamma=\left[H,\gamma\right]  \\
        \gamma(0,x,y)=\gamma_{0}(x,y)\in L^2_{x}L^2_{y},
    \end{array}\right.\quad x, y\in\mathbb{R}^2,
\end{align}
the collapsing term $\rho_{\gamma}(t,x)=\gamma(t,x,x)$ satisfies
\begin{equation}
   \left\|\rho_{\gamma}(t,x)\right\|_{L^2_{[0,\pi/b]}L^2_{x}}\lesssim_{b}\left\|\langle H_{x}\rangle^{s/2}\langle H_{y}\rangle^{s/2}\gamma_{0}(x,y)\right\|_{L^2_{x}L^2_{y}},\quad s>\frac{1}{2}, 
\end{equation}
and
\begin{equation}\label{pauli,collapsing,estimate,case1}
    \left\||\nabla_{x}|^c \rho_{\gamma}(t,x)\right\|_{L^2_{[0,\pi/b]}L^2_{x}}\lesssim_{c,b}\left\|\langle H_x\rangle^{1/2}\langle \bar{H}_{y}\rangle^{1/2}\gamma_{0}(x,y)\right\|_{L^2_{x}L^2_{y}},\quad 0\le c<\frac{5}{4}.
\end{equation}
\end{theorem}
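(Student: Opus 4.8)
\emph{Plan.} I prove the second inequality \eqref{pauli,collapsing,estimate,case1}, which carries the real difficulty (the derivative gain); the first is obtained by the same scheme but is considerably softer — as will be clear from Step~3, it reduces to the convergence of the scalar sum at $\xi=0$, which is $\tfrac{b}{2\pi}\,\mathbf 1_{\{m=0\}}\sum_{j\ge 0}\langle 2bj\rangle^{-2s}$, finite precisely when $s>\tfrac12$, and needs no power of $|\xi|$. So I focus on \eqref{pauli,collapsing,estimate,case1}.

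\emph{Step 1 (diagonalize and use periodicity).} Writing the linear flow as conjugation, $\gamma(t)=e^{-itH}\gamma_0e^{itH}$, and using $\sigma(H)=2b\mathbb N$, I get
\[
\rho_{\gamma}(t,x)=\sum_{m\in\mathbb Z}e^{-2ibmt}\rho^{(m)}(x),\qquad \rho^{(m)}(x)=\sum_{j-k=m}(P_j\gamma_0P_k)(x,x).
\]
Because the propagator is $\pi/b$–periodic, the system $\{e^{-2ibmt}\}_{m}$ is orthogonal on $[0,\pi/b]$ (this is the source of the interval in the statement), whence $\||\nabla_x|^c\rho_\gamma\|_{L^2_{[0,\pi/b]}L^2_x}^2=\tfrac{\pi}{b}\sum_m\||\nabla_x|^c\rho^{(m)}\|_{L^2_x}^2$. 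On the other side, since $\bar H_y$ acting on a kernel is right multiplication by $H$, the weight satisfies $\|\langle H_x\rangle^{1/2}\langle\bar H_y\rangle^{1/2}\gamma_0\|_{L^2_xL^2_y}^2=\|\langle H\rangle^{1/2}\gamma_0\langle H\rangle^{1/2}\|_{\mathrm{HS}}^2=\sum_{j,k}\langle 2bj\rangle\langle 2bk\rangle\|P_j\gamma_0P_k\|_{\mathrm{HS}}^2$.

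\emph{Step 2 (dualize).} Testing against $V=|\nabla_x|^cW$ with $\|W\|_{L^2_{[0,\pi/b]}L^2_x}\le 1$, I have $\langle\rho_\gamma,V\rangle=\int_0^{\pi/b}\mathrm{Tr}\!\big(M_{V(t)}\gamma(t)\big)\,dt=\mathrm{Tr}(\gamma_0 \widetilde B)$ with $\widetilde B=\int_0^{\pi/b}e^{itH}M_{|\nabla_x|^cW(t)}e^{-itH}\,dt$. Splitting the weight and using cyclicity, $|\mathrm{Tr}(\gamma_0\widetilde B)|\le\|\langle H\rangle^{1/2}\gamma_0\langle H\rangle^{1/2}\|_{\mathrm{HS}}\cdot\|\langle H\rangle^{-1/2}\widetilde B\langle H\rangle^{-1/2}\|_{\mathrm{HS}}$, so the whole estimate follows once I show $\|\langle H\rangle^{-1/2}\widetilde B\langle H\rangle^{-1/2}\|_{\mathrm{HS}}\lesssim_{c,b}\|W\|_{L^2_{[0,\pi/b]}L^2_x}$.

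\emph{Step 3 (Fourier--Wigner cancellation $\Rightarrow$ a scalar inequality).} Expanding $\|\langle H\rangle^{-1/2}\widetilde B\langle H\rangle^{-1/2}\|_{\mathrm{HS}}^2=\sum_{j,k}\langle 2bj\rangle^{-1}\langle 2bk\rangle^{-1}\|P_j\widetilde BP_k\|_{\mathrm{HS}}^2$ and using $\langle H\rangle^{-1/2}$ commutes with $e^{\pm itH}$, each $\|P_j\widetilde BP_k\|_{\mathrm{HS}}^2$ is an average of $\mathrm{Tr}\!\big(M_{\overline{w(s)}}P_jM_{w(t)}P_k\big)$ against $e^{2ib(j-k)(t-s)}$, with $w(t)=|\nabla_x|^cW(t)$. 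Here the magnetic phases cancel: since $\Omega(x,y)=-\Omega(y,x)$, one has $P_j(x,y)P_k(y,x)=\Phi_{jk}(x-y)$, the \emph{field-free} radial Laguerre kernel $\Phi_{jk}(u)=\big(\tfrac{b}{2\pi}\big)^2L_j\!\big(\tfrac b2|u|^2\big)L_k\!\big(\tfrac b2|u|^2\big)e^{-\frac b2|u|^2}$. Thus $\mathrm{Tr}(M_{\overline{w(s)}}P_jM_{w(t)}P_k)$ is an ordinary convolution; Plancherel in $x$, Parseval in $t$ for $\{e^{2ibmt}\}$, and the positive–definiteness $\widehat{\Phi_{jk}}\ge 0$ (Bochner) reduce everything to the scalar bound
\[
\sup_{\xi\in\mathbb R^2}\;|\xi|^{2c}\;\sup_{m\in\mathbb Z}\;\sum_{j-k=m}\langle 2bj\rangle^{-1}\langle 2bk\rangle^{-1}\,\widehat{\Phi_{jk}}(\xi)\;<\;\infty,\qquad 0\le c<\tfrac54 .
\]

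\emph{Step 4 (the scalar estimate — the hard part).} This is where the Laguerre asymptotics are needed. I would use the Hankel representation $\widehat{\Phi_{jk}}(\xi)=\tfrac{b}{2\pi}\int_0^\infty e^{-\rho}L_j(\rho)L_k(\rho)J_0\!\big(\sqrt{2|\xi|^2\rho/b}\big)\,d\rho$, insert the Askey--Wainger uniform bounds for $e^{-\rho/2}L_n(\rho)$ (size $\lesssim(n\rho)^{-1/4}$ on the oscillatory band $1/n\lesssim\rho\lesssim 4n$, a bounded Airy-type factor near the turning point $\rho\sim 4n$, exponential decay beyond) together with $|J_0(z)|\lesssim\langle z\rangle^{-1/2}$, and then sum against $\langle 2bj\rangle^{-1}\langle 2bk\rangle^{-1}$. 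At $\xi=0$ the orthonormality $\int_0^\infty L_jL_k e^{-\rho}\,d\rho=\delta_{jk}$ collapses the sum to $\tfrac{b}{2\pi}\mathbf 1_{\{m=0\}}\sum_j\langle 2bj\rangle^{-2}<\infty$; as $|\xi|\to\infty$ the amplitudes and the weights conspire to produce decay of order $|\xi|^{-5/2}$, which is exactly the borderline $2c<5/2$. The main obstacle is precisely this: making the Laguerre asymptotics uniform in all three parameters $j,k,|\xi|$ and carrying out the summation sharply enough to land on $5/4$ rather than a smaller exponent; everything preceding it is bookkeeping. Finally, a general short interval $[0,T]$ is covered by $O(1+bT)$ translates of $[0,\pi/b]$, so the single–period estimate above is all that the local well-posedness argument in Theorem~\ref{pauli,local,wellposed} uses.
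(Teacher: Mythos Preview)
Your Steps 1--3 are correct and arrive at essentially the same scalar reduction as the paper, though by a different route: the paper does not dualize but computes the spatial Fourier transform of $\gamma_{jk}(x,x)$ directly via the Fourier--Wigner calculus, landing (after a Cauchy--Schwarz) on the condition $\sup_m\sum_{j-k=m}\langle 2bj\rangle^{-1}\langle 2bk\rangle^{-1}\sup_\xi|\xi|^{2c}|V(h_j,h_k)(J\xi/b)|^2<\infty$. Your duality argument is tidier in one respect --- it places the $\sup_\xi$ outside the sum rather than inside --- and your observation that $P_j(x,y)P_k(y,x)$ loses the magnetic phase is exactly the same cancellation the paper exploits implicitly.

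The gap is Step 4. What you are missing is not an asymptotic estimate but an \emph{algebraic identity}: your kernel $\Phi_{jk}$ is the product $V(h_j,h_j)\cdot V(h_k,h_k)$, and by the Fourier--Wigner convolution law its Fourier transform is (up to a constant) $|V(h_j,h_k)(J\xi/b)|^2$. In other words, the Hankel integral of the \emph{product} of two ordinary Laguerre functions collapses to a \emph{single} weighted associated Laguerre function,
\[
\widehat{\Phi_{jk}}(\xi)\;\propto\;\frac{\min(j,k)!}{\max(j,k)!}\,\Bigl(\tfrac{|\xi|^2}{2b}\Bigr)^{|j-k|}\bigl(L^{|j-k|}_{\min(j,k)}\bigr)^2\!\Bigl(\tfrac{|\xi|^2}{2b}\Bigr)\,e^{-|\xi|^2/2b}.
\]
Once you have this, the scalar bound no longer needs any oscillatory-integral analysis: one applies Krasikov's sharp sup-norm inequality
\[
\frac{n!}{(n+\alpha)!}\max_{\lambda\ge 0}\lambda^{\alpha+1}e^{-\lambda}\bigl(L^\alpha_n\bigr)^2(\lambda)\;<\;6\,n^{1/6}\sqrt{n+\alpha+1}
\]
at $c=1$, interpolates it with the elementary commutator bound $\|\bar w^{\,c}V(h_j,h_k)\|_{L^\infty}\le(2/\sqrt b)^{c}(j+k+c)^{c/2}$ at $c=2$, and the resulting power $(1+k)^{(2-c)/6}(j+1)^{(3c-2)/2}$ makes the $m$-sum against $\langle 2bj\rangle^{-1}\langle 2bk\rangle^{-1}$ converge precisely for $c<5/4$.

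Your proposed route --- inserting Askey--Wainger pointwise bounds for $e^{-\rho/2}L_j$ and $e^{-\rho/2}L_k$ separately and then $|J_0|\lesssim\langle z\rangle^{-1/2}$ --- throws away the phase interaction among three oscillatory factors, and a back-of-the-envelope amplitude count does not obviously recover the exponent $5/4$. It might be salvageable by a careful stationary-phase analysis, but that is a genuine open computation, whereas the product-to-associated-Laguerre identity plus Krasikov is a two-line substitution.
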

\begin{remark}
The estimate (\ref{pauli,collapsing,estimate,case1}) is only stated for the time interval $[0,\pi/b]$. However, since the solution $\gamma(t,x,y)$ has a period $\pi/b$, by a patching argument, (\ref{pauli,collapsing,estimate,case1}) holds for arbitrary large time interval $[-T,T]$, while the constant will depend on $T$.
\end{remark}

This type of estimates has been established in \cite{GM17,CH2016,CHP2017} for the Laplacian case, i.e. $i\,\partial_{t}\gamma=[-\Delta,\gamma]$. However the technique used in those papers does not apply to the current case. That method, in the spirit of \cite{Klainerman2008}, is to study the characteristic hypersurface, which is derived by applying the space-time Fourier transform after we collapse the solution $e^{it(\Delta_{x}-\Delta_{y})}\gamma_{0}$ to the diagonal $y=x$.  In our case, the time Fourier transform is replaced by the Fourier series. The new ingredients are the Fourier-Wigner transform and a refined estimate about the asymptotic property of associated Laguerre polynomials. 

The paper is organized in the following way: in Section \ref{notation} we define most notations used in the paper; in Section \ref{prop,H} we discuss the propagator $e^{-iHt}$ and the spectral structure of $H$; in Section \ref{many,particle,H} we establish the collapsing estimate Theorem \ref{pauli,collapsing,estimate}; in Section \ref{main,sec} we first give a low regularity result for Equation (\ref{hatree,density,form,main}) to show that the ``forcing'' term $[\rho_{Q}*v,\bar{\Pi}_{\phi}]$ in Equation (\ref{equation,pertubation}) is a challenging term to handle and then prove Theorem \ref{pauli,local,wellposed}; in Section \ref{conclusion}, we pose open problems for future study. In the appendix, in Section \ref{Heisenberg,group}, we give a short review of the Heisenberg group; in Section \ref{stationary,solutions} we present two families of stationary solutions to Equation (\ref{hatree,density,form,main}); in Section \ref{global,trace,class}, we show the global well-posedness of Equation (\ref{hatree,density,form}) for the case when $\Gamma_{0}$ is of trace class and $w(x)=\frac{1}{|x|}$.

\section{Notation}\label{notation}
For the reader's convenience, we define most notations used in the paper in this section.

Let $\Omega$ denote the canonical symplectic form on $\mathbb{R}^2$,
\begin{equation}
    \Omega(x,y):=x^1y^2-x^2y^1,\quad x,y\in\mathbb{R}^2,
\end{equation}
and $I,J$ be matrices
\begin{equation}
    I:=\begin{pmatrix}1 & 0\\ 0 & 1\end{pmatrix},\quad J:=\begin{pmatrix} 0 & 1\\ -1 & 0\end{pmatrix}.
\end{equation}
Let $\mathcal{S}(\mathbb{R}^d)$ denote the Schwartz space on $\mathbb{R}^{d}$ and 
\begin{equation}
    \langle f,g\rangle:=\int_{\mathbb{R}^{d}}f(x)\bar{g}(x)\,dx.
\end{equation}

Let $a$ and $a^{\dagger}$ be annihilation and creation operators
\begin{equation}
    a:=\frac{x+b\partial_{x}}{\sqrt{2}},\qquad a^{\dagger}:=\frac{x-b\partial_{x}}{\sqrt{2}},\quad x\in\mathbb{R}.
\end{equation}
Denote normalized Hermite polynomials by $h_{j}$, $j\in\mathbb{N}$,
\begin{equation}
    h_{j}(x):=\frac{\left(a^{\dagger}\right)^{j}}{(b\pi)^{1/4}\sqrt{j!\,b^{j}}}e^{-\frac{x^2}{2b}},\quad  x\in\mathbb{R}.
\end{equation}
They satisfy $\langle h_{j},h_{k}\rangle=\delta_{jk}$. $H_{h}$ denotes the Hermite operator
\begin{equation}
    H_{h}:=-\Delta_{x}+\frac{b^2|x|^2}{4},\quad x\in\mathbb{R}^2.
\end{equation}

If $A\lesssim B$, there is a constant $C$ such that $A\le CB$. Furthermore $A\lesssim_{p,q} B$ means that the constant $C$ depends on parameters $p$ and $q$. If $A\sim B$, there are constants $C_1$ and $C_2$ such that $C_2>C_1>0$ and $C_1 A\le B \le C_2 A$. Furthermore $A\sim_{p,q} B$ means that $C_1$ and $C_2$ depend on $p$ and $q$. Let $\mathcal{D}(H)$ denote the domain of the operator $H$.

We use the following tools from the harmonic analysis in the phase space \cite{Folland89}. On the Hilbert space $L^2\left(\mathbb{R}\right)$, the Heisenberg representation $\beta$ is defined as
\begin{equation}
    \beta(p,q,t)f:=e^{i(p\hat{P}+q\hat{X}+tb)}f=e^{iqx+\frac{ibpq}{2}+ibt}f(x+pb),\quad f\in L^2\left(\mathbb{R}\right),\ x,p,q,t\in\mathbb{R},
\end{equation}
where $\hat{P}=-ib\partial_{x}$ and $\hat{X}$ denotes the multiplication by $x$. For simplicity, denote $\beta(p,q,0)f$ as $\beta(p,q)f$. Notice that $\beta$ is a unitary representation.

The twisted convolution between two functions $f,g$ is
\begin{equation}\label{twisted,convolution,def}
    \left(f\natural g\right)(x):=\int_{\mathbb{R}^2}f(x-y)g(y)e^{\frac{ib}{2}\Omega(x,y)}\,dy,\quad x\in\mathbb{R}^2,
\end{equation}
and the ``complex conjugate'' $\bar{\natural}$ is defined as
\begin{equation}
    \left(f\bar{\natural} g\right)(x):=\int_{\mathbb{R}^2}f(x-y)g(y)e^{-\frac{ib}{2}\Omega(x,y)}\,dy,\quad x\in\mathbb{R}^2.
\end{equation}
The Fourier-Wigner transform $V$ is defined as the matrix coefficient of the Heisenberg representation 
\begin{equation}
    V(f,g)(p,q):=\left\langle \beta(p,q)f,g\right\rangle=\int_{\mathbb{R}}e^{iqx+\frac{ibpq}{2}}f(x+pb)\bar{g}(x)\,dx \quad p,q\in\mathbb{R},
\end{equation}
and the Wigner transform $W$ is the Fourier transform of $V$
\begin{equation}
    W(f,g)(\xi,x):=\frac{1}{2\pi}\int_{\mathbb{R}^2}V(f,g)(p,q)e^{-i\xi p-ixq}\,dpdq \quad \xi,x\in\mathbb{R}.
\end{equation}
\begin{remark}
All these concepts can be defined similarly in higher dimensions.
\end{remark}

\section{Properties of $H$}\label{prop,H}
In this section, we discuss the one parameter unitary subgroup $e^{-iHt}$ generated by $-iH$, where 
\begin{equation}
    H=D^*D=-\partial_{x^1}^2-\partial_{x^2}^2-ib\,(x^2\partial_{x^1}-x^1\partial_{x^2})+\frac{b^2}{4}(|x^1|^2+|x^2|^2)-b,\quad b>0,
\end{equation}
and the spectral structure of $H$. The formula for $e^{-iHt}$ is derived by applying the metaplectic representation and it is given below.
\begin{theorem}\label{pauli,fundamental solution}
Given the Schr\"odinger equation
\begin{equation}
    i\,\partial_{t}f(t,x)=Hf(t,x),\quad f(0,x)=f_{0}(x)\in \mathcal{S}(\mathbb{R}^2),
\end{equation}
the formula for the solution is
\begin{equation}\label{constant,sol}
    \left(e^{-iHt}f_{0}\right)(x)=\left\{
    \begin{array}{lc}
    \displaystyle \frac{b\,e^{ibt}}{4\pi i\sin(bt)}\int_{\mathbb{R}^2}\exp\left(\frac{ib(x-y)^2}{4\tan(bt)}-\frac{ib}{2}\Omega(x,y)\right)f_{0}(y)\,dy, & t\neq \frac{\pi}{b}k\\
      f_{0}(x),  & t=\frac{\pi}{b}k
    \end{array}\right.
\end{equation}
where $k\in\mathbb{Z}$. 
\end{theorem}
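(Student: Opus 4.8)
The statement to prove is Theorem~\ref{pauli,fundamental solution}, the explicit propagator formula for $e^{-iHt}$. The plan is to realize $H$ (up to the harmless constant $-b$) as a quadratic Hamiltonian and to use the metaplectic (oscillator) representation, under which the solution operator of a Schr\"odinger equation with quadratic generator is a metaplectic operator whose integral kernel is a Gaussian built from the associated symplectic flow. Concretely, $H = D^*D$ with $D = -2\partial_{\bar z} - \tfrac b2 z$ can be written as a sum of squared first-order operators; writing $p_k = -i\partial_{x^k}$ one checks
\[
    H + b = p_1^2 + p_2^2 - b\,(x^1 p_2 - x^2 p_1) + \tfrac{b^2}{4}\,(|x^1|^2 + |x^2|^2),
\]
which is a positive-definite quadratic form in $(x,p)\in\mathbb R^4$. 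The first step is therefore to compute the Hamiltonian flow generated by this symbol on phase space $\mathbb R^4$: it is linear, $\exp(tB)$ for the appropriate $B\in\mathfrak{sp}(4,\mathbb R)$, and one should diagonalize $B$ to see the $2b$-periodicity in the ``cyclotron'' variables and the trivial motion — wait, in this two-dimensional reduced problem there is no free $x^3$ direction, so in fact the flow is genuinely periodic with period $\pi/b$ in all variables, which is the source of the dichotomy $t = \pi k/b$ versus $t\neq \pi k/b$ in the statement.

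\emph{Key steps, in order.} First, identify the quadratic symbol $q(x,p)$ of $H+b$ and compute $e^{tJ\nabla^2 q}$ (the linearized Hamiltonian flow); record that it is $\pi/b$-periodic and becomes the identity exactly at $t\in \tfrac\pi b\mathbb Z$. Second, invoke the metaplectic representation: for a quadratic Hamiltonian the unitary $e^{-it(H+b)}$ is, up to a phase, the metaplectic operator $\mu(g_t)$ associated to $g_t = e^{tJ\nabla^2 q}\in Sp(4,\mathbb R)$; when $g_t$ is free (i.e. its lower-left block relative to the Lagrangian splitting $\mathbb R^2_x\oplus\mathbb R^2_p$ is invertible, which fails precisely at $t\in\tfrac\pi b\mathbb Z$) the operator has an explicit Gaussian Schwartz kernel $c(t)\exp(\tfrac i2 x\cdot Ax - i x\cdot By + \tfrac i2 y\cdot Cy)$ where $A,B,C$ are read off from the blocks of $g_t$. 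Third, carry out this block computation for our specific $g_t$: the quadratic form in the exponent should organize into $\tfrac{b}{4\tan(bt)}(x-y)^2 - \tfrac b2\Omega(x,y)$, and the normalization constant into $\tfrac{b\,e^{ibt}}{4\pi i\sin(bt)}$, where the extra $e^{ibt}$ comes precisely from conjugating back by the constant $e^{itb}$ (since $H = (H+b) - b$). Fourth, handle $t\in\tfrac\pi b\mathbb Z$ separately: there $g_t = \mathrm{Id}$ so $\mu(g_t) = \pm\mathrm{Id}$, and combined with the $e^{itb}$ phase and continuity/semigroup considerations one gets $e^{-iHt} = \mathrm{Id}$ there; alternatively, and more safely, verify this by taking the limit $t\to \pi k/b$ of the Gaussian kernel in the sense of distributions (the Gaussian $\tfrac{b}{4\pi i \sin bt}\exp(\tfrac{ib(x-y)^2}{4\tan bt})$ is an approximate identity as $\sin bt\to 0$, analogous to the free Schr\"odinger kernel as $t\to 0$), which simultaneously double-checks the normalization constant via stationary phase / Gaussian integration.

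\emph{Main obstacle.} The genuine work — and the step most likely to hide sign and factor-of-two errors — is the explicit block computation of $g_t\in Sp(4,\mathbb R)$ and the translation of its blocks into the kernel data $(A,B,C,c(t))$, together with the bookkeeping of the metaplectic cocycle so that the scalar prefactor (in particular the branch of $1/\sin(bt)$ and the factor $i$) comes out exactly as stated rather than up to an undetermined eighth root of unity. A cleaner route that largely sidesteps the $Sp(4)$ machinery is to diagonalize the magnetic translation symmetry first: conjugating by the partial Fourier transform in one variable (or passing to the Bargmann/Fock side via the Fourier–Wigner transform already set up in Section~\ref{notation}) decouples $H$ into a direct integral of one-dimensional shifted harmonic oscillators $-b^2\partial_\xi^2 + \tfrac14(\text{quadratic})$, for which Mehler's formula gives the kernel directly; reassembling via the inverse transform then yields \eqref{constant,sol}. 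I would present the metaplectic derivation as the conceptual proof and, if space permits, remark that Mehler's formula plus the Fourier–Wigner calculus gives an independent verification — in particular the $2b$-spacing of the spectrum $\{2bj\}$ and the $\pi/b$-periodicity of the flow are then both immediate from the harmonic-oscillator picture. In all approaches the remaining verifications (that the formula indeed solves the PDE, that it is unitary, and that it defines a strongly continuous group) are routine and I would only indicate them.
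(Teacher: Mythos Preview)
Your proposal is correct and follows essentially the same route as the paper: identify $-i(H+b)$ with an element $\mathcal A\in\mathfrak{sp}(4,\mathbb R)$, compute the symplectic flow $\exp(\mathcal A t)$ (the paper does this by splitting $\mathcal A$ into two commuting pieces), apply the metaplectic representation to obtain the Gaussian kernel, and then extend from the interval where the relevant block is nondegenerate to all $t$ by the semigroup property and the $\pi/b$-periodicity. The only cosmetic difference is that the paper first lands on the Fourier-side formula via Folland's Theorem~4.51 and then inverse-Fourier-transforms to reach \eqref{constant,sol}, whereas you propose reading the kernel blocks $(A,B,C)$ directly off $g_t$; your suggested Mehler/Fourier--Wigner cross-check is not in the paper but is a reasonable sanity check.
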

\begin{proof}
Consider the metaplectic representation $\mu$ \cite[Chapter 4]{Folland89} from the metaplectic group $Mp(4,\mathbb{R})$ to the unitary group $U\left(L^2(\mathbb{R}^2)\right)$ of $L^2\left(\mathbb{R}^2\right)$, where the corresponding infinitesimal representation is
\begin{align}
    d\mu: &\ \mathfrak{sp}(4,\mathbb{R})\rightarrow \mathfrak{u}\left(L^2(\mathbb{R}^2)\right) \\
        &\mathcal{A}=\begin{pmatrix}A & B\\ C& -A^{T}\end{pmatrix} \mapsto -\frac{1}{2i}\begin{pmatrix}\hat{Q} & \hat{P}\end{pmatrix}\begin{pmatrix}
    A & B\\
    C & -A^{T}
    \end{pmatrix}\begin{pmatrix}0 & I\\ -I & 0 \end{pmatrix}
    \begin{pmatrix}\hat{Q} \\ \hat{P}\end{pmatrix},
\end{align}
where $\hat{Q}=\begin{pmatrix} x^1\\ x^2\end{pmatrix},\ \hat{P}=\begin{pmatrix} -i\partial_{x^1}\\ -i\partial_{x^2}\end{pmatrix}$ and $A^{T}$ denotes the transpose matrix of $A$. Under $d\mu$, $-i\left(H+b\right)\in \mathfrak{u}\left(L^2(\mathbb{R}^2)\right)$ corresponds to 
\[
    \mathcal{A}=\begin{pmatrix}
        bJ & \frac{b^2}{2}\,I\\
        -2\,I & bJ
    \end{pmatrix}\in \mathfrak{sp}(4,\mathbb{R}).
\]

In order to apply Theorem \ref{metaplectic,expression} from the appendix to get the integral Formula (\ref{constant,sol}), we need to compute the explicit form for the one parameter subgroup $\exp(\mathcal{A}t)$ in the symplectic group $Sp(4,\mathbb{R})$. Since $\mathcal{A}$ can be written as a sum of two commuting matrices
\[ 
    \begin{pmatrix}J & 0\\0 & J\end{pmatrix}\ \hbox{and}
    \begin{pmatrix}0 & \frac{b^2}{2}\,I\\-2\,I & 0\end{pmatrix},
\]
then
\begin{align*}
    \exp(\mathcal{A}t) &= \exp\left(\begin{pmatrix}J & 0\\0 & J\end{pmatrix}bt\right) \cdot \exp\begin{pmatrix}0 & \frac{b^2t}{2}\,I\\-2t\,I & 0\end{pmatrix}\\         &=\begin{pmatrix}\exp\left(Jbt\right) & 0 \\ 0 & \exp\left(Jbt\right)\end{pmatrix} \cdot\begin{pmatrix}\cos(bt)\,I & \frac{b}{2}\sin(bt)\,I \\-\frac{2}{b}\sin(bt)\,I & \cos(bt)\,I\end{pmatrix}.
\end{align*}
Applying Theorem \ref{metaplectic,expression}, we get for $f_0\in\mathcal{S}(\mathbb{R}^2)$,
\begin{equation}\label{fundamental,solution,phase,space}
    \left(\mu(\exp(\mathcal{A}t))f_0\right)(x)=\frac{1}{2\pi\cos(bt)}\int_{\mathbb{R}^2}\exp\left(-iS(t,x,\xi)\right)\hat{f}_0(-\xi)\,d\xi,
\end{equation}
where the phase function $S$ is
\[
    S(t,x,\xi)=\frac{\tan(bt)}{b}\left|\xi\right|^2 +x\xi+\tan(bt)\Omega(x,\xi)+\frac{b\tan(bt)}{4}|x|^2,\quad \xi=(\xi^1,\xi^2),\ x=(x^1,x^2).
\]

To obtain Formula (\ref{constant,sol}),
\begin{align}
    \left(\mu(\exp(\mathcal{A}t))f_0\right)(x)
        &=\frac{1}{(2\pi)^2\cos(bt)}\int_{\mathbb{R}^2}f_0(y)\,dy\int_{\mathbb{R}^2}\exp\left(-iS(x,\xi)+iy\xi\right)\,d\xi \nonumber\\
    &=\frac{1}{(2\pi)^2\cos(bt)}\int_{\mathbb{R}^2}f_0(y)\,dy\int_{\mathbb{R}^2}\exp\left(\frac{bi}{4\tan(bt)}\left(x-\tan(bt)Jx-y\right)^2\right)\nonumber\\
    &\quad \cdot \exp\left(-i\left[\frac{b\tan(bt)}{4}|x|^2+\frac{\tan(bt)}{b}\left(\xi+\frac{b}{2\tan(bt)}(x-\tan(bt)Jx-y)\right)^2\right]\right)\,d\xi\nonumber\\
    &=\frac{b}{4\pi i\sin(bt)}\int_{\mathbb{R}^2}\exp\left(\frac{ib}{4\tan(bt)}(x-y)^2-\frac{ib}{2}\Omega(x,y)\right)f_0(y)\,dy.\label{constant,sol,temp}
\end{align}
Let us denote (\ref{constant,sol,temp}) by $sol(t)f_{0}$.

Theorem \ref{metaplectic,expression} is valid as long as the matrix $\cos(bt)I$ is not degenerate. Since $\cos(bt)$ vanishes at $\displaystyle\frac{\pi}{2b}$, we only obtain the Formula (\ref{constant,sol}) for $\displaystyle t\in\left[0,\frac{\pi}{2b}\right)$. Next we show that Formula (\ref{constant,sol,temp}) is valid on $\mathbb{R}$. Formula (\ref{constant,sol,temp}) is defined when $t\in (0,\pi/b)$. By direct computation, 
\[
    sol(t+s)f_{0}=sol(t)sol(s)f_{0},\quad \text{for } t>0,\ s>0,\ t+s<\frac{\pi}{b},
\]
i.e. $sol(t)$ is a semigroup when $t\in(0,\pi/b)$. Besides $sol(t)$ is also continuous with respect to the strong operator topology when $t\in[0,\pi/b)$. This is because when $t\in[0,\pi/2b)$, we obtain Formula (\ref{constant,sol,temp}) by the metaplectic representation; when $t\in[\pi/2b,\pi/b)$, $sol(t)=sol(\pi/2b)sol(t-\pi/2b)$. Therefore, by the uniqueness of the one parameter unitary subgroup generated by $d\mu (\mathcal{A})$, $e^{-i(H+b)t}=sol(t)$ is true for $t\in[0,\pi/b)$. As $t\rightarrow \pi/b$, from (\ref{fundamental,solution,phase,space}), we see that the phase function $S(t,x,\xi)\rightarrow x\xi$ and $\mu\left(\exp(\mathcal{A}t)\right)f_{0}\rightarrow -f_{0}$ pointwise. By the dominant convergence theorem, $sol(t)f_{0}$ also converges to $-f_{0}$ in $L^2(\mathbb{R}^2)$. In summary, we have obtained Formula (\ref{constant,sol}) for $t\in[0,\pi/b]$ and showed that $e^{-iHt}$ is of period $\pi/b$. Therefore $e^{-i(H+b)t}=sol(t)$ holds for $t\in\mathbb{R}$.
\end{proof}
\begin{remark}
According to the metaplectic representation $\mu$, one can also conclude that $e^{-i(H+b)\pi/b}=-1$ by the observation that  $exp(\mathcal{A}t):[0,\pi/b]\rightarrow Sp(4,\mathbb{R})$ is the generator of the fundamental group $\pi_{1}\left(Sp(4,\mathbb{R})\right)$ of $Sp(4,\mathbb{R})$ and the metapletic group is the double cover of $Sp(4,\mathbb{R})$.
\end{remark}

Based on the formula (\ref{constant,sol}) and the machinery in \cite{GinibreVelo1992}, we obtain the Strichartz estimate to arbitrary finite time.

\begin{corollary}\label{strichartz,scalar,long,time}
Fix any time $T>0$, 
\begin{equation}\label{long,time,strichartz}
    \left\|e^{-iH t}f\right\|_{L^q_{[0,T]}L^r_{x}}\lesssim_{q,r,T}\left\|f\right\|_{L^2_{x}},
\end{equation}
where $(q,r)$ satisfies (\ref{admissible,pairs}).
\end{corollary}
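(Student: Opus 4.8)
The plan is to read off a local-in-time dispersive bound from the explicit kernel in (\ref{constant,sol}), combine it with the conservation law $\|e^{-iHt}f\|_{L^2_x}=\|f\|_{L^2_x}$, run the $TT^{*}$/Hardy--Littlewood--Sobolev argument of \cite{GinibreVelo1992} on a time interval short enough that the dispersion is usable, and then patch the short-time estimates to cover an arbitrary finite interval $[0,T]$.

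First I would establish the dispersive estimate. On $[0,\pi/(2b)]$ one has $\sin(bt)\ge \frac{2}{\pi}\,bt$ by concavity of $\sin$, so (\ref{constant,sol}) shows that for $0<|t|\le \pi/(2b)$ the integral kernel of $e^{-iHt}$ has modulus at most $\frac{b}{4\pi\sin(b|t|)}\le \frac{1}{8|t|}$; hence $\|e^{-iHt}f\|_{L^\infty_x}\lesssim |t|^{-1}\|f\|_{L^1_x}$, first for $f\in\mathcal{S}(\mathbb{R}^2)$ and then for $f\in L^1\cap L^2$ by density. Interpolating (Riesz--Thorin) against the unitarity bound gives $\|e^{-iHt}f\|_{L^r_x}\lesssim |t|^{-(1-2/r)}\|f\|_{L^{r'}_x}$ for $2\le r\le\infty$ and $0<|t|\le \pi/(2b)$.

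Next I would obtain the short-time Strichartz estimate. The pairs $(q,r)$ satisfying (\ref{admissible,pairs}) are exactly those admissible for the decay rate $|t|^{-1}$ in two space dimensions, the excluded endpoint $q=2$, $r=\infty$ being the forbidden Hardy--Littlewood--Sobolev endpoint. For a time interval $I$ with $|I|\le \pi/(2b)$ --- so that $|t-s|\le \pi/(2b)$ for all $t,s\in I$ and the decay above is available --- the machinery of \cite{GinibreVelo1992} reduces $\|e^{-iHt}f\|_{L^q_I L^r_x}\lesssim \|f\|_{L^2_x}$, via $TT^{*}$/duality, to the retarded estimate $\|\int_{I}e^{-iH(t-s)}F(s)\,ds\|_{L^q_I L^r_x}\lesssim \|F\|_{L^{q'}_I L^{r'}_x}$; this in turn follows from Minkowski's inequality, the interpolated dispersive estimate, and one-dimensional Hardy--Littlewood--Sobolev with kernel $|t-s|^{-(1-2/r)}$, the exponent condition imposed by HLS being precisely $\frac1q+\frac1r=\frac12$. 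For $(q,r)=(\infty,2)$ the claim is just the unitarity bound.

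Finally I would patch. Given $T>0$, cover $[0,T]$ by $N\le 2bT/\pi+1$ intervals $I_k$ of length $\le \pi/(2b)$ with left endpoints $t_k$. Writing $e^{-iHt}f=e^{-iH(t-t_k)}\bigl(e^{-iHt_k}f\bigr)$ and using that $e^{-iHt_k}$ is unitary, the previous step applied on $I_k-t_k\subset[0,\pi/(2b)]$ gives $\|e^{-iHt}f\|_{L^q_{I_k}L^r_x}\lesssim_{q,r}\|f\|_{L^2_x}$; summing the $q$-th powers over $k$ (or taking the maximum when $q=\infty$) yields $\|e^{-iHt}f\|_{L^q_{[0,T]}L^r_x}\lesssim_{q,r}N^{1/q}\|f\|_{L^2_x}\lesssim_{q,r,T}\|f\|_{L^2_x}$, which is (\ref{long,time,strichartz}). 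The only point requiring care --- and the hard part in the sense that it dictates the structure of the argument --- is that the dispersive kernel $\frac{b}{4\pi|\sin(bt)|}$ does not decay globally in $t$, blowing up near each point of $\frac{\pi}{b}\mathbb{Z}$; the bound $|\sin(bt)|^{-1}\lesssim |t|^{-1}$ is valid only for $|bt|\le\pi/2$, which forces the interval length $\pi/(2b)$ and the patching, and is the sole source of the $T$-dependence of the constant. Everything else is routine.
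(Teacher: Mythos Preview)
Your proposal is correct and follows essentially the same route as the paper's proof: extract a local $L^1\to L^\infty$ dispersive bound from the explicit kernel (\ref{constant,sol}), invoke the Ginibre--Velo machinery on a short interval, and patch via unitarity to reach $[0,T]$. The only cosmetic differences are that the paper takes the short interval of length $\pi/(10b)$ rather than your $\pi/(2b)$, and sums with Minkowski (yielding a factor $n$) rather than summing $q$-th powers (yielding $N^{1/q}$); neither affects the argument.
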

\begin{proof}
For any $T> 0$, there is a positive integer $n$ such that $T_{\epsilon}=\frac{T}{n}\le \frac{\pi}{10b}$. Based on Theorem \ref{pauli,fundamental solution}, for $t<T_{\epsilon}$, 
\[
    \left\|e^{-i\,H t}f\right\|_{L^{\infty}(\mathbb{R}^2)}\lesssim\frac{1}{t}\left\|f\right\|_{L^1(\mathbb{R}^2)}.
\]
Since $e^{-iHt}$ is unitary, by \cite{GinibreVelo1992}, $\left\|e^{-i\,Ht}f\right\|_{L^q_{[0,T_{\epsilon}]}L^r_{x}}\lesssim_{q,r} \|f\|_{L^2_{x}}$, where $(q,r)$ satisfies (\ref{admissible,pairs}). For any integer $j$, $1\le j\le n$, repeat the above argument on the time interval $[(j-1)T_{\epsilon},jT_{\epsilon}]$, 
\[
    \left\|e^{-iHt}f\right\|_{L^q_{[(j-1)T_{\epsilon},jT_{\epsilon}]}L^2_x}\lesssim_{q,r}\left\|e^{-iH(j-1)T_{\epsilon}}f\right\|_{L^2_x}=\|f\|_{L^2}.
\]
Then apply the Minkowski inequality
\[
    \left\|e^{-i\,Ht}f\right\|_{L^q_{[0,nT_{\epsilon}]}L^r_{x}}\le \sum_{j=1}^{n}\left\|e^{-i\,Ht}f\right\|_{L^q_{[(j-1)T_{\epsilon},jT_{\epsilon}]}L^r_{x}}\lesssim_{q,r} n\|f\|_{L^2_x}.
\]
\end{proof}

The spectrum of $H$ is well-known in the physics literature. Here we give a discussion of its spectral structure and some formulas based on the Fourier-Wigner transform. $H$ is a non-negative self-adjoint operator on $L^2(\mathbb{R}^2)$. Since for any $f\in\mathcal{D}(D)$,
\[
    Df=\left(-2\partial_{\bar{z}}-\frac{b}{2}z\right)f=e^{-b|z|^2/4}\left(-2\partial_{\bar{z}}\right)\left(e^{b|z|^2/4}f\right),
\]
and $\partial_{\bar{z}}$ is elliptic, the null space $\mathcal{H}_{0}$ of $H$ consists of all functions in the form $g(z)e^{-b|z|^2/4}$, where $g(z)$ is an entire function. To rephrase it, $e^{b|z|^2/4}\mathcal{H}_{0}$ is a Fock-Bargmann space\cite[Section 1.6]{Folland89} with probability measure $b\,e^{-b|z|^2/2}d\mu/2\pi$, where $d\mu$ is the Lebesgue measure on $\mathbb{C}$. Thus, with respect to the canonical Hermitian inner product on $L^2(\mathbb{R}^2)$, $\mathcal{H}_{0}$ has an orthonormal basis 
\begin{equation}
    e_{0j}(z):=\frac{z^{j}}{\sqrt{\pi j!(2/b)^{j+1}}}\exp\left(-\frac{b|z|^2}{4}\right),\quad z\in\mathbb{C}\simeq\mathbb{R}^2,\ j\in\mathbb{N},
\end{equation}
and the integral kernel $P_{0}(x,y)$ associated to the projection $P_{0}:L^2(\mathbb{R}^2)\rightarrow \mathcal{H}_{0}$ is
\begin{align}
    P_{0}(x,y)&=\frac{b}{2\pi}\exp\left(-\frac{b|x-y|^2}{4}-\frac{ib}{2}\,\Omega(x,y)\right)\label{null,space,kernel}\\
        &=\frac{b}{2\pi}\exp\left(-\frac{b}{4}\left(|z_{x}|^2+|z_{y}|^2\right)+\frac{b}{2}z_{x}\bar{z}_{y}\right),\nonumber
\end{align}
where $z_{x}=x^1+ix^2$, $x=(x^1,x^2)\in\mathbb{R}^2$, $z_{y}=y^1+iy^2$ and $y=(y^1,y^2)\in\mathbb{R}^2$.

Using the commutation relation $[D,(D^*)^k]=2bk(D^*)^{k-1}$, we obtain other eigenspaces $\mathcal{H}_{k}=(D^*)^k(\mathcal{H}_{0})$ associated to eigenvalue $2bk$ and orthonormal bases of $\mathcal{H}_{k}$ for $k\in\mathbb{N}$,
\begin{equation}
    e_{kj}(z):=\frac{(D^*)^k}{\sqrt{(2b)^k k!}}e_{0j}(z),\quad j\in\mathbb{N}.
\end{equation}
On the level of eigenspaces, $H$ has a ladder operator structure $D^*\mathcal{H}_{k}=\mathcal{H}_{k+1}$ and $D(\mathcal{H}_k)=\mathcal{H}_{k-1}$. Therefore we call $D$ and $D^*$ annihilation and creation operators respectively. Furthermore,
\begin{lemma}\label{L2,pauli,completeness}
The space $L^2(\mathbb{R}^2)$ is decomposed orthogonally as follows
\[
    L^2(\mathbb{R}^2)=\overline{\bigoplus_{k\in\mathbb{N}}\mathcal{H}_{k}},
\]
which implies that $H$ has a discrete spectrum $\sigma(H)=\left\{ 2bk\right\}_{k=0}^{\infty}$ with corresponding eigenspaces $\mathcal{H}_{k}$.
\end{lemma}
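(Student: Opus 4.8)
The plan is to prove the two halves separately: orthogonality of the $\mathcal{H}_k$ is immediate from self-adjointness, while the completeness $\overline{\bigoplus_k\mathcal{H}_k}=L^2(\mathbb{R}^2)$ is the real content and will be obtained by exhibiting a \emph{second} pair of ladder operators commuting with $D,D^*$ and then reducing to the classical completeness of Hermite functions. First, orthogonality: since $H=D^*D$ is symmetric and $\mathcal{H}_k$ is its eigenspace for $2bk$, vectors in $\mathcal{H}_k$ and $\mathcal{H}_{k'}$ with $k\neq k'$ are orthogonal; and within a level the relation $[D,D^*]=2b$ (the $k=1$ case of $[D,(D^*)^k]=2bk(D^*)^{k-1}$) gives $D^k(D^*)^k g=(2b)^k k!\,g$ for $g\in\mathcal{H}_0$, so that $e_{kj}=(2b)^{-k/2}(k!)^{-1/2}(D^*)^k e_{0j}$ is an orthonormal system spanning $\mathcal{H}_k$. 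Thus $\bigoplus_k\mathcal{H}_k$ is an orthogonal sum with orthonormal system $\{e_{kj}\}_{k,j\in\mathbb{N}}$, and it remains to show this system is complete in $L^2(\mathbb{R}^2)$.

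For completeness I would introduce the conjugate operators $\widetilde D:=-2\partial_z-\tfrac b2\bar z$ and $\widetilde D^*:=2\partial_{\bar z}-\tfrac b2 z$, so that $\widetilde D^*\widetilde D=\bar H$ in the notation of Definition \ref{def,main,norm}. A direct computation gives $[\widetilde D,\widetilde D^*]=2b$ and, crucially, $[D,\widetilde D]=[D,\widetilde D^*]=[D^*,\widetilde D]=[D^*,\widetilde D^*]=0$, so $\{D,D^*\}$ and $\{\widetilde D,\widetilde D^*\}$ are two commuting pairs of ladder operators. The joint kernel of $D$ and $\widetilde D$ consists of functions that are simultaneously of the form $g(z)e^{-b|z|^2/4}$ with $g$ entire (from $Df=0$, as in the discussion preceding Lemma \ref{L2,pauli,completeness}) and of the form $h(\bar z)e^{-b|z|^2/4}$ with $h$ antiholomorphic (from $\widetilde D f=0$); hence it is the one-dimensional span of $e_{00}$. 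Applying $\widetilde D^*$ repeatedly one checks $(\widetilde D^*)^j e_{00}$ is a nonzero multiple of $z^j e^{-b|z|^2/4}$, i.e. of $e_{0j}$, so that up to normalization $e_{kj}=c_{kj}(D^*)^k(\widetilde D^*)^j e_{00}$ with $c_{kj}\neq0$.

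It then suffices to see that $\{(D^*)^k(\widetilde D^*)^j e_{00}\}_{k,j}$ spans $L^2(\mathbb{R}^2)$, which I would deduce by comparison with the Hermite operator $H_h=-\Delta+\tfrac{b^2}{4}|x|^2$. A direct computation yields $D^*D+\widetilde D^*\widetilde D=2H_h-2b$, hence $H_h=\tfrac12(D^*D+\widetilde D^*\widetilde D)+b$; combined with the commutation relations (so $D^*D e_{kj}=2bk\,e_{kj}$ because $e_{kj}\in\mathcal{H}_k$, and $\widetilde D^*\widetilde D e_{kj}=2bj\,e_{kj}$ from the ladder algebra for $\widetilde D,\widetilde D^*$ applied to $(\widetilde D^*)^j e_{00}$), this gives $H_h e_{kj}=b(k+j+1)\,e_{kj}$. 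Now $H_h$ is the standard two-dimensional harmonic oscillator: separating variables and invoking completeness of the one-dimensional Hermite functions in $L^2(\mathbb{R})$, its spectrum is $\{b(n+1)\}_{n\in\mathbb{N}}$, the eigenspace $E_n$ for $b(n+1)$ has dimension exactly $n+1$, and $\overline{\bigoplus_n E_n}=L^2(\mathbb{R}^2)$. For each $n$ the $n+1$ vectors $\{e_{kj}:k+j=n\}$ are linearly independent (they lie in the mutually orthogonal $\mathcal{H}_k$) and all belong to $E_n$, hence form a basis of $E_n$. Summing over $n$ gives $\overline{\operatorname{span}}\{e_{kj}\}_{k,j}=\overline{\bigoplus_n E_n}=L^2(\mathbb{R}^2)$, which is the claimed completeness; the assertions about $\sigma(H)$ and the eigenspaces $\mathcal{H}_k$ then follow by expanding an arbitrary $f$, and a putative eigenvector, in the orthonormal basis $\{e_{kj}\}$.

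I expect the only genuinely delicate points to be the choice of the correct second ladder pair $\widetilde D,\widetilde D^*$, the operator identity $\tfrac12(D^*D+\widetilde D^*\widetilde D)+b=H_h$, and the one-dimensionality of the joint kernel (an ellipticity/holomorphy argument); once these are in hand, the lemma reduces to the classical completeness of Hermite functions. Alternatively — and more in the spirit of the rest of the paper — one can run the same argument through the Fourier--Wigner transform, using that $\{V(h_j,h_k)\}_{j,k}$ is an orthonormal basis of $L^2(\mathbb{R}^2)$ (Moyal's identity / irreducibility of the Heisenberg representation $\beta$) and that these functions can be organized so that the ones attached to a fixed Landau level form an orthonormal basis of the corresponding $\mathcal{H}_k$; this makes the reduction to a known complete system more transparent but takes the Moyal identity as input.
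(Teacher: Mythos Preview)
Your argument is correct and rests on the same idea as the paper's: both reduce completeness of $\{e_{kj}\}$ to the classical completeness of the two-dimensional Hermite system by exploiting the algebra generated by $\partial_z,\partial_{\bar z}$ acting on $e^{-b|z|^2/2}$. The execution differs slightly. The paper writes the real Hermite creation operators as $a_1^\dagger=\partial_{x^1}-\tfrac{b}{2}x^1$, $a_2^\dagger=\partial_{x^2}-\tfrac{b}{2}x^2$, observes that $(a_1^\dagger)^j(a_2^\dagger)^l e^{-b|x|^2/4}=i^l e^{b|x|^2/4}(\partial_z+\partial_{\bar z})^j(\partial_z-\partial_{\bar z})^l e^{-b|z|^2/2}$ and that $e_{kj}\propto e^{b|z|^2/4}\partial_z^k\partial_{\bar z}^j e^{-b|z|^2/2}$, and then simply expands the binomials to conclude that every Hermite basis element is a finite linear combination of the $e_{kj}$. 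You instead package the $\partial_{\bar z}$-ladder as the second pair $\widetilde D,\widetilde D^*$, prove the operator identity $H_h=\tfrac12(D^*D+\widetilde D^*\widetilde D)+b$, and finish by a dimension count on each Hermite eigenspace $E_n$. Your route is a bit more structural (and foreshadows the role of $\bar H$ and Figure~\ref{fig} later in the paper), while the paper's is a two-line algebraic rewriting; neither requires anything beyond Hermite completeness and elementary commutator identities.
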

\begin{proof}
Consider the related Hermite operator $H_{h}$, $x=(x^1,x^2)\in\mathbb{R}^2$,  and associated creation and annihilation operators
\[
    a^{\dagger}_{j}=\partial_{x^{j}}-\frac{b}{2}x^{j},\ a_{j}=-\partial_{x^{j}}-\frac{b}{2}x^{j},\quad j\in\{1,2\}.
\]
$\left\{\left(a_{1}^{\dagger}\right)^{j}\left(a_2^{\dagger}\right)^{l}e^{-b|x|^2/4}\right\}_{j,l\in\mathbb{N}}$ is a basis for $L^2(\mathbb{R}^2)$. Since 
\begin{align*}
    \left(a_{1}^{\dagger}\right)^{j}\left(a_2^{\dagger}\right)^{l}e^{-b|x|^2/4} &=e^{b|x|^2/4}(\partial_{x^1})^j(\partial_{x^2})^l e^{-b|x|^2/2}\\
        &=i^l\, e^{b|x|^2/4}(\partial_{z}+\partial_{\bar{z}})^j(\partial_{z}-\partial_{\bar{z}})^l e^{-b|x|^2/2},
\end{align*}
and bases of $\mathcal{H}_{k}$ are in the form 
\[
    (D^*)^{k}\left(z^j e^{-b|z|^2/4}\right)=e^{b|z|^2/4}(2\partial_{z})^{k}\left(-\frac{2}{b}\partial_{\bar{z}}\right)^{j}e^{-b|z|^2/2},
\]
$\left(a_{1}^{\dagger}\right)^{j}\left(a_2^{\dagger}\right)^{l}e^{-b|x|^2/4}$ can be written as a linear combination of  bases of $\mathcal{H}_{k}$. Therefore the $L^2$-closure of $\bigoplus_{k\in\mathbb{N}}\mathcal{H}_{k}$ is $L^2(\mathbb{R}^2)$.
\end{proof}

Another perspective to derive the spectrum of $H$ is first decomposing it as a sum of three operators: the constant operator $-b$, the Hermite operator $H_{h}$ and the rotation vector field $H_{r}=-ib(x^2\partial_{x^1}-x^1\partial_{x^2})=\bar{z}\partial_{\bar{z}}-z\partial_{z}$, i.e.
\begin{equation}\label{pauli,operator,decompose}
    H=H_{h}+H_{r}-b.
\end{equation}
The three operators all commute with each other. Thus they all share same eigenvectors. More precisely, 
\[
    H_{h}e_{kj}=(k+j+1)b\,e_{kj},\quad H_{r}e_{kj}=(k-j)b\,e_{kj}.
\]
Then $He_{kj}=\left(H_{h}+\bar{z}\partial_{\bar{z}}-z\partial_{z}-b\right)e_{kj}=2kb\,e_{kj}$. Displaying all eigenvectors $e_{kj}$ schematically in Figure \ref{fig}, all rows correspond to different eigenspaces of $H$, all columns correspond to different eigenspaces of the complex conjugate $\bar{H}$, all lines with slope equal to $-1$ correspond to different eigenspaces of $H_{h}$ and all lines slope equal to $1$ correspond to different eigenspaces of $H_{r}$.

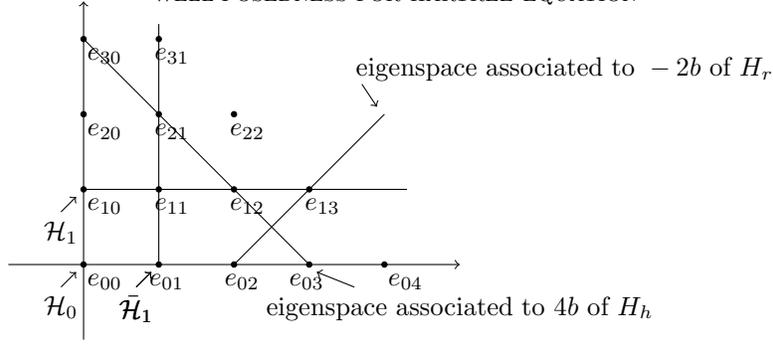
\begin{figure}[hbt]
    \centering
    \begin{tikzpicture}
    \draw[black,->] (-1,0) -- (5,0); 
    \draw[black,->] (0,-1) -- (0,3.5);
    \filldraw[black] (0,0) circle (1pt) node[anchor=north, xshift=8pt] {$e_{00}$};
    \filldraw[black] (1,0) circle (1pt) node[anchor=north, xshift=3pt] {$e_{01}$};
    \filldraw[black] (2,0) circle (1pt) node[anchor=north, xshift=3pt] {$e_{02}$};
    \filldraw[black] (3,0) circle (1pt) node[anchor=north, xshift=-1pt] {$e_{03}$};
    \filldraw[black] (4,0) circle (1pt) node[anchor=north, xshift=8pt] {$e_{04}$};
    \filldraw[black] (0,1) circle (1pt) node[anchor=north, xshift=8pt] {$e_{10}$};
    \filldraw[black] (0,2) circle (1pt) node[anchor=north, xshift=8pt] {$e_{20}$};
    \filldraw[black] (0,3) circle (1pt) node[anchor=north, xshift=8pt] {$e_{30}$};
    \filldraw[black] (1,1) circle (1pt) node[anchor=north, xshift=5pt] {$e_{11}$};
    \filldraw[black] (2,1) circle (1pt) node[anchor=north, xshift=5pt] {$e_{12}$};
    \filldraw[black] (3,1) circle (1pt) node[anchor=north, xshift=5pt] {$e_{13}$};
    \filldraw[black] (1,2) circle (1pt) node[anchor=north, xshift=5pt] {$e_{21}$};
    \filldraw[black] (2,2) circle (1pt) node[anchor=north, xshift=5pt] {$e_{22}$};
    \filldraw[black] (1,3) circle (1pt) node[anchor=north, xshift=5pt] {$e_{31}$};
    \draw[black,->] (-0.3,-0.3) -- (-0.1,-0.1);
    \node[anchor=north] at (-0.3,-0.3) {$\mathcal{H}_{0}$};
    \draw[black,->] (-0.3,0.7) -- (-0.1,0.9);
    \node[anchor=north] at (-0.3,0.7) {$\mathcal{H}_{1}$};
    \draw[black,->] (0.7,-0.3) -- (0.9,-0.1);
    \node[anchor=north] at (0.7,-0.3) {$\bar{\mathcal{H}}_{1}$};
    \draw[black,->] (0.7,-0.3) -- (0.9,-0.1);
    \node[anchor=north] at (0.7,-0.3) {$\bar{\mathcal{H}}_{1}$};
    \draw[black,->] (3.6,-0.3) -- (3.1,-0.1);
    \node[anchor=north] at (5,-0.3) {$\text{eigenspace associated to } 4b\text{ of }H_{h}$};
    \draw[black,->] (3.7,2.4) -- (3.9,2.1);
    \node[anchor=west] at (3.5,2.6) {$\text{eigenspace associated to } -2b\text{ of }H_{r}$};
    \draw (0,1) -- (4.3,1);
    \draw (1,0) -- (1,3.2);
    \draw (3,0) -- (0,3);
    \draw (2,0) -- (4,2);
    \end{tikzpicture}
    \caption{Canonical Eigenfunctions of $H$}
    \label{fig}
    {$e_{kj}$ are in the form:
    $e_{kj}(z)=\left\{\begin{array}{cc}
        z^{j-k}p\left(\frac{b|z|^2}{2}\right) & j\ge k, \\
        \bar{z}^{k-j}p\left(\frac{b|z|^2}{2}\right) & k>j,
    \end{array}\right.$
    where $p$ is a polynomial of degree $\min\{j,k\}$.
    }
\end{figure}

For the null space $\mathcal{H}_0$ of $H$, since there is a reproducing kernel in the Fock-Bargmann space, $P_{0}(x,y)$ is given by (\ref{null,space,kernel}). Based on this formula (\ref{null,space,kernel}) and the ladder structure $\mathcal{H}_{k}=\left(D^*\right)^{k}\mathcal{H}_{0}$, we obtain the following expressions for all projections $P_{k}$ in terms of the Fourier-Wigner transform. 

\begin{lemma}\label{H,spectral,structure}
 The projection $P_{k}$ associated to the eigenspace $\mathcal{H}_{k}$ can be expressed as
\begin{equation}
    P_{k}f=\frac{b}{2\pi}V(h_{k},h_{k})\bar{\natural}f,\quad f\in L^2\left(\mathbb{R}^2\right).
\end{equation}
More explicitly, the integral kernel $P_{k}(x,y)$ of $P_{k}$ is 
\begin{equation}\label{Pauli,operator,projection kernel}
    P_{k}(x,y)=\frac{b}{2\pi}L_{k}\left(\frac{b|x-y|^2}{2}\right)\exp\left(-\frac{b|x-y|^2}{4}-\frac{ib}{2}\,\Omega(x,y)\right),\quad x,y\in\mathbb{R}^2,
\end{equation}
where $\displaystyle L_{k}(\lambda)=\sum_{j=0}^{k}\binom{k}{j}\frac{(-\lambda)^j}{j!},\ (\lambda\in\mathbb{R})$ are Laguerre polynomials.
\end{lemma}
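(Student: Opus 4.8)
The plan is to establish the operator identity $P_{k}f=\frac{b}{2\pi}V(h_{k},h_{k})\,\bar{\natural}\,f$ first, and then extract the kernel formula \eqref{Pauli,operator,projection kernel} from it by computing the Fourier–Wigner transform $V(h_{k},h_{k})$ explicitly. The key structural input is the ladder relation $\mathcal{H}_{k}=(D^*)^{k}\mathcal{H}_{0}$ together with the already-established formula \eqref{null,space,kernel} for $P_{0}$, which in twisted-convolution language reads $P_{0}f=\frac{b}{2\pi}\,p_{0}\,\bar{\natural}\,f$ where $p_{0}(x)=e^{-b|x|^2/4}$ is (up to normalization) $V(h_{0},h_{0})$ since $h_{0}$ is the Gaussian ground state. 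So the base case $k=0$ is exactly \eqref{null,space,kernel}.

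First I would set up the algebra of twisted convolutions: the map $f\mapsto \frac{b}{2\pi}g\,\bar{\natural}\,f$ is, under the Fourier–Wigner correspondence, the operator whose integral kernel is $\frac{b}{2\pi}g(x-y)e^{-ib\Omega(x,y)/2}$, and composition of such operators corresponds to twisted convolution of symbols; in particular $V(h_{j},h_{k})\,\bar{\natural}\,V(h_{l},h_{m})$ collapses to a multiple of $V(h_{l},h_{k})$ by the orthogonality $\langle h_{j},h_{m}\rangle=\delta_{jm}$ (this is the standard fact that the $V(h_{j},h_{k})$ form a ``matrix unit'' system for the twisted convolution algebra — it follows from the group-theoretic orthogonality relations for the Heisenberg representation $\beta$, i.e. Schur orthogonality). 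In particular $V(h_{k},h_{k})\,\bar{\natural}\,V(h_{k},h_{k})=\frac{2\pi}{b}V(h_{k},h_{k})$, so $\frac{b}{2\pi}V(h_{k},h_{k})\,\bar{\natural}\,(\cdot)$ is idempotent, and self-adjointness is immediate from $V(h_{k},h_{k})(-p,-q)=\overline{V(h_{k},h_{k})(p,q)}$. Hence $\frac{b}{2\pi}V(h_{k},h_{k})\,\bar{\natural}\,(\cdot)$ is an orthogonal projection; it remains to identify its range with $\mathcal{H}_{k}$.

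To pin down the range I would use the intertwining of $D,D^*$ with the Heisenberg/annihilation structure on the Fourier–Wigner side: conjugating $D^*$ through $V$ turns it into the raising operator acting on the \emph{first} slot of $V(h_{j},h_{k})$, so that $D^*$ sends (a multiple of) $V(h_{j},h_{k})\,\bar\natural\,(\cdot)$ to $V(h_{j+1},h_{k})\,\bar\natural\,(\cdot)$ up to constants — this is precisely the statement that creation in physical space is creation in one Bargmann variable. Starting from $k=0$, where $P_0 f=\frac{b}{2\pi}V(h_0,h_0)\bar\natural f$ is \eqref{null,space,kernel}, applying $(D^*)^k$ and normalizing as in the definition of $e_{kj}$ pushes the symbol $V(h_0,h_0)$ up to $V(h_k,h_k)$, and since $(D^*)^k$ maps $\mathcal H_0$ onto $\mathcal H_k$ bijectively (with the stated normalization an isometry-up-to-scalar on each $e_{0j}$), the range of $\frac{b}{2\pi}V(h_k,h_k)\bar\natural(\cdot)$ is exactly $\mathcal H_k$. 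Combined with idempotency and self-adjointness, this forces it to equal $P_k$. Finally, for the explicit kernel I would compute $V(h_{k},h_{k})(p,q)=\int_{\mathbb R}e^{iqx+ibpq/2}h_{k}(x+pb)\,\overline{h_{k}(x)}\,dx$; this is a classical Hermite-function overlap integral evaluating to $(-1)^k L_{k}\!\big(\tfrac{b}{2}(p^2+q^2)\big)e^{-\frac{b}{4}(p^2+q^2)}$ (the generating-function identity for Laguerre polynomials, as in Folland or Thangavelu), after which substituting $p=x-y$-type variables and matching the twisted phase $e^{-ib\Omega(x,y)/2}$ yields \eqref{Pauli,operator,projection kernel}; the sign $(-1)^k$ is absorbed since $L_k$ appears with argument $\frac{b}{2}|x-y|^2\ge 0$ and the kernel convention already carries it.

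The main obstacle I anticipate is not any single computation but the bookkeeping of normalization constants and phase conventions: one must check that the constant in $V(h_k,h_k)\bar\natural V(h_k,h_k)=\frac{2\pi}{b}V(h_k,h_k)$ is exactly right (so $\frac{b}{2\pi}$ out front gives a genuine projection, not a scalar multiple of one), and that the intertwining $D^*\leftrightarrow$ raising operator carries the \emph{same} normalization used to define $e_{kj}$ from $e_{0j}$; a factor error anywhere propagates into a wrong multiple of $L_k$. The cleanest route is probably to verify the Hermite-overlap integral directly for general $k$ via the generating function $\sum_k L_k(t)s^k=(1-s)^{-1}e^{-ts/(1-s)}$ paired with the Mehler-type generating function for products of Hermite functions, which simultaneously nails down both the $L_k$ and the normalizing constant in one stroke.
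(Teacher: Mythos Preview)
Your overall strategy is essentially the paper's: both proofs rest on (a) the intertwining $D^{*}V(g,h_{k})\propto V(g,h_{k+1})$, which identifies $\mathcal{H}_{k}$ with $\{V(g,h_{k}):g\in L^{2}(\mathbb{R})\}$, and (b) the twisted-convolution ``matrix unit'' relation $V(h_{j},h_{k})\,\bar{\natural}\,V(g,h_{l})=\tfrac{2\pi}{b}\delta_{jl}\,V(g,h_{k})$ (Folland, Proposition~1.47). The paper is slightly more economical: rather than first verifying idempotency and self-adjointness of $\tfrac{b}{2\pi}V(h_{k},h_{k})\,\bar{\natural}\,(\cdot)$ and then identifying its range, it writes an arbitrary $f=\sum_{k}V(f_{k},h_{k})$ via the Bargmann correspondence and directly computes $V(h_{j},h_{j})\,\bar{\natural}\,f=\tfrac{2\pi}{b}V(f_{j},h_{j})$, which is immediately the $j$-th projection.

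There are, however, two concrete slips in your write-up that you should fix. First, the intertwining goes through the \emph{second} slot, not the first: the paper's identity is $\tfrac{D^{*}}{\sqrt{2}}V(g,h_{0})=V(g,a^{\dagger}h_{0})$, so $D^{*}V(g,h_{k})\propto V(g,h_{k+1})$. Consequently your claim that ``applying $(D^{*})^{k}$ pushes the symbol $V(h_{0},h_{0})$ up to $V(h_{k},h_{k})$'' is not right: left-composition by $(D^{*})^{k}$ sends the symbol to (a multiple of) $V(h_{0},h_{k})$, not $V(h_{k},h_{k})$; you would need an additional $D^{k}$ on the right to raise the other index. This step is in fact unnecessary once you have (a) and (b) above, since the matrix-unit relation already shows $\tfrac{b}{2\pi}V(h_{k},h_{k})\,\bar{\natural}\,(\cdot)$ fixes $\mathcal{H}_{k}=\{V(g,h_{k})\}$ and annihilates every other $\mathcal{H}_{l}$. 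Second, the $(-1)^{k}$ you mention belongs to the Wigner transform $W(h_{k},h_{k})$, not to $V(h_{k},h_{k})$; by the paper's Theorem on $V(h_{j},h_{k})$ (Folland, Theorem~1.104), one has $V(h_{k},h_{k})(p,q)=L_{k}\!\big(\tfrac{b}{2}|w|^{2}\big)e^{-b|w|^{2}/4}$ with no sign to absorb.
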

\begin{proof}
Suppose $g\in \mathcal{S}(\mathbb{R})$, the Fourier-Wigner transform of $g$ and $e^{-\lambda^2/2b}$ is
\begin{align*}
    V\left(g,e^{-\lambda^2/2b}\right)(x^1,x^2)&=\int_{\mathbb{R}}e^{ix^2\lambda+ibx^1x^2/2}g(\lambda+bx^1)e^{-\lambda^2/2b}\,d\lambda\\
    &= e^{-b|z|^2/4}\int_{\mathbb{R}}e^{\lambda z-\lambda^2/2b-bz^2/4}g(\lambda)\,d\lambda,
\end{align*} 
where $z=x^1+i\,x^2$. $\left(g\mapsto\int_{\mathbb{R}}e^{\lambda z-\lambda^2/2b-bz^2/4}g(\lambda)\,d\lambda\right)$ defines a Bargmann transform from $L^2(\mathbb{R})$ to the Fock-Bargmann space with weight $e^{-b|z|^2/2}d\mu$. Since the correspondence is isomorphic, we identify $L^2(\mathbb{R})$ with $\mathcal{H}_{0}$. Notice that $D^*$ and the creation operator $a^{\dagger}$ are connected through the identity
\begin{equation}
    \frac{D^{*}}{\sqrt{2}}V(g,e^{-\lambda^2/2b})=V\left(g,a^{\dagger}e^{-\lambda^2/2b}\right).
\end{equation}
Then $L^{2}(\mathbb{R})$ corresponds to $\mathcal{H}_{k}=(D^*)^k\mathcal{H}_{0}$ by $\left(g\mapsto V\left(g,\left(a^{\dagger}\right)^k e^{-\lambda^2/2b}\right)\right)$. Therefore for any $f\in L^{2}(\mathbb{R}^2)$, there is a sequence $\{f_{k}\}_{k\in\mathbb{N}}\subset L^{2}(\mathbb{R})$ such that 
\[
    f(x)=\sum_{k=0}^{\infty}V\left(f_{k},\frac{\left(a^{\dagger}\right)^k}{(\pi b)^{1/4}\sqrt{k!(2b)^k}} e^{-\lambda^2/2b}\right)=\sum_{k=0}^{\infty}V\left(f_{k},h_{k}\right).
\]
By Lemma \ref{V,W,properties} and Theorem \ref{V,Hermit,Laguerre} from the appendix, 
\begin{align*}
    &\quad\overline{V\left(h_{j},h_{j}\right)}\natural \overline{ f(x)}=\sum_{k=0}^{\infty}\frac{2\pi}{b}\left\langle h_{k},h_{j}\right\rangle \overline{V\left(f_{k},h_{j}\right)}\\
    &=\frac{2\pi}{b}\sum_{k=0}^{\infty}\delta_{jk}\overline{V\left(f_{k},h_{j}\right)}=\frac{2\pi}{b}\overline{V(f_{j},h_{j})}\\
    \implies & \left\{\begin{array}{l}
        \displaystyle P_{k}=\frac{b}{2\pi} V\left(h_{k},h_{k}\right)\overline{\natural } \quad \text{or}\\
        \displaystyle P_{k}(x,y)=\frac{b}{2\pi}L_{k}\left(\frac{b}{2}|x-y|^2\right)\exp\left(-\frac{b|x-y|^2}{4}-\frac{ib\,\Omega(x,y)}{2}\right).
    \end{array}
    \right. 
\end{align*}
\end{proof}
\begin{remark} Similarly for $\bar{H}$, the projection $\bar{P}_{k}$ onto the $k$-th eigenspace of $\bar{H}$ is
\begin{equation}
   \bar{P}_{k}f=\frac{b}{2\pi}V(h_{k},h_{k})\natural f,\quad  f\in L^2(\mathbb{R}^2),
\end{equation}
and the integral kernel of $\bar{P}_{k}$ is simply the complex conjugation of $P_{k}(x,y)$.
\end{remark}

\begin{remark}
    $D$ commutes with complex conjugates $\bar{D}$ and $\bar{D}^*$.
\end{remark}

At the end of this section, we list some results about $H$ for later use. 

The difference between $H^{1/2}$ and $D$ can be analogous to the one between $(-\Delta)^{1/2}$ and $\nabla$. Generally  for any $f\in\mathcal{D}\left(H^{1/2}\right)$, $H^{1/2}f$ is not the same as $Df$. The difference is apparent when decompose $f$ as $f=\sum_{k=0}^{\infty}P_{k}f$ and apply $H^{1/2}$ and $D$ to $f$ separately
\[
    H^{1/2}f=\sum_{k=1}^{\infty}(2bk)^{1/2}P_{k}f,\quad Df=\sum_{k=1}^{\infty}DP_{k}f,
\]
where $(2bk)^{1/2}P_{k}f$ and $DP_{k}f$ are in $\mathcal{H}_{k}$ and $\mathcal{H}_{k-1}$ respectively. However, they have the same $L^2$ norms
\begin{equation}\label{L2,equivalent}
    \left\langle H^{1/2}f,H^{1/2}f\right\rangle=\left\langle Hf,f\right\rangle=\left\langle D^*Df, f\right\rangle=\left\langle Df,Df\right\rangle.
\end{equation}
More generally, for $1<p<\infty$, 
\begin{equation}\label{Lp,H,D,relation}
    \|(H+b)^{1/2}f\|_{L^{p}(\mathbb{R}^2)}\sim_{p}\|Df\|_{L^{p}(\mathbb{R}^2)}+\left\|D^*f\right\|_{L^p(\mathbb{R}^2)}.
\end{equation} 
\begin{remark}
    To see why (\ref{Lp,H,D,relation}) is true, note that our vector field potential $A$ satisfies $A\in L^{2}_{loc}(\mathbb{R}^3)^3$ and the magnetic field $\mathbf{B}=(0,0,b)$ is constant. Then by \cite[Theorem 1.3, 1.6]{Ali2010}, for $1<p<\infty$,
    \[
        \left\|Lf\right\|_{L^p(\mathbb{R}^3)}\sim_{p}\left\|\left(H-\partial^2_{x^3}+b\right)^{1/2}f\right\|_{L^p(\mathbb{R}^3)},
    \]
    where $L=\left(-i\partial_{x^1}+\frac{b}{2}x^2,-i\partial_{x^2}-\frac{b}{2}x^1,-i\partial_{x^3}\right)$ and $x=(x^1,x^2,x^3)$. Since in the third dimension it is known that $\|\partial_{x^3}g\|_{L^p(\mathbb{R})}\sim_{p}\|(-\partial^2_{x^3})^{1/2}g\|_{L^p(\mathbb{R})}$ and 
    \[
        \left\|\left(-i\partial_{x^1}+\frac{b}{2}x^2,-i\partial_{x^2}-\frac{b}{2}x^1\right)f\right\|_{L^p(\mathbb{R}^2)}\sim \left\|Df\right\|_{L^p(\mathbb{R}^2)}+\left\|D^*f\right\|_{L^p(\mathbb{R}^2)},
    \]
    we obtain (\ref{Lp,H,D,relation}).
\end{remark} 
Unlike $(-\Delta)^{1/2}$ and $\nabla$, where they both commute with $\Delta$, $\left[D,H\right]=2bD$.

There is no comparison between $\left\|\nabla f\right\|_{L^2}$ and $\left\|D f\right\|_{L^2}$. For example, $\left\|D e_{0k}\right\|_{L^2_{\mathbb{R}^2}}=0$, for any $k\in\mathbb{N}$. While
\[
    \left\|\nabla e_{0k}\right\|_{L^2_{\mathbb{R}^2}} =\left\|2\partial_{\bar{z}} e_{0k}\right\|_{L^2_{\mathbb{R}^2}}=\frac{\sqrt{(k+1)b}}{\sqrt{2}}\left\|e_{0k+1}\right\|_{L^2_{\mathbb{R}^2}}=\frac{\sqrt{(k+1)b}}{\sqrt{2}}
\]
blows up as $k$ approaches infinity. On the other hand, taking $f\in C_{c}^{\infty}(\mathbb{R}^2)$, consider the translation $f_{\tilde{x}}=f(x-\tilde{x})$, then
\[
    \left\|\nabla f_{\tilde{x}}\right\|_{L^2_{\mathbb{R}^2}}=\left\|\nabla f\right\|_{L^2_{\mathbb{R}^2}},\quad \left\|D f_{\tilde{x}}\right\|_{L^2_{\mathbb{R}^2}}\rightarrow \infty\ \hbox{as}\  \tilde{x}\rightarrow \infty.
\]
However there is a pointwise identity, for $f,g\in \mathcal{S}(\mathbb{R}^2)$,
\[
    -2\partial_{\bar{z}}(f\bar{g})=(Df)\bar{g}-f\overline{D^*g},
\]
which implies 
\begin{equation}\label{pointwise,covariant,comparison}
    \left|\partial_{\bar{z}}\left|f\right|\right|=\left|\frac{\partial_{\bar{z}}\left(f\bar{f}\right)}{2|f|}\right|\le \frac{\left|\left(Df\right)\bar{f}\right|}{|2f|}+\frac{\left|f\overline{D^* f}\right|}{2|f|}=\frac{1}{2}\left(|Df|+\left|D^*f\right|\right),
\end{equation}
i.e. $\left|\partial_{\bar{z}}\left|f\right|\right|\lesssim |Df|+\left|D^*f\right|$. Based on this observation, we can still make use of the Sobolev inequality.

\begin{lemma}\label{covariant,n-endpoint,sobolev}
For $2\le q<\infty$,
\begin{equation} 
    \|f\|_{L^q(\mathbb{R}^2)}\lesssim_{q}\left\|\langle H\rangle^{1/2} f\right\|_{L^2(\mathbb{R}^2)}.
\end{equation}
\end{lemma}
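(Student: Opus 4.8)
\emph{Proof sketch.} The plan is to combine three ingredients: the pointwise covariant inequality (\ref{pointwise,covariant,comparison}), a spectral comparison of $\|Df\|_{L^2}+\|D^{*}f\|_{L^2}$ with $\|\langle H\rangle^{1/2}f\|_{L^2}$, and the two–dimensional Gagliardo--Nirenberg--Sobolev inequality
\[
    \|g\|_{L^{q}(\mathbb{R}^2)}\lesssim_{q}\|\nabla g\|_{L^2(\mathbb{R}^2)}^{1-2/q}\,\|g\|_{L^2(\mathbb{R}^2)}^{2/q},\qquad 2\le q<\infty,
\]
applied to $g=|f|$ (note this is exactly the range of $q$ in the statement, as the inequality degenerates at $q=\infty$). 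It suffices to prove the estimate for $f\in\mathcal{S}(\mathbb{R}^2)$: since $\mathcal{S}(\mathbb{R}^2)$ is a core for $\langle H\rangle^{1/2}$ and $\langle H\rangle^{-1/2}$ is bounded on $L^2$, any approximating sequence converges in $L^2$, and the estimate itself forces it to be Cauchy (hence convergent to the same limit) in $L^{q}$, so the bound passes to the limit.

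Step one is to replace $\langle H\rangle^{1/2}f$ by the covariant derivatives. Writing $f=\sum_{k\ge 0}P_{k}f$ and using $H=D^{*}D$ together with $[D,D^{*}]=2b$ (the $k=1$ instance of $[D,(D^{*})^{k}]=2bk(D^{*})^{k-1}$), a direct computation on the eigenspaces gives, for $P_{k}f\in\mathcal{H}_{k}$,
\[
    \|DP_{k}f\|_{L^2}^2=\langle D^{*}DP_{k}f,P_{k}f\rangle=2bk\,\|P_{k}f\|_{L^2}^2,\qquad
    \|D^{*}P_{k}f\|_{L^2}^2=\langle DD^{*}P_{k}f,P_{k}f\rangle=2b(k+1)\,\|P_{k}f\|_{L^2}^2 .
\]
Since $D$ maps $\mathcal{H}_{k}$ into the mutually orthogonal spaces $\mathcal{H}_{k-1}$ and $D^{*}$ maps $\mathcal{H}_{k}$ into $\mathcal{H}_{k+1}$, summing in $k$ yields
\[
    \|Df\|_{L^2}^2+\|D^{*}f\|_{L^2}^2=\sum_{k\ge 0}(4bk+2b)\|P_{k}f\|_{L^2}^2\lesssim_{b}\sum_{k\ge 0}\langle 2bk\rangle\|P_{k}f\|_{L^2}^2=\|\langle H\rangle^{1/2}f\|_{L^2}^2 ,
\]
and in particular $f\in\mathcal{D}(D)\cap\mathcal{D}(D^{*})$ whenever $\langle H\rangle^{1/2}f\in L^2$; we also record $\|f\|_{L^2}\le\|\langle H\rangle^{1/2}f\|_{L^2}$ since $\langle 2bk\rangle\ge 1$.

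Step two is the diamagnetic bound for $|f|$. In two dimensions $2\partial_{\bar z}=\partial_{x^1}+i\partial_{x^2}$, so any real-valued $g$ satisfies $|\nabla g|=2|\partial_{\bar z}g|$; taking $g=|f|$ and using the pointwise identity $-2\partial_{\bar z}(f\bar f)=(Df)\bar f-f\overline{D^{*}f}$ behind (\ref{pointwise,covariant,comparison}) gives, on $\{f\neq 0\}$,
\[
    |\nabla|f||=2|\partial_{\bar z}|f||=\frac{|(Df)\bar f-f\overline{D^{*}f}|}{2|f|}\le\tfrac12\big(|Df|+|D^{*}f|\big),
\]
while $\nabla|f|=0$ a.e. on $\{f=0\}$. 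Hence $|f|\in W^{1,2}(\mathbb{R}^2)$ with $\|\nabla|f|\|_{L^2}\le\tfrac12\big(\|Df\|_{L^2}+\|D^{*}f\|_{L^2}\big)$. Plugging this and the bound of Step one into the Gagliardo--Nirenberg inequality,
\[
    \|f\|_{L^{q}}=\||f|\|_{L^{q}}\lesssim_{q}\big(\|Df\|_{L^2}+\|D^{*}f\|_{L^2}\big)^{1-2/q}\|f\|_{L^2}^{2/q}\lesssim_{q,b}\|\langle H\rangle^{1/2}f\|_{L^2},
\]
which is the claim (with an implicit constant also depending on the fixed parameter $b$).

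The only delicate point is Step two: the pointwise covariant inequality must be justified for a general, not necessarily smooth, $f$, and one must handle the set where $f$ vanishes. Both are handled in the standard way — first observe that $Df,D^{*}f,f\in L^2$ forces $\nabla f\in L^2_{\mathrm{loc}}$ (because $\partial_{\bar z}f=-\tfrac12(Df+\tfrac{b}{2}zf)$ and $\partial_{z}f=\tfrac12(D^{*}f+\tfrac{b}{2}\bar zf)$ with $z$ locally bounded), so $f\in W^{1,2}_{\mathrm{loc}}$; then regularize $|f|$ by $\sqrt{|f|^2+\varepsilon}$ and let $\varepsilon\to 0^{+}$, together with the density reduction to $\mathcal{S}(\mathbb{R}^2)$ mentioned above. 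Everything else is a short computation.
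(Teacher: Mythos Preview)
Your proof is correct and follows essentially the same route as the paper: both use the pointwise diamagnetic bound (\ref{pointwise,covariant,comparison}) to control $\|\nabla|f|\|_{L^2}$ by $\|Df\|_{L^2}+\|D^{*}f\|_{L^2}$, then apply the two--dimensional Sobolev inequality to $|f|$, and finally convert the covariant derivatives into $\|\langle H\rangle^{1/2}f\|_{L^2}$ via the spectral decomposition. The only cosmetic differences are that the paper writes the Sobolev step in the additive form $\|g\|_{L^q}\lesssim_q\|g\|_{L^2}+\|\nabla g\|_{L^2}$ and replaces $\|D^{*}f\|_{L^2}$ by $\|Df\|_{L^2}+\|f\|_{L^2}$ directly (using $DD^{*}=D^{*}D+2b$), while you use the multiplicative Gagliardo--Nirenberg form and spell out the density and zero-set considerations more carefully.
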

\begin{proof}
By (\ref{pointwise,covariant,comparison}),
\[
    \left\|\nabla |f|\right\|_{L^2}=\left\|-2\partial_{\bar{z}}|f|\right\|_{L^2}\le \|Df\|_{L^2}+\|D^*f\|_{L^2}\lesssim \|Df\|_{L^2}+\|f\|_{L^2},
\]
and apply the usual $n$-endpoint Sobolev inequality,
\begin{align*}
    \|f\|_{L^q(\mathbb{R}^2)}=\left\| |f| \right\|_{L^q(\mathbb{R}^2)}\lesssim_{q} \|f\|_{L^2(\mathbb{R}^2)}+\|\nabla |f|\|_{L^2(\mathbb{R}^2)}\lesssim \|f\|_{L^2(\mathbb{R}^2)}+\|D f\|_{L^2(\mathbb{R}^2)}.
\end{align*}
\end{proof}

\section{Strichartz and Collapsing Estimates}\label{many,particle,H}
In this section, we study the linear equation $i\,\partial_{t}\gamma=[H,\gamma]$. The formula of the propagator $e^{-iHt}$ and the spectral structure of $H$ from Section \ref{prop,H} are the basic tools for our discussion. Similar to Corollary \ref{strichartz,scalar,long,time}, for any finite time $T$, we obtain the Strichartz estimate for $e^{-iHt}\gamma_{0}e^{iHt}=e^{-i\left(H_{x}-\bar{H}_{y}\right)t}\gamma_{0}$.

\begin{proposition}\label{Strichartz,homogeneous}
Let $\gamma(t,x,y)=e^{-i(H_{x}-\bar{H}_{y})t}\gamma_{0}(x,y)$ be the solution to Equation (\ref{homogeneous,pauli}),
then for any $T>0$ and $s\ge0$,
\begin{equation}\label{long,time,homogeneous,matrix}
    \left\|\langle H_{x}\rangle^{s/2} \langle \bar{H}_{y}\rangle^{s/2}\gamma(t,x,y)\right\|_{\left(L^{q}_{[0,T]}L^{r}_{x}L^2_{y}\right)\cap \left(L^{q}_{[0,T]}L^{r}_{y}L^2_{x}\right) }\lesssim_{q,r,T} \left\|\langle H_{x}\rangle^{s/2} \langle \bar{H}_{y}\rangle^{s/2}\gamma_0(x,y)\right\|_{L^2_{x}L^2_{y}},
\end{equation}
where $(q,r)$ satisfies (\ref{admissible,pairs}). Furthermore, by duality, the following dual estimate holds
\begin{equation}\label{dual,strichartz}
    \left\|\int_{0}^{T}e^{i(\bar{H}_{x}-H_{y})t}F(t,x,y)\,dt\right\|_{L^2_{x}L^2_{y}}\lesssim_{q',r',T}\left\|F(t,x,y)\right\|_{\left(L^{q'}_{[0,T]}L^{r'}_{x}L^2_{y}\right)\cap \left(L^{q'}_{[0,T]}L^{r'}_{y}L^2_{x}\right)},
\end{equation}
where 
\[
    \frac{1}{q}+\frac{1}{q'}=1,\quad \frac{1}{r}+\frac{1}{r'}=1.
\]
\end{proposition}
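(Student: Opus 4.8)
The plan is to upgrade the scalar long-time Strichartz estimate of Corollary \ref{strichartz,scalar,long,time} to the two-particle operator flow $e^{-i(H_x-\bar H_y)t}$, by viewing $e^{-i\bar H_y t}$ as a flow in the $y$-variable that commutes with everything in $x$, and then to insert the spectral weights $\langle H_x\rangle^{s/2}\langle\bar H_y\rangle^{s/2}$ essentially for free because they commute with the propagator. First I would dispose of the weights: since $\langle H_x\rangle^{s/2}$ commutes with $e^{-iH_x t}$ and acts only on $x$, while $\langle\bar H_y\rangle^{s/2}$ commutes with $e^{i\bar H_y t}$ and acts only on $y$, we may set $\widetilde\gamma_0(x,y)=\langle H_x\rangle^{s/2}\langle\bar H_y\rangle^{s/2}\gamma_0(x,y)$ and it suffices to prove the case $s=0$ for the new data $\widetilde\gamma_0$ (which lies in $L^2_xL^2_y$ by hypothesis). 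So the task reduces to
\[
    \left\|e^{-i(H_x-\bar H_y)t}\gamma_0\right\|_{\left(L^q_{[0,T]}L^r_xL^2_y\right)\cap\left(L^q_{[0,T]}L^r_yL^2_x\right)}\lesssim_{q,r,T}\|\gamma_0\|_{L^2_xL^2_y}.
\]

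For the first of the two mixed norms, I would fix $y$ momentarily and think of $t\mapsto e^{-iH_x t}(e^{i\bar H_y t}\gamma_0)(\cdot,y)$; but the cleaner route is: write $e^{-i(H_x-\bar H_y)t}\gamma_0 = e^{i\bar H_y t}\big(e^{-iH_x t}\gamma_0\big)$, note $e^{i\bar H_y t}$ is unitary on $L^2_y$ for each fixed $t,x$, hence $\|e^{-i(H_x-\bar H_y)t}\gamma_0\|_{L^2_y} = \|e^{-iH_x t}\gamma_0(\cdot,\cdot\,)\|_{L^2_y}$ pointwise in $(t,x)$... more carefully, since the two flows act on different variables and commute, $\|e^{-i(H_x-\bar H_y)t}\gamma_0(\,\cdot\,,y)\|_{L^r_x}$ as a function of $(t,y)$ equals $\|e^{-iH_x t}(e^{i\bar H_y t}\gamma_0)(\,\cdot\,,y)\|_{L^r_x}$, and for each fixed $y$ the inner function $h_y := (e^{i\bar H_y t}\gamma_0)(\,\cdot\,,y)$ — wait, this still depends on $t$. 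The correct manoeuvre is to apply Minkowski's integral inequality to pull the $L^2_y$ norm inside the $L^q_tL^r_x$ norm (legitimate since $q,r\ge 2$), giving
\[
    \left\|e^{-i(H_x-\bar H_y)t}\gamma_0\right\|_{L^q_tL^r_xL^2_y}\le \left\|\,\left\|e^{-iH_x t}\big(e^{i\bar H_y t}\gamma_0\big)\right\|_{L^q_tL^r_x}\right\|_{L^2_y},
\]
then for each fixed $y$ absorb $e^{i\bar H_y t}$: since $\|e^{i\bar H_y t}g\|$-type manipulations don't directly help here because $e^{i\bar H_y t}\gamma_0$ genuinely depends on $t$, I would instead expand $\gamma_0$ in the eigenbasis of $\bar H_y$, $\gamma_0=\sum_k \bar P_{k,y}\gamma_0$, so that $e^{i\bar H_y t}\gamma_0=\sum_k e^{2ibkt}\bar P_{k,y}\gamma_0$; applying Corollary \ref{strichartz,scalar,long,time} in $x$ to each $\bar P_{k,y}\gamma_0(\,\cdot\,,y)$ and using that the phases $e^{2ibkt}$ are unimodular together with orthogonality of the $\bar P_{k,y}\gamma_0$ in $L^2_y$ and $TT^*$/almost-orthogonality in $t$ gives the bound with $\|\gamma_0\|_{L^2_xL^2_y}$ on the right. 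The second mixed norm $L^q_tL^r_yL^2_x$ is handled symmetrically, exchanging the roles of $x$ and $y$ and using the scalar estimate for $e^{i\bar H_y t}=\overline{e^{-iH_y\bar{}\,t}}$, which satisfies the same dispersive bound as $e^{-iHt}$ since $\bar H$ is unitarily equivalent to $H$ via complex conjugation. Finally the dual estimate \eqref{dual,strichartz} follows by the standard $TT^*$ duality argument: the map $\gamma_0\mapsto e^{-i(H_x-\bar H_y)t}\gamma_0$ from $L^2_xL^2_y$ into $(L^q_tL^r_xL^2_y)\cap(L^q_tL^r_yL^2_x)$ is bounded, so its adjoint $F\mapsto \int_0^T e^{i(\bar H_x - H_y)t}F(t)\,dt$ is bounded from the dual space $(L^{q'}_tL^{r'}_xL^2_y)+(L^{q'}_tL^{r'}_yL^2_x)$ — and in particular from the intersection's appropriate predual realized as the sum — into $L^2_xL^2_y$.

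The main obstacle I anticipate is making the interaction between the two commuting flows rigorous in the mixed-norm spaces: unlike a single-variable Strichartz estimate, here one cannot simply "freeze" one variable because $e^{i\bar H_y t}\gamma_0$ depends on $t$, so the eigen-expansion of $\bar H_y$ (or equivalently a Galilei-type/gauge change absorbing the $y$-flow) is essential, and one must check that summing the scalar Strichartz bounds over the infinitely many Landau levels $\{2bk\}$ does not lose a factor — this works precisely because the $\bar P_{k,y}\gamma_0$ are $L^2_y$-orthogonal and the $x$-Strichartz estimate is an $L^2\to L^q_tL^r_x$ bound whose square sums, so that Minkowski plus $\ell^2$-orthogonality closes the estimate cleanly with no $k$-dependence. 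A secondary technical point is the correct functional-analytic formulation of duality for an intersection of mixed-norm spaces, but this is routine once the direct estimate is in hand.
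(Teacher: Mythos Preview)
Your first instinct was correct and you should not have abandoned it. You wrote the identity
\[
    \left\|e^{-i(H_x-\bar H_y)t}\gamma_0(x,\cdot)\right\|_{L^2_y}=\left\|e^{-iH_x t}\gamma_0(x,\cdot)\right\|_{L^2_y}\quad\text{for every }(t,x),
\]
which follows immediately from the unitarity of $e^{i\bar H_y t}$ on $L^2_y$. Since $L^2_y$ is the \emph{innermost} norm in $L^q_tL^r_xL^2_y$, this pointwise equality gives $\|e^{-i(H_x-\bar H_y)t}\gamma_0\|_{L^q_tL^r_xL^2_y}=\|e^{-iH_xt}\gamma_0\|_{L^q_tL^r_xL^2_y}$ outright. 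From here, Minkowski ($q,r\ge 2$) swaps $L^2_y$ outside, and the scalar estimate of Corollary~\ref{strichartz,scalar,long,time} applied for each fixed $y$ finishes the job. No eigen-expansion is needed.

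The route you pivoted to has a genuine gap. After Minkowski you are faced, for fixed $y$, with $\big\|\sum_k e^{2ibkt}e^{-iH_xt}g_k\big\|_{L^q_tL^r_x}$ where $g_k=(\bar P_{k,y}\gamma_0)(\cdot,y)$. The triangle inequality plus scalar Strichartz gives only an $\ell^1_k$ bound $\sum_k\|g_k\|_{L^2_x}$; to close you would need an $\ell^2_k$ bound. You invoke ``$TT^*$/almost-orthogonality in $t$'' of the phases $e^{2ibkt}$, but for $q\neq 2$ the $L^q_t$ norm does not respect this orthogonality, and the $TT^*$ kernel $\sum_{k\ge 0}e^{2ibk(t-s)}$ is not even a tempered distribution. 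Moreover the $g_k$ for fixed $y$ are \emph{not} orthogonal in $L^2_x$ (the orthogonality of the $\bar P_{k,y}$ is in $L^2_y$, which you have already frozen), so the $\ell^2$ structure has been lost at this point.

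For comparison, the paper does neither of these: it views $e^{-i(H_x-\bar H_y)t}$ as a one-parameter unitary group on the Hilbert space $L^2_x\!\left(\mathbb{R}^2;L^2_y\right)$ of $L^2_y$-valued functions, checks the vector-valued dispersive estimate
\[
    \left\|e^{-i(H_x-\bar H_y)t}\gamma_0\right\|_{L^\infty_xL^2_y}\lesssim t^{-1}\left\|\gamma_0\right\|_{L^1_xL^2_y}
\]
directly from the kernel bound in Theorem~\ref{pauli,fundamental solution} (using, again, that $e^{i\bar H_yt}$ is an $L^2_y$-isometry), and then invokes the abstract Keel--Tao machinery. Your corrected first route and the paper's route are both valid; yours is slightly more elementary in that it reduces to the scalar Corollary~\ref{strichartz,scalar,long,time} rather than re-running the abstract argument, at the cost of one Minkowski swap.
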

\begin{proof}
The two statements in (\ref{long,time,homogeneous,matrix}) are symmetric with respect to $x$ and $y$, we show the estimate for one of them and the other one is obtained by swapping roles of $x$ and $y$. Apply $\langle H_{x}\rangle^{s/2} \langle \bar{H}_{y}\rangle^{s/2}$ to Equation (\ref{homogeneous,pauli}),
\[
    i\,\partial_{t}\langle H_{x}\rangle^{s/2} \langle \bar{H}_{y}\rangle^{s/2}\gamma(t,x,y)=\left(H_{x}-\bar{H}_{y}\right)\langle H_{x}\rangle^{s/2} \langle \bar{H}_{y}\rangle^{s/2}\gamma(t,x,y).
\]
View $e^{-i\left(H_{x}-\bar{H}_{y}\right)t}$ as a map on the Hilbert space of $L^2(\mathbb{R}^2)$-valued functions. It is unitary since the Hilbert space  $\left\{f\left| f:\mathbb{R}^2\rightarrow L^{2}\left(\mathbb{R}^2\right)\right.\right\}$ is canonically isometric to $L^2\left(\mathbb{R}^2\times\mathbb{R}^2\right)$.
Besides, using Formula (\ref{constant,sol}), for $t<T_{\epsilon}\le\frac{\pi}{10b}$, 
\begin{align*}
    \left\|e^{-i\left(H_{x}-\bar{H}_{y}\right)t}\langle H_{x}\rangle^{s/2} \langle \bar{H}_{y}\rangle^{s/2}\gamma_{0}\right\|_{L^{\infty}_{x}L^{2}_{y}}&\lesssim \frac{1}{t}\left\|e^{i\bar{H}_{y}t}\langle H_{x}\rangle^{s/2} \langle \bar{H}_{y}\rangle^{s/2}\gamma_{0}\right\|_{L^1_{x}L^2_{y}}\\
        &=\frac{1}{t}\left\|\langle H_{x}\rangle^{s/2} \langle \bar{H}_{y}\rangle^{s/2}\gamma_{0}\right\|_{L^1_{x}L^2_{y}}.
\end{align*}
Then by the abstract version of the Strichartz estimate \cite[Theorem 10.1]{KeelTaoEn}, 
\begin{align*}
     \left\|\langle H_{x}\rangle^{s/2} \langle \bar{H}_{y}\rangle^{s/2}\gamma(t,x,y)\right\|_{L^{q}_{[0,T_{\epsilon}]}L^{r}_{x}L^2_{y}}\lesssim_{q,r} \left\|\langle H_{x}\rangle^{s/2} \langle \bar{H}_{y}\rangle^{s/2}\gamma_0(x,y)\right\|_{L^2_{x}L^2_{y}}.
\end{align*}
Following the same patching argument as Corollary \ref{strichartz,scalar,long,time}, we obtain the estimate (\ref{long,time,homogeneous,matrix}).
\end{proof}

In order to  show Theorem \ref{pauli,collapsing,estimate},
we need to decompose the initial data $\gamma_{0}(x,y)$ based on the spectral structures of $H_{x}$ and $\bar{H}_{y}$. According to Lemma \ref{H,spectral,structure},
\begin{align}
    \gamma_{jk}(x,y) &:= P_{xj}\bar{P}_{yk}\gamma_{0}=V(h_{j},h_{j})\overline{\natural}_{x}V(h_{k},h_{k})\natural_{y}\gamma_{0}\\
    \gamma_{0}(x,y) &=\sum_{j,k\in\mathbb{N}}\gamma_{jk}(x,y)\nonumber\\
        &=\sum_{j,k\in\mathbb{N}}\int_{\mathbb{R}^{2}\times\mathbb{R}^2}V(h_{j},h_{j})(x-\tilde{x})V(h_{k},h_{k})(y-\tilde{y})e^{-ib\left[\Omega(x,\tilde{x})-\Omega(y,\tilde{y})\right]/2}\gamma_{jk}(\tilde{x},\tilde{y})\, d\tilde{x} d\tilde{y},\label{decompostion,initial}
\end{align}
where $P_{xj}(\bar{P}_{yk})$ means the projection of $\gamma_{0}(x,y)$ onto $\mathcal{H}_{j}(\bar{\mathcal{H}}_{k})$ with respect to the $x(y)$ variable. Then in the kernel form, the evolution of $\gamma_{0}$ under Equation (\ref{homogeneous,pauli}) can be expressed as 
\begin{equation}
    \left(e^{-(H_{x}-\bar{H}_{y})it}\gamma_{0}\right)(x,y) =\sum_{j,k\in\mathbb{N}}e^{-2b(j-k)it}\gamma_{jk}(x,y).
\end{equation}

In the later computation of the space Fourier transform of (\ref{decompostion,initial}), associated Laguerre polynomials $L^{\alpha}_{n}(\lambda)$ appear in the collapsing term 
\begin{equation}
    L^{\alpha}_{n}(\lambda)=\sum_{j=0}^{n}\binom{n+\alpha}{n-j}\frac{(-\lambda)^{j}}{j!}, \quad \lambda\in \mathbb{R},\ n\in\mathbb{N},\ \alpha>-1.
\end{equation}
Thus estimates about these polynomials are needed for the collapsing estimate and we discuss them first.

\begin{lemma}\label{Laguerre,asymptotic,c0}
For $j,n,c\in\mathbb{N}$,
\begin{equation}\label{Laguerre,asymptotic,easy1}
    \frac{n!}{\left(n+j\right)!}\max_{\lambda\ge 0}\lambda^{j+c}\left(L^j_{n}\right)^2(\lambda)e^{-\lambda}\le 4^c(j+2n+c)^{c}.
\end{equation}
Furthermore, since associated Laguerre polynomials are related to $V\left(h_{j},h_{k}\right)$ by Theorem \ref{V,Hermit,Laguerre} in Appendix \ref{transform}, (\ref{Laguerre,asymptotic,easy1}) is equivalent to
\begin{equation}\label{Laguerre,asymptotic,easy}
    \left\|\bar{w}^{c}V\left(h_{j},h_{k}\right)(p,q)\right\|_{L^{\infty}}\le \left(\frac{2}{\sqrt{b}}\right)^{c}\left(j+k+c\right)^{c/2},
\end{equation}
where $j,k,c\in\mathbb{N}$, and $w=p+iq\in\mathbb{C}$.
\end{lemma}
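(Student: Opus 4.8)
The plan is to prove the pointwise bound \eqref{Laguerre,asymptotic,easy1} on $\frac{n!}{(n+j)!}\lambda^{j+c}(L^j_n)^2(\lambda)e^{-\lambda}$ by an induction/recursion argument on $c$, anchored at $c=0$ by a classical uniform estimate for Laguerre functions. For the base case $c=0$, I would invoke the well-known bound $\frac{n!}{(n+j)!}\lambda^{j}\bigl(L^j_n(\lambda)\bigr)^2 e^{-\lambda}\le 1$ for all $\lambda\ge0$, $j,n\in\mathbb{N}$ (this is the statement that the normalized Laguerre functions $\ell^j_n(\lambda)=\bigl(\tfrac{n!}{(n+j)!}\bigr)^{1/2}\lambda^{j/2}L^j_n(\lambda)e^{-\lambda/2}$ satisfy $\|\ell^j_n\|_{L^\infty}\le 1$, which follows from the integral representation or from the fact that $\ell^j_n$ is an $L^2$-normalized eigenfunction with a favorable sup bound; it can also be extracted from the Fourier–Wigner picture since $\|V(h_j,h_k)\|_{L^\infty}\le \|h_j\|_{L^2}\|h_k\|_{L^2}=1$ via Cauchy–Schwarz). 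This gives \eqref{Laguerre,asymptotic,easy1} with right-hand side $1=4^0(j+2n+0)^0$.

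For the inductive step I would use the two standard contiguous relations for associated Laguerre polynomials, namely $\lambda L^{j+1}_n(\lambda)=(n+j+1)L^j_n(\lambda)-(n+1)L^j_{n+1}(\lambda)$ together with $L^{j+1}_{n-1}(\lambda)=L^{j+1}_n(\lambda)-L^j_n(\lambda)$ (or equivalently $\lambda L^{j+2}_{n-1}=\dots$), in order to express $\lambda\cdot\bigl[\lambda^{j/2}L^j_n(\lambda)\bigr]$ as a linear combination, with bounded coefficients after the appropriate normalization, of two Laguerre functions with parameters of comparable size. Concretely, writing $c=c'+1$, I would bound
\[
    \Bigl(\tfrac{n!}{(n+j)!}\Bigr)^{1/2}\lambda^{(j+c)/2}\bigl|L^j_n(\lambda)\bigr|e^{-\lambda/2}
    \le \lambda^{c'/2}\Bigl(\alpha\,|\ell^{j}_{n}(\lambda)|+\beta\,|\ell^{j}_{n}(\lambda)|\Bigr)\cdot(\text{normalization factors})
\]
(schematically), where the shift relations turn one power of $\lambda$ into a sum of two Laguerre functions at nearby indices, pick up a combinatorial factor of size $O(j+2n)$ from the $(n+j+1)$, $(n+1)$ coefficients, and get a factor of $2$ from the two terms; iterating $c$ times produces $(j+2n+c)^c$ for the polynomial growth and $2^c$ from the binary branching, while a careful accounting of the leftover powers of $\lambda$ against the parameter shifts contributes the remaining factor $2^c$, for a total of $4^c$. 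I would organize this as: (i) state the contiguous relations; (ii) prove the key one-step lemma that $\frac{n!}{(n+j)!}\max_\lambda \lambda^{j+c}(L^j_n)^2 e^{-\lambda}\le 4\,(j+2n+c)\cdot \max_{(j',n')}\frac{n'!}{(n'+j')!}\max_\lambda \lambda^{j'+c-1}(L^{j'}_{n'})^2 e^{-\lambda}$ with $j'+2n'\le j+2n$; (iii) iterate.

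The main obstacle I anticipate is bookkeeping the constants so that the clean bound $4^c(j+2n+c)^c$ survives the iteration: the contiguous relations naturally produce indices like $n+1$ and $j+1$, and one must check that at every step the quantity $j'+2n'$ does not increase (so that the factor $(j+2n+c)$ can be pulled out uniformly) and that no worse-than-linear growth in the index creeps in. A secondary, more routine point is the translation of \eqref{Laguerre,asymptotic,easy1} into \eqref{Laguerre,asymptotic,easy}: this is just the change of variables $\lambda=\tfrac{b}{2}|w|^2$ combined with the explicit formula from Theorem \ref{V,Hermit,Laguerre} expressing $V(h_j,h_k)$ in terms of $L^{|j-k|}_{\min(j,k)}$ (up to a phase and a Gaussian), so that $\bar w^c V(h_j,h_k)$ picks up exactly $|w|^c=(\tfrac{2}{b})^{c/2}\lambda^{c/2}$ and $n=\min(j,k)$, $j_{\mathrm{param}}=|j-k|$ give $j_{\mathrm{param}}+2n=j+k$; then \eqref{Laguerre,asymptotic,easy1} with $c\mapsto c$ yields precisely $\bigl(\tfrac{2}{\sqrt b}\bigr)^c(j+k+c)^{c/2}$ after taking square roots, with the $4^c$ becoming $2^c$ and combining with the $(\tfrac{2}{b})^{c/2}$ to give $(\tfrac2{\sqrt b})^c$. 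I would present \eqref{Laguerre,asymptotic,easy1} in full and remark that \eqref{Laguerre,asymptotic,easy} is its restatement under this dictionary.
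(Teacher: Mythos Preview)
Your overall strategy---induction on $c$ with the base case $c=0$ handled by Cauchy--Schwarz on $V(h_j,h_k)=\langle\beta(p,q)h_j,h_k\rangle$---is exactly the paper's. The one real difference is in the inductive step: you propose to work directly with Laguerre contiguous relations, while the paper stays on the Fourier--Wigner side throughout and uses the single commutation identity $[a,\beta(p,q)]=-\tfrac{b}{\sqrt2}\,\bar w\,\beta(p,q)$ together with $a^\dagger h_k=\sqrt{(k+1)b}\,h_{k+1}$, $a h_j=\sqrt{jb}\,h_{j-1}$ to obtain the clean two-term recursion
\[
    \bar w\,V(h_j,h_k)=-\tfrac{\sqrt2}{b}\Bigl(\sqrt{(k+1)b}\,V(h_j,h_{k+1})-\sqrt{jb}\,V(h_{j-1},h_k)\Bigr).
\]
This makes the bookkeeping you worry about disappear: the coefficients are explicit, the index $j+k$ moves by exactly $\pm1$, and the inequality $\tfrac{1}{\sqrt2}(\sqrt{k+1}+\sqrt j)\le\sqrt{j+k+1}$ closes the induction with the exact constant $(2/\sqrt b)^c(j+k+c)^{c/2}$. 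Your Laguerre-side argument is equivalent under the dictionary of Theorem~\ref{V,Hermit,Laguerre}, but the specific relation you quote, $\lambda L^{j+1}_n=(n+j+1)L^j_n-(n+1)L^j_{n+1}$, shifts the superscript the wrong way for absorbing $\lambda^{1/2}$ into $\lambda^{j/2}L^j_n$; you would need instead the relation that decreases the superscript (the Laguerre translation of the $a$-commutator above), and tracking the normalizations $\sqrt{n!/(n+j)!}$ through those shifts is precisely the bookkeeping the paper's operator approach sidesteps. Your translation between \eqref{Laguerre,asymptotic,easy1} and \eqref{Laguerre,asymptotic,easy} is correct.
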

\begin{proof}
We prove (\ref{Laguerre,asymptotic,easy}) by induction on $c$. First consider the basic case $c=0$, by Cauchy-Schwartz inequality, for $j,k\in\mathbb{N}$,
\[
    \left| V\left(h_{j},h_{k}\right)\right|=\left|\langle\beta(p,q)h_{j},h_{k}\rangle\right|\le \left\|\beta(p,q)h_{j}\right\|_{L^2}\left\|h_{k}\right\|_{L^2}=\left\|h_{j}\right\|_{L^2}\left\|h_{k}\right\|_{L^2}=1.
\]
Assume (\ref{Laguerre,asymptotic,easy}) holds for $c=n\in\mathbb{N}$. When $c=n+1$, using the following commutation relations,
\[
    \left[a,a^{\dagger}\right]=b,\quad \left[a,\beta(p,q)\right]=-\frac{b}{\sqrt{2}}\bar{w}\beta(p,q),
\]
we obtain
\begin{align*}
    \bar{w}^{n+1}V\left(h_{j},h_{k}\right)(p,q)&=-\frac{\sqrt{2}}{b}\bar{w}^{n}\left\langle [a,\beta(p,q)]h_{j},h_{k}\right\rangle\\
        &=-\frac{\sqrt{2}}{b}\left(\sqrt{(k+1)b}\,\bar{w}^{n}V\left(h_{j},h_{k+1}\right)-\sqrt{jb}\,\bar{w}^{n}V\left(h_{j-1},h_{k}\right)\right)(p,q).
\end{align*}
Using the induction assumption, $\left|\bar{w}^{n+1}V\left(h_{j},h_{k}\right)(p,q)\right|$
\begin{align*}
    &\le \frac{2^{n+1/2}}{b^{(n+1)/2}}\left(\sqrt{k+1}(j+k+1+n)^{n/2}+\sqrt{j}(j+k+n-1)^{n/2}\right)\\
    &\le \left(\frac{2}{\sqrt{b}}\right)^{n+1}(j+k+n+1)^{(n+1)/2}.
\end{align*}
Therefore (\ref{Laguerre,asymptotic,easy}) holds for all $c\in\mathbb{N}$.
\end{proof}

There is a more refined estimate than Lemma \ref{Laguerre,asymptotic,c0},
\begin{theorem}\label{Ilia,result}\cite{Krasikov05,Krasikov07}
Let $n\ge 1$, $\alpha>-1$, then
\[
    \frac{n!}{\mathcal{G}(n+\alpha+1)}\max_{\lambda\ge 0}\left(\lambda^{\alpha+1}e^{-\lambda}\left(L_{n}^{\alpha}\right)^2(\lambda)\right)<6n^{1/6}\sqrt{n+\alpha+1},
\]
where $\mathcal{G}$ denotes the gamma function
\[
    \mathcal{G}(z):=\int_{0}^{\infty}\lambda^{z-1}e^{-\lambda}\,d\lambda,\quad \mathcal{R}(z)>0.
\]
\end{theorem}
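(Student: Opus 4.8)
The plan is to reduce the statement to a pointwise bound on associated Laguerre polynomials that is available in the literature (Krasikov), and to convert the gamma-function normalization into the combinatorial normalization $n!/\mathcal{G}(n+\alpha+1)$ that appears in the statement. First I would recall the sharp global bound: for $n\ge 1$ and $\alpha>-1$ one has, uniformly in $\lambda\ge 0$,
\[
    \lambda^{\alpha}e^{-\lambda}\bigl(L_{n}^{\alpha}\bigr)^2(\lambda)\ \lesssim\ \frac{\mathcal{G}(n+\alpha+1)}{n!}\cdot\frac{n^{1/6}}{\sqrt{n+\alpha+1}},
\]
which is precisely the kind of estimate proved by Krasikov in \cite{Krasikov05,Krasikov07} via the Bessel-type asymptotics in the oscillatory region and exponential decay estimates in the two non-oscillatory regions (the ``Airy'' transition near $\lambda\sim 4n$ and the small-$\lambda$ region). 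The weight $\lambda^{\alpha+1}$ rather than $\lambda^{\alpha}$ costs one extra power of $\lambda$; since the maximum of $\lambda^{\alpha+1}e^{-\lambda}(L_n^\alpha)^2$ is attained for $\lambda$ of order $n$ (it cannot be attained near $\lambda=0$ once $\alpha>-1$, nor in the exponentially-decaying far tail), that extra factor is comparable to $n$ there, and one tracks the constant carefully to land on the explicit bound $6n^{1/6}\sqrt{n+\alpha+1}$.

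The key steps, in order, would be: (1) fix the normalization of $L_n^\alpha$ consistently with the definition $L^{\alpha}_{n}(\lambda)=\sum_{j=0}^{n}\binom{n+\alpha}{n-j}\frac{(-\lambda)^{j}}{j!}$ used in the paper, and note the orthogonality relation $\int_0^\infty \lambda^\alpha e^{-\lambda}L_m^\alpha(\lambda)L_n^\alpha(\lambda)\,d\lambda=\frac{\mathcal{G}(n+\alpha+1)}{n!}\delta_{mn}$, which explains why the factor $n!/\mathcal{G}(n+\alpha+1)$ is the natural $L^2$-normalizing constant; (2) quote the sharp pointwise estimate from Krasikov for the $L^2$-normalized polynomials $\widetilde{L}_n^\alpha:=\sqrt{n!/\mathcal{G}(n+\alpha+1)}\,L_n^\alpha$, namely $\sup_{\lambda\ge 0}\lambda^{\alpha}e^{-\lambda}\bigl(\widetilde{L}_n^\alpha\bigr)^2(\lambda)\lesssim n^{1/6}(n+\alpha+1)^{-1/2}$ with explicit constant; (3) multiply by $\lambda$ and locate the maximizer $\lambda^*$: a short calculus/asymptotic argument shows $\lambda^*\asymp n+\alpha+1$, and plugging this in gives the extra factor $\lambda^*$, hence the upper bound $C\,n^{1/6}\sqrt{n+\alpha+1}$; (4) verify numerically that the constant can be taken to be $6$ (this is the content of Krasikov's explicit constants; for small $n$ one checks finitely many cases directly, and for large $n$ the asymptotic constant is strictly below $6$).

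The main obstacle is step (3)–(4): keeping the constant explicit and equal to $6$ uniformly in both $n\ge 1$ and all $\alpha>-1$. The large-$n$ oscillatory region is governed by the sharp bound $|J_\alpha|\le$ (a constant close to $1$ after the standard normalization), but near the soft turning point $\lambda\sim 4n$ the Airy-type behavior contributes the $n^{1/6}$ factor with a constant that must be controlled, and one must check that the transition regions do not produce a larger constant than the bulk. This is exactly what Krasikov's papers are engineered to do, so the honest approach here is to cite \cite{Krasikov05,Krasikov07} for the sharp pointwise bound and then perform only the elementary reduction (multiplication by $\lambda$, identification of the maximizer, and a finite check for small $n$) in this paper; I do not expect to reprove the Bessel/Airy asymptotics from scratch.
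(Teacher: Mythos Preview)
The paper does not prove this theorem at all: it is quoted verbatim from Krasikov \cite{Krasikov05,Krasikov07} and used as a black box in the proof of Lemma~\ref{lemma,Laguerre,asympotic,c,not,good}. There is no proof environment following the statement, and none is intended.

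Your proposal therefore goes beyond what the paper does. That is fine as an exercise, but note that your reduction in steps (2)--(3) is slightly off: Krasikov's bound in \cite{Krasikov07} is already stated for the weight $\lambda^{\alpha+1}e^{-\lambda}$, not $\lambda^{\alpha}e^{-\lambda}$, so there is no need to multiply by an extra $\lambda$ and relocate the maximizer. The inequality as written in the theorem is exactly what Krasikov proves (with the explicit constant $6$), and the paper simply invokes it. If you want to align with the paper, the correct ``proof'' here is a one-line citation; if you want to go further and reproduce Krasikov's argument, you would follow his analysis of the oscillatory and Airy-transition regions directly for the $\lambda^{\alpha+1}$ weight rather than bootstrapping from the $\lambda^{\alpha}$ case.
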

Notice that Krasikov's result is for the case $c=1$ in Lemma \ref{Laguerre,asymptotic,c0}. In the case $c=1$, the upper bound in  (\ref{Laguerre,asymptotic,easy1}) is essentially $(j+n)$ for large $n$ and $j$. When considering the asymptotic behavior of  $\frac{n!}{\left(n+j\right)!}\max_{\lambda\ge 0}\lambda^{j+1}\left(L^j_{n}\right)^2(\lambda)e^{-\lambda}$ in terms of $j$ and $n$, Krasikov's result is sharper. If we interpolate Krasikov's result with Lemma \ref{Laguerre,asymptotic,c0}, we improve (\ref{Laguerre,asymptotic,easy1}) a little bit.

\begin{lemma}\label{lemma,Laguerre,asympotic,c,not,good}
Let $1\le c\le 2$,
\begin{equation}\label{Laguerre,asympotic,c,not,good}
    \frac{n!}{(n+j)!}\max_{\lambda\ge 0}\left(\lambda^{j+c}e^{-\lambda}\left(L_{n}^{j}\right)^2(\lambda)\right)\lesssim (1+n)^{(2-c)/6}(n+j+1)^{(3c-2)/2},\quad  j,n\in\mathbb{N},
\end{equation}
or equivalently,
\begin{equation}
    \left\||w|^{c}V\left(h_{j},h_{k}\right)(p,q)\right\|_{L^{\infty}}\lesssim \frac{1}{b^{c/2}}(1+k)^{(2-c)/12}(j+1)^{(3c-2)/4},
\end{equation}
where $j,k\in\mathbb{N}$, $j\ge k$ and $w=p+iq\in\mathbb{C}$.
\end{lemma}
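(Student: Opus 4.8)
The plan is to obtain \eqref{Laguerre,asympotic,c,not,good} by interpolating between the two endpoint bounds we already have, namely the case $c=1$ coming from Krasikov's Theorem \ref{Ilia,result} and the case $c=2$ coming from Lemma \ref{Laguerre,asymptotic,c0} (specialized to $c=2$, where \eqref{Laguerre,asymptotic,easy1} gives the bound $16(j+2n+2)^2 \lesssim (n+j+1)^2$). Write $c = (1-\theta)\cdot 1 + \theta\cdot 2 = 1+\theta$ with $\theta = c-1 \in [0,1]$. For a fixed $\lambda \ge 0$ we have the pointwise factorization
\[
    \lambda^{j+c} e^{-\lambda}\left(L_n^j\right)^2(\lambda) = \left(\lambda^{j+1}e^{-\lambda}\left(L_n^j\right)^2(\lambda)\right)^{1-\theta}\left(\lambda^{j+2}e^{-\lambda}\left(L_n^j\right)^2(\lambda)\right)^{\theta},
\]
so taking suprema and using $\max(fg)\le (\max f)(\max g)$ (here applied to nonnegative quantities) together with the two endpoint maximal bounds gives, after multiplying by $n!/(n+j)!$ raised to the appropriate split power $1 = (1-\theta)+\theta$,
\[
    \frac{n!}{(n+j)!}\max_{\lambda\ge 0}\lambda^{j+c}e^{-\lambda}\left(L_n^j\right)^2(\lambda) \lesssim \left[(1+n)^{1/6}(n+j+1)\right]^{1-\theta}\left[(n+j+1)^2\right]^{\theta}.
\]
Substituting $\theta = c-1$ collapses the right side to $(1+n)^{(2-c)/6}(n+j+1)^{(3c-2)/2}$, which is exactly \eqref{Laguerre,asympotic,c,not,good}. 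A small bookkeeping point: Theorem \ref{Ilia,result} is stated with $\mathcal{G}(n+\alpha+1)$ rather than $(n+j)!$; since $\alpha=j$ is a nonnegative integer here, $\mathcal{G}(n+j+1)=(n+j)!$, so the two normalizations agree and the $c=1$ endpoint reads $\frac{n!}{(n+j)!}\max_\lambda(\lambda^{j+1}e^{-\lambda}(L_n^j)^2)< 6 n^{1/6}\sqrt{n+j+1}\lesssim (1+n)^{1/6}(n+j+1)$, as used above.

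To get the equivalent reformulation in terms of $V(h_j,h_k)$, I would invoke Theorem \ref{V,Hermit,Laguerre} from the appendix, which (as already used in Lemma \ref{Laguerre,asymptotic,c0}) identifies $|w|^{2}$-weighted norms of $V(h_j,h_k)$ with $\lambda$-weighted associated Laguerre expressions: for $j\ge k$ the relevant polynomial is $L_{k}^{j-k}$, with $\alpha = j-k$ and the Laguerre degree equal to $k$, and the weight $\lambda$ is proportional to $\tfrac{b}{2}|w|^2$. Plugging $n=k$, $j\rightsquigarrow j-k$, and $\alpha+c \rightsquigarrow$ the appropriate exponent into \eqref{Laguerre,asympotic,c,not,good} and tracking the powers of $b$ from the change of variables $\lambda = \tfrac{b}{2}|w|^2$ produces $\||w|^c V(h_j,h_k)\|_{L^\infty}\lesssim b^{-c/2}(1+k)^{(2-c)/12}(j+1)^{(3c-2)/4}$; the $(j+1)$ in place of $(k+(j-k)+1)=(j+1)$ and the $(1+k)$ in place of $(1+n)=(1+k)$ match directly, and the square root on the left in passing from $\left(L^{j}_{n}\right)^2$ to $|V|$ halves all exponents, which accounts for $/12$ and $/4$ replacing $/6$ and $/2$.

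The main obstacle here is not conceptual but a matter of being careful with the dictionary between the two formulations: one must get right which index of $L_n^\alpha$ corresponds to which of $j,k$ (for $j\ge k$ it is $n=k$, $\alpha=j-k$), and one must correctly propagate the Jacobian factors $(b/2)^{\text{power}}$ from $\lambda=\tfrac b2|w|^2$ and the normalization constants of $h_j$ through Theorem \ref{V,Hermit,Laguerre}. A secondary subtlety is that the interpolation step uses $\max(fg)\le(\max f)(\max g)$ rather than an actual interpolation inequality — this is fine since everything is nonnegative and we are bounding, not computing, the maximum, but it should be stated explicitly so the reader does not expect Hölder on an $L^p$ space. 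Once these bookkeeping issues are handled, the proof is short; I would present the real-variable inequality \eqref{Laguerre,asympotic,c,not,good} in full and then remark that the $V(h_j,h_k)$ version follows by Theorem \ref{V,Hermit,Laguerre} exactly as in the proof of Lemma \ref{Laguerre,asymptotic,c0}.
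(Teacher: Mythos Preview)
Your approach---interpolate between the $c=1$ bound from Theorem \ref{Ilia,result} and the $c=2$ bound from Lemma \ref{Laguerre,asymptotic,c0} via the pointwise factorization $\lambda^{j+c}=(\lambda^{j+1})^{1-\theta}(\lambda^{j+2})^\theta$---is exactly the paper's approach. However, there are two slips.

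First, you unnecessarily weaken the $c=1$ endpoint from $\sqrt{n+j+1}$ to $(n+j+1)$ before interpolating, and then the algebra no longer closes. With the endpoints you actually wrote, the interpolated bound is
\[
    \bigl[(1+n)^{1/6}(n+j+1)\bigr]^{1-\theta}\bigl[(n+j+1)^{2}\bigr]^{\theta}=(1+n)^{(2-c)/6}(n+j+1)^{c},
\]
and for $c\in[1,2)$ one has $c>(3c-2)/2$, so this does \emph{not} collapse to the stated right-hand side of \eqref{Laguerre,asympotic,c,not,good}. Keep the sharp endpoint $\frac{n!}{(n+j)!}\max_\lambda(\cdots)\lesssim (1+n)^{1/6}(n+j+1)^{1/2}$; then the interpolation gives
\[
    (1+n)^{(1-\theta)/6}(n+j+1)^{(1-\theta)/2+2\theta}=(1+n)^{(2-c)/6}(n+j+1)^{(3c-2)/2},
\]
which is what you want.

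Second, Theorem \ref{Ilia,result} is stated only for $n\ge 1$, so the $c=1$ endpoint at $n=0$ is not covered by your bookkeeping remark about $\mathcal{G}(n+j+1)=(n+j)!$. The paper handles $n=0$ by a direct Stirling computation:
\[
    \frac{1}{j!}\max_{\lambda\ge 0}\lambda^{j+1}e^{-\lambda}\bigl(L_0^j\bigr)^2(\lambda)=\frac{(j+1)^{j+1}e^{-(j+1)}}{j!}\lesssim \sqrt{j+1},
\]
after which the uniform $c=1$ bound $\lesssim (1+n)^{1/6}(n+j+1)^{1/2}$ holds for all $n\in\mathbb{N}$. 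With these two fixes your proof is complete and matches the paper's.
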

\begin{proof}
Two endpoint cases of (\ref{Laguerre,asympotic,c,not,good}) are $c=1$ and $c=2$. 

The case $c=2$ is given by taking $c=2$ in (\ref{Laguerre,asymptotic,easy1}).

The case $c=1$ is almost in Theorem \ref{Ilia,result} except for $n=0$. When $n=0$, by Stirling formula,
\[
    \frac{1}{(j)!}\max_{\lambda\ge 0}\left(\lambda^{j+1}e^{-\lambda}\left(L_{0}^{j}\right)^2(\lambda)\right)=\frac{(j+1)^{j+1}e^{-(j+1)}}{j!}\lesssim \sqrt{j+1}.
\]
Combining it with Theorem \ref{Ilia,result},
\[
    \frac{n!}{(n+j)!}\max_{\lambda\ge 0}\left(\lambda^{j+1}e^{-\lambda}\left(L_{n}^{j}\right)^2(\lambda)\right)\lesssim (1+n)^{1/6}\sqrt{n+j+1},\quad j,n\in\mathbb{N}.
\]

For any fixed $\lambda>0$, vary the exponent $\alpha$ in $\lambda^{j+1+\alpha}e^{-\lambda}\left(L_{n}^{j}\right)^2(\lambda)$, where $0\le \mathcal{R}(\alpha)\le 1$. Interpolating the two endpoint cases, (\ref{Laguerre,asympotic,c,not,good}) holds.
\end{proof}
\begin{remark}
Lemma \ref{lemma,Laguerre,asympotic,c,not,good} is stated for $1\le c\le2$. Because this is what we need in the present case. Nevertheless, using Krasikov's result, we can improve (\ref{Laguerre,asymptotic,easy1}) for any $c\ge 1$.
\end{remark}

\begin{remark}\label{associated,Laguerre,not,optimal}
The upper bound in Lemma \ref{lemma,Laguerre,asympotic,c,not,good} is not optimal. Consider two extreme cases $j=0$ and $n=0$ of 
\begin{equation}\label{asymptotic,Laguerre}
    \frac{\sqrt{n!}}{\sqrt{(n+j)!}}\max_{\lambda\ge 0}e^{-\lambda/2}\lambda^{(c+j)/2}\left|L_{n}^{j}\right|(\lambda)\sim\left\||w|^c V(h_{n},h_{n+j})(w)\right\|_{L^{\infty}},\quad c\ge\frac{1}{2}.
\end{equation}  

\cite[Theorem 8.91.2, p.~241]{Gabor75} says for any $a>0$ and any fixed $j\in\mathbb{N}$,
\begin{equation*}
    \sup_{\lambda\ge a} e^{-\lambda/2}|\lambda|^{(c+j)/2}\left|L_{n}^{j,c}\right|(\lambda) \sim_{j} \langle n\rangle^{j/2+c/2-1/3},\quad c\ge 1/2.  
\end{equation*}
Taking $j=0$, one can remove the constraint $\lambda\ge a>0$ and show that $\max_{\lambda\ge 0} e^{-\lambda/2}|\lambda|^{c/2}\left|L_{n}\right|(\lambda) \sim \langle n\rangle^{c/2-1/3},\ c\ge 1/2$. It gives a precise description of the asymptotic behavior of (\ref{asymptotic,Laguerre}) for case $j=0$.

For the case $n=0$ of (\ref{asymptotic,Laguerre}), by Stirling formula,
\[
    \frac{1}{\sqrt{j!}}\max_{\lambda\ge0}e^{-\lambda/2}\lambda^{(c+j)/2}\left|L^{j}_{0}\right|(\lambda)=\frac{e^{-(c+j)/2}(c+j)^{(c+j)/2}}{\sqrt{j!}}\lesssim_{c} \langle j\rangle^{c/2-1/4}.
\]

``Interpolating'' the two cases, we conjecture  
\begin{equation}\label{asymptotic,Lagurre,precise}
    \left\||w|^c V(h_{n},h_{n+j})(w)\right\|_{L^{\infty}}\lesssim_{c} \langle n\rangle^{-1/12}\langle n+j\rangle^{c/2-1/4},\quad c\ge 1/2,\ j,n\in\mathbb{N}.
\end{equation}
When $c=1$, $n\ge 50$ and $j\ge 11$, by \cite[Theorem 2]{KraZar2010},  (\ref{asymptotic,Lagurre,precise}) holds. For other cases, our numerical data, for example Figure \ref{numerical,evidence}, strongly suggests that  (\ref{asymptotic,Lagurre,precise}) might hold. 
\begin{figure}[htb]
    \centering
    \begin{subfigure}{.5\textwidth}
        \centering
        \includegraphics[width=0.9\linewidth]{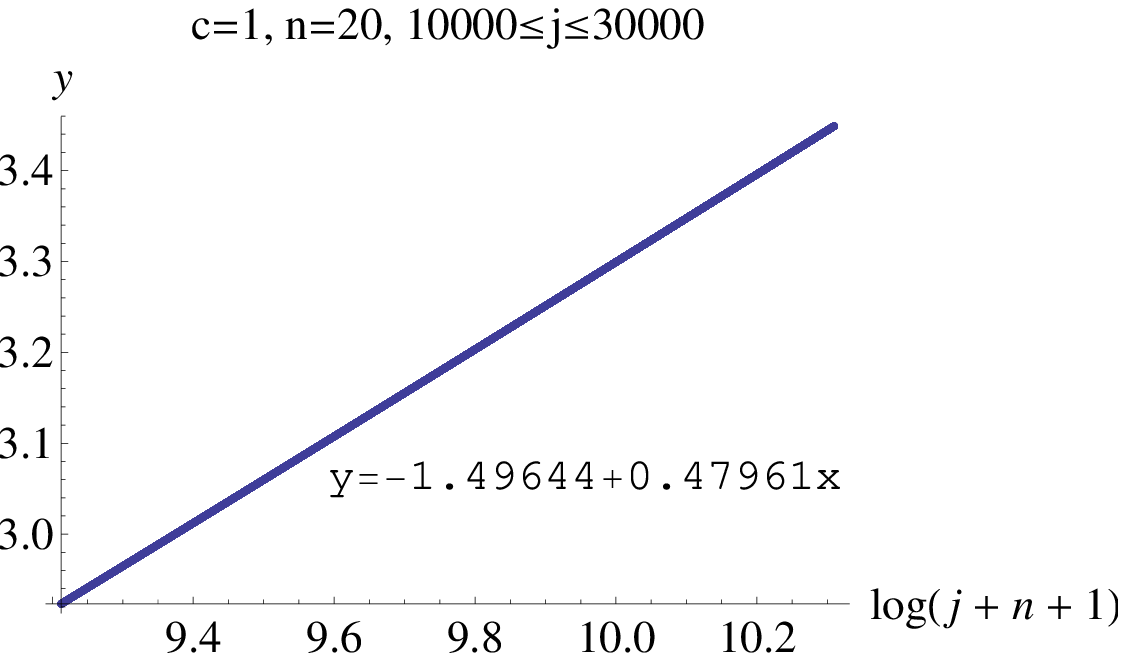}
        \caption{}
    \end{subfigure}%
    \begin{subfigure}{.5\textwidth}
        \centering
        \includegraphics[width=0.9\linewidth]{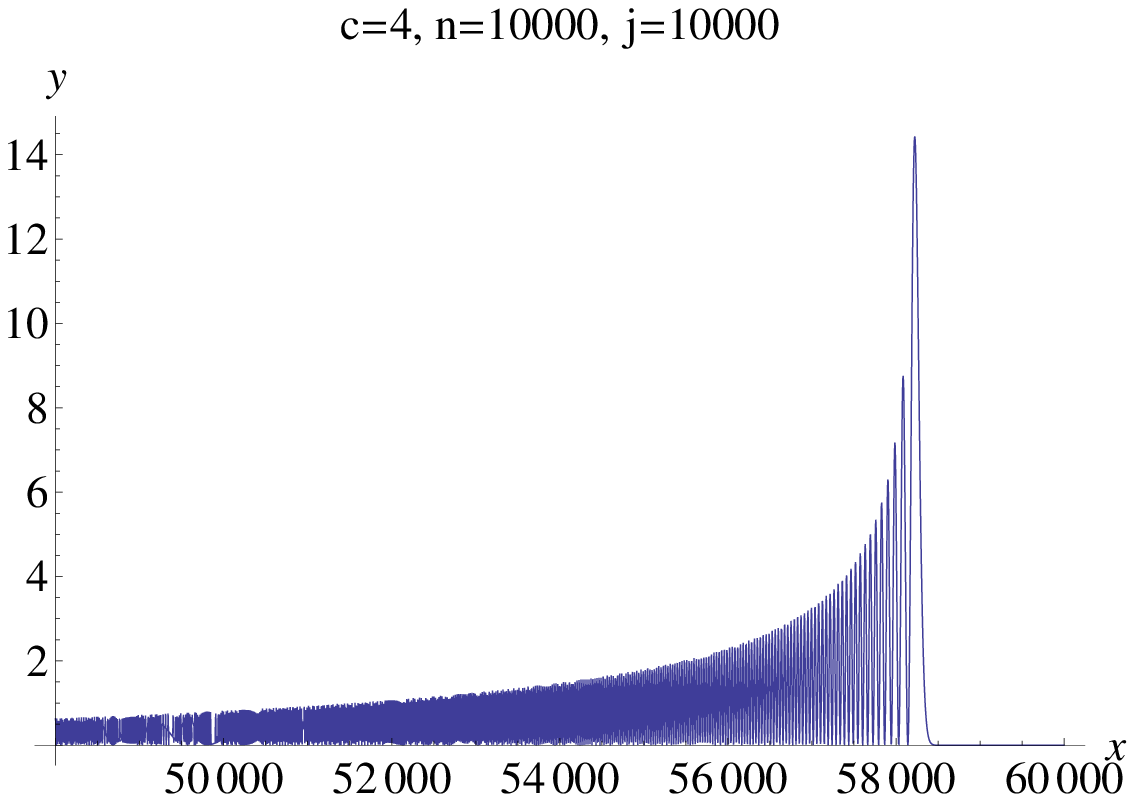}
        \caption{}
    \end{subfigure}
    \caption{Numerical Calculations}
    \medskip
    \small
    {In (A), to confirm the case $c=1$, $y=\log\left(\frac{n!}{(n+j)!}\max_{\lambda\ge0}e^{-\lambda}\lambda^{1+j}\left|L^{j}_{0}\right|^2(\lambda)\right)$ and $x=\log(j+n+1)$, where $n=20$, the data almost lies on a line. For a larger range of $j$, the slope is close to $0.5$. In (B), $y=\frac{n!(1+n)^{1/6}}{(n+j)!(n+j+1)^{c-1/2}}e^{-\lambda}\lambda^{(c+j)}\left|L_{n}^{j}\right|^2(\lambda)$. For fixed $c$, when we vary $n$ and $j$, from our numerical observation, $y$ is uniformly bounded. If we increase $c$, the bound increases.
    \label{numerical,evidence}}
\end{figure}
\end{remark}

Now we are ready to establish the collapsing estimate Theorem \ref{pauli,collapsing,estimate}.
\begin{proof}
By the Parseval's theorem on $L^2([0,\pi/b])$,
\begin{align}
    \left\||\nabla_{x}|^c \gamma(t,x,x)\right\|_{L^2_{[0,\pi/b]}L^2_{x}}^2&=\left\||\nabla_{x}|^c\sum_{j,k\in\mathbb{N}}e^{-2b(j-k)it}\gamma_{jk}(x,x)\right\|_{L^2_{[0,\pi/b]}L^2_{x}}^2 \nonumber\\
        &=\frac{\pi}{b}\sum_{m\in\mathbb{Z}}\left\||\nabla_x|^c\sum_{\substack{j-k=m\\ j,k\in\mathbb{N}}}\gamma_{jk}(x,x)\right\|_{L^2_x}^2. \label{collapsing,term,c}
\end{align}
We will express (\ref{collapsing,term,c}) by the Fourier transform of $|\nabla_{x}|^c\gamma_{jk}(x,x)$. Using the expression (\ref{decompostion,initial}),
\begin{align*}
    \left(\widehat{|\nabla_{x}|^c\gamma_{jk}(x,x)}\right)(\xi)&=\frac{1}{2\pi}\int_{\mathbb{R}^4}d\tilde{x}d\tilde{y}\,|\xi|^c\gamma_{jk}(\tilde{x},\tilde{y})\left(e^{-i\xi\tilde{x}}W(h_{j})(\xi)\right)*\left(e^{-i\xi\tilde{y}}W(h_{k})(\xi)\right)\\
        &\quad*\delta\left(\xi+\frac{b}{2}J(\tilde{x}-\tilde{y})\right),
\end{align*} 
where $W(h_{j})=W(h_{j},h_{j})$.
To compute $\left(e^{-i\xi\tilde{x}}W(h_{j})(\xi)\right)*\left(e^{-i\xi\tilde{y}}W(h_{k})(\xi)\right)$, using tools from Appendix \ref{transform}
\begin{align*}
    & \left(e^{-i\xi\tilde{x}}W(h_{j})(\xi)\right)*\left(e^{-i\xi\tilde{y}}W(h_{k})(\xi)\right)\\
    &=\int_{\mathbb{R}^2}d\tilde{\xi}\, e^{-i\left(\xi-\tilde{\xi}\right)\tilde{x}}W(h_j)\left(\xi-\tilde{\xi}\right)e^{-i\tilde{\xi}\tilde{y}}W(h_{k})\left(\tilde{\xi}\right)\\
    &=\int_{\mathbb{R}^2}d\tilde{\xi}\, e^{-i\left(\xi-\tilde{\xi}\right)\tilde{x}}W(h_j)\left(\tilde{\xi}-\xi\right)e^{-i\tilde{\xi}\tilde{y}}W(h_{k})\left(\tilde{\xi}\right)\\
    &=e^{-i\xi\tilde{x}/2}\int_{\mathbb{R}^2}d\tilde{\xi}\,W\left(\beta\left(\frac{\tilde{x}}{2}-\frac{J\xi}{b}\right)h_{j} ,\beta\left(-\frac{\tilde{x}}{2}-\frac{J\xi}{b}\right)h_j\right)(\tilde{\xi}) \overline{W\left(\beta\left(\frac{\tilde{y}}{2}\right)h_{k},\beta\left(-\frac{\tilde{y}}{2}\right)h_k\right)}(\tilde{\xi})\\
    &=\frac{2\pi e^{-i\xi\tilde{x}/2}}{b}\left\langle \beta\left(\frac{\tilde{x}}{2}-\frac{J\xi}{b}\right)h_{j}, \beta\left(\frac{\tilde{y}}{2}\right)h_{k} \right\rangle  \left\langle \beta\left(-\frac{\tilde{y}}{2}\right)h_k, \beta\left(-\frac{\tilde{x}}{2}-\frac{J\xi}{b}\right)h_j \right\rangle\\
    &=\frac{2\pi e^{-i\xi\tilde{x}/2}}{b}\left\langle \beta\left(-\frac{\tilde{y}}{2}\right)\beta\left(\frac{\tilde{x}}{2}-\frac{J\xi}{b}\right)h_{j}, h_{k} \right\rangle  \left\langle \beta\left(\frac{\tilde{x}}{2}+\frac{J\xi}{b}\right)\beta\left(-\frac{\tilde{y}}{2}\right)h_k, h_j \right\rangle\\
    &=\frac{2\pi e^{-i\left(\tilde{x}+\tilde{y}\right)\xi}}{b}V\left(h_{j},h_{k}\right)\left(\frac{\tilde{x}-\tilde{y}}{2}-\frac{J\xi}{b}\right)V\left(h_{k},h_{j}\right)\left(\frac{\tilde{x}-\tilde{y}}{2}+\frac{J\xi}{b}\right).
\end{align*}
Then
\begin{align*}
    \left(\widehat{|\nabla_{x}|^c\gamma_{jk}(x,x)}\right)(\xi)&=\frac{1}{b}\int_{\mathbb{R}^4}d\tilde{x}d\tilde{y}\,|\xi|^c\gamma_{jk}(\tilde{x},\tilde{y})\exp\left(-\frac{i}{2}\left[(\tilde{x}+\tilde{y})\xi+\frac{b}{2}\,\Omega(\tilde{x}+\tilde{y},\tilde{x}-\tilde{y})\right]\right)\times\\
    &\quad V(h_{j},h_{k})\left(\tilde{x}-\tilde{y}-\frac{J\xi}{b}\right)V(h_{k},h_{j})\left(\frac{J\xi}{b}\right).
\end{align*}

Next estimate (\ref{collapsing,term,c}), using the Fourier transform on $\tilde{x}+\tilde{y}$ and the Minkowski inequality,
\begin{align*}
    (\ref{collapsing,term,c})& \lesssim b^{-3}\sum_{m\in\mathbb{Z}}\left(\sum_{\substack{j-k=m\\ j,k\in\mathbb{N}}} \left\|\int_{\mathbb{R}^2}d(\tilde{x}-\tilde{y})\, |\xi|^c\mathcal{F}_{\tilde{x}+\tilde{y}}\left(\gamma_{jk}\right)\left(\frac{1}{2}\left(\xi+\frac{bJ(\tilde{x}-\tilde{y})}{2}\right),\tilde{x}-\tilde{y}\right)\times\right.\right.\\
        &\quad \left.\left.V(h_{j},h_{k})\left(\tilde{x}-\tilde{y}-\frac{J\xi}{b}\right)V(h_{k},h_{j})\left(\frac{J\xi}{b}\right)\right\|_{L^2_{\xi}}\right)^2\\
    &\lesssim b^{-3}\sum_{m\in\mathbb{Z}}\left(\sum_{\substack{j-k=m\\ j,k\in\mathbb{N}}} \left\|V(h_{j},h_{k})\right\|_{L^2_{\mathbb{R}^2}}\|\gamma_{jk}\|_{L^2_{\mathbb{R}^4}}\sup_{\xi\in\mathbb{R}^2}|\xi|^c\left|V(h_{k},h_{j})\right|\left(\frac{J\xi}{b}\right) \right)^2\\
    &\qquad (\hbox{Cauchy-Schwartz inequality})\\
    &\lesssim b^{-4}\sup_{m\in\mathbb{Z}}\left(\sum_{\substack{j-k=m\\ j,k\in\mathbb{N}}}\frac{1}{\langle 2bj\rangle^{s}\langle 2bk\rangle^s}\sup_{\xi\in\mathbb{R}^2}|\xi|^{2c}\left|V(h_{j},h_{k})\right|^2\left(\frac{J\xi}{b}\right)\right)\sum_{j,k\in\mathbb{N}}\langle 2bj\rangle^{s}\langle 2bk\rangle^{s}\|\gamma_{jk}\|_{L^2_{\mathbb{R}^4}}^2.\\
    &\qquad \left(\hbox{since }\left\|V(h_{j},h_{k})\right\|_{L^2_{\mathbb{R}^2}}^2=\frac{2\pi}{b}\langle h_{k},h_{k}\rangle\langle h_{j},h_{j}\rangle=\frac{2\pi}{b}\right)
\end{align*}
The estimate (\ref{pauli,collapsing,estimate,case1}) reduces to show
\[
    b^{-4}\sup_{m\in\mathbb{N}}\left(\sum_{\substack{j-k=m\\ j,k\in\mathbb{N}}}\frac{1}{\langle 2bj\rangle^{s}\langle 2bk\rangle^s}\sup_{\xi\in\mathbb{R}^2}|\xi|^{2c}\left|V(h_{j},h_{k})\right|^2\left(\frac{J\xi}{b}\right)\right)<\infty.
\]

Taking $c=0$, by Lemma \ref{Laguerre,asymptotic,c0}, for any $m\in\mathbb{N}$,
\begin{align*}
    \sum_{\substack{j-k=m\\ j,k\in\mathbb{N}}}\frac{1}{\langle 2bj\rangle^{s}\langle 2bk\rangle^s}\left\|V(h_{j},h_{k})\left(\frac{J\xi}{b}\right)\right\|_{L^{\infty}}^2\le \sum_{\substack{j-k=m\\ k\in\mathbb{N}}} \frac{1}{\langle 2bj\rangle^{s}\langle 2bk\rangle^{s}}\le \sum_{k\in\mathbb{N}}\frac{1}{\langle 2bk\rangle^{2s}},
\end{align*}
which is finite if $s>1/2$. Taking $ 1\le c\le 2$, by Lemma \ref{lemma,Laguerre,asympotic,c,not,good},
\begin{align*}
    \sum_{\substack{j-k=m\\ j,k\in\mathbb{N}}}\frac{1}{\langle 2bj\rangle^{s}\langle 2bk\rangle^s}\left\||\xi|^cV(h_{j},h_{k})\left(\frac{J\xi}{b}\right)\right\|_{L^{\infty}}^2 &\lesssim \sum_{\substack{j-k=m\\ k\in\mathbb{N}}} \frac{b^{c}\langle k\rangle^{(2-c)/6}\langle j\rangle^{(3c-2)/2}}{\langle 2bj\rangle^{s}\langle 2bk\rangle^{s}}\\
    &\lesssim  \frac{1}{b^{2s-c}}\sum_{k\in\mathbb{N}}\frac{1}{\langle k\rangle^{2s-4c/3+2/3}},
\end{align*}
which is finite if $2s-4c/3+2/3>1$. Setting $s=1$, we get $1 \le c< 5/4$.

Combining the low frequency case $c=0$ and the high frequency case $1 \le c< 5/4$ yields the estimate (\ref{pauli,collapsing,estimate,case1}).
\end{proof}

\section{Well-Posedness of the System}\label{main,sec}
Before showing the local well-posedness result Theorem \ref{pauli,local,wellposed}, we discuss Equation (\ref{hatree,density,form,main}) in a case other than Equation (\ref{equation,pertubation}) to demonstrate that $[\rho_{Q}*v,_{\phi}]$ in Equation $(\ref{equation,pertubation})$ is a trouble term. Equation (\ref{hatree,density,form,main}) is well-posed in several spaces. The possible low regularity for the initial data when we can obtain a local well-posedness result is
\begin{equation}\label{unperturbed,strong}
    \left\| H_{hx}^{1/8+\epsilon} H_{hy}^{1/8+\epsilon}\gamma_{0}(x,y)\right\|_{L^2_{x}L^2_{y}}<\infty,\quad x,y\in\mathbb{R}^2,\quad \hbox{for arbitrary}\ \epsilon>0,
\end{equation}
where the norm is 
\[
    \left\| H_{h}^{s/2}f\right\|_{L^{2}}=\left\||\nabla|^{s}f\right\|_{L^2}+\left\||x|^{s}f\right\|_{L^2},\quad s\ge0,\ f\in L^{2}(\mathbb{R}^2).
\]
For the initial data (\ref{unperturbed,strong}), we acquire the following result. 

\begin{theorem}\label{local,low,regularity,pauli}
Consider Equation (\ref{hatree,density,form,main}) and suppose the initial condition $\gamma_{0}$ satisfies (\ref{unperturbed,strong}). Then Equation (\ref{hatree,density,form,main}) has a mild solution for sufficiently short time $T$ in the Banach $\mathbf{N}_{HT}$, where the norm is defined as
\begin{equation}\label{low,regularity,space}
    \|\gamma\|_{\mathbf{N}_{HT}}:=\left\|H_{hx}^{1/8+\epsilon}H_{hy}^{1/8+\epsilon}\gamma(t,x,y)\right\|_{L^{\infty}_{I_{T}}L^2_{x}L^2_{y}}+\left\||\nabla|^{1/2+2\epsilon}\rho_{\gamma}(t,x)\right\|_{L^2_{I_{T}}L^2_{x}},
\end{equation}
where $I_{T}=[0,T]$ and the $\epsilon$ is the same in (\ref{unperturbed,strong}).
\end{theorem}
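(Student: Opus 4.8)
The plan is to solve (\ref{hatree,density,form,main}) by a contraction argument for the Duhamel map
\[
    \Phi(\gamma)(t):=e^{-it(H_{x}-\bar{H}_{y})}\gamma_{0}-i\int_{0}^{t}e^{-i(t-\tau)(H_{x}-\bar{H}_{y})}\bigl[\rho_{\gamma}*v,\gamma\bigr](\tau)\,d\tau
\]
on a small closed ball in a working space slightly larger than $\mathbf{N}_{HT}$: besides the two norms of (\ref{low,regularity,space}) I carry along the $H_{h}$-weighted Strichartz norms $\|H_{hx}^{1/8+\epsilon}H_{hy}^{1/8+\epsilon}\gamma\|_{(L^{q}_{I_{T}}L^{r}_{x}L^{2}_{y})\cap(L^{q}_{I_{T}}L^{r}_{y}L^{2}_{x})}$ over the admissible pairs (\ref{admissible,pairs}); a fixed point in this space then lies in $\mathbf{N}_{HT}$. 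I use throughout that (\ref{unperturbed,strong}) says $H_{hx}^{1/8+\epsilon}H_{hy}^{1/8+\epsilon}\gamma_{0}\in L^{2}_{x}L^{2}_{y}$ and that $v\in L^{1}(\mathbb{R}^{2})$.

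For the linear term: since the three summands in the decomposition $H=H_{h}+H_{r}-b$ of (\ref{pauli,operator,decompose}) mutually commute, $e^{-iHt}$ commutes with $H_{h}$ and with $H_{r}$; hence the metaplectic formula (\ref{constant,sol}), the $L^{1}\to L^{\infty}$ dispersive bound it yields, and the abstract Strichartz argument of Corollary~\ref{strichartz,scalar,long,time} and Proposition~\ref{Strichartz,homogeneous} all carry the operators $H_{hx}^{1/8+\epsilon},H_{hy}^{1/8+\epsilon}$ along untouched, giving the $H_{h}$-weighted energy and Strichartz estimates for $e^{-it(H_{x}-\bar{H}_{y})}\gamma_{0}$ with right-hand side $\|H_{hx}^{1/8+\epsilon}H_{hy}^{1/8+\epsilon}\gamma_{0}\|_{L^{2}_{x}L^{2}_{y}}$. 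The essential extra input is a collapsing estimate with Hermite weights,
\[
    \bigl\||\nabla_{x}|^{1/2+2\epsilon}\bigl(e^{-it(H_{x}-\bar{H}_{y})}\gamma_{0}\bigr)(x,x)\bigr\|_{L^{2}_{I_{T}}L^{2}_{x}}\lesssim_{b,\epsilon,T}\bigl\|H_{hx}^{1/8+\epsilon}H_{hy}^{1/8+\epsilon}\gamma_{0}\bigr\|_{L^{2}_{x}L^{2}_{y}},
\]
which is \emph{not} a consequence of Theorem~\ref{pauli,collapsing,estimate} (it is false with $\langle H\rangle$ in place of $H_{h}$): it must use the spatial decay that $H_{h}$, and not $H$, encodes. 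I would prove it either along the lines of Theorem~\ref{pauli,collapsing,estimate} — decomposing $\gamma_{0}$ in the special Hermite basis $\{V(h_{j},h_{k})\}$, which diagonalizes $H$ and $H_{h}$ simultaneously in each variable, taking the spatial Fourier transform of the diagonal restriction, and summing with the associated-Laguerre bounds of Lemmas~\ref{Laguerre,asymptotic,c0} and~\ref{lemma,Laguerre,asympotic,c,not,good} (the Hermite weight being what makes the sums over the index unseen by the time frequency converge) — or, more transparently, by conjugating $e^{-iHt}=e^{ibt}e^{-iH_{h}t}e^{-iH_{r}t}$, discarding the rotation $e^{-iH_{r}t}$ (harmless for $L^{p}$ and for $|\nabla|$), applying the lens transform to pass to the free Schr\"odinger flow, and invoking the classical weighted bilinear $L^{2}_{t,x}$ estimate there. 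The retarded versions of both estimates, for the Duhamel integral, then follow from the Christ--Kiselev lemma and Minkowski's inequality, reducing the bound on the nonlinear part of $\Phi(\gamma)$ to a bound on $\|H_{hx}^{1/8+\epsilon}H_{hy}^{1/8+\epsilon}[\rho_{\gamma}*v,\gamma]\|_{L^{1}_{I_{T}}L^{2}_{x}L^{2}_{y}}$.

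For the nonlinear term: since $\rho_{\gamma}*v$ acts by multiplication, $[\rho_{\gamma}*v,\gamma](x,y)=\bigl((\rho_{\gamma}*v)(x)-(\rho_{\gamma}*v)(y)\bigr)\gamma(x,y)$, so by the $x\leftrightarrow y$ symmetry it suffices to control $H_{hx}^{1/8+\epsilon}H_{hy}^{1/8+\epsilon}\bigl[(\rho_{\gamma}*v)(x)\gamma(x,y)\bigr]$ in $L^{2}_{x}L^{2}_{y}$. The operator $H_{hy}^{1/8+\epsilon}$ passes onto $\gamma$ directly; $H_{hx}^{1/8+\epsilon}$ is distributed over the product by a Leibniz rule in the Hermite--Sobolev scale — the delicate point being that the potential part of $H_{h}=-\Delta+\frac{b^{2}|x|^{2}}{4}$ commutes with the multiplier $\rho_{\gamma}*v$, so that only a fractional commutator of (essentially) $|\nabla_{x}|^{1/4+2\epsilon}$ with $\rho_{\gamma}*v$ survives, of Coifman--Meyer type and requiring no decay on $v$. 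One then estimates $\rho_{\gamma}*v$ by Young's inequality (so that, because $v\in L^{1}$, the $\dot H^{1/2+2\epsilon}$-control furnished by the collapsing norm $\||\nabla|^{1/2+2\epsilon}\rho_{\gamma}\|_{L^{2}_{I_{T}}L^{2}_{x}}$ transfers to $\rho_{\gamma}*v$) together with Sobolev embedding, and the remaining $\gamma$-factor by the $H_{h}$-weighted Strichartz norms with Lebesgue exponents dual to an admissible pair of (\ref{admissible,pairs}); a H\"older inequality in time pairs the $L^{2}_{I_{T}}$ collapsing norm against the Strichartz norm and produces a factor $T^{\delta}$ with $\delta>0$. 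This gives, in the working norm, $\|\Phi(\gamma)\|\le C\|H_{hx}^{1/8+\epsilon}H_{hy}^{1/8+\epsilon}\gamma_{0}\|_{L^{2}_{x}L^{2}_{y}}+CT^{\delta}\|\gamma\|^{2}$, and the analogous difference estimate (using bilinearity of the commutator in $\gamma$ and of $\rho$ in its two slots) yields a contraction for $T$ small; the fixed point is the desired mild solution, which in particular belongs to $\mathbf{N}_{HT}$.

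The hard part is the collapsing estimate with the weak Hermite weight $H_{h}^{1/8+\epsilon}$: one must squeeze out of the decay part of $H_{h}$ exactly the summability that is unavailable from $\langle H\rangle$-control (so that, unlike in Theorem~\ref{pauli,collapsing,estimate}, it cannot be gotten by interpolation), and the exponents $1/8+\epsilon$ and $1/2+2\epsilon$ are precisely those for which this estimate and the subsequent Hermite--Sobolev Leibniz estimate both just close in two dimensions. It is the failure of the Leibniz step when the operator $\gamma$ in the commutator is replaced by the fixed, non-$L^{2}$ kernel $\bar{\Pi}_{\phi}(x,y)=\phi(|x-y|)e^{-ib\Omega(x,y)/2}$ — which lies in no $H_{h}$-weighted space — that makes $[\rho_{Q}*v,\bar{\Pi}_{\phi}]$ in Equation~(\ref{equation,pertubation}) the trouble term this theorem is designed to expose.
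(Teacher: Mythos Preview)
Your proposal is correct and matches the paper's approach: the paper proves the Hermite-weighted collapsing estimate exactly via your second route---factoring $e^{-iHt}=e^{ibt}e^{-iH_{h}t}e^{-iH_{r}t}$, discarding the rotation (which is harmless for $|\nabla|^{s}$ and commutes with the diagonal restriction), then applying the lens transform to reduce to the known Laplacian collapsing estimate of \cite{GM17,CHP2017}---and then defers the contraction argument itself to the scheme of \cite{GM17}. One minor point you pass over: the lens transform produces a time weight $\langle\tan bt\rangle^{-1/2-\epsilon}$ on the collapsing side (this is the paper's Proposition~\ref{collapsing,pauli,hermite}), which is harmless for short time $T<\pi/(2b)$ but should be mentioned.
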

\begin{remark}
Notice that the initial condition only requires that $\gamma_{0}$ is a Hilbert-Schmidt operator. It is not necessarily of trace class.
\end{remark}

In order to use the technique in \cite[Section 5, Section 6]{GM17} \footnote{The case studied in \cite{GM17} is in three dimension. However we can modify the argument for our two dimensional problem Equation (\ref{hatree,density,form,main}). Some steps in \cite{GM17} need minor modification, yet the main idea is the same.} to prove Theorem \ref{local,low,regularity,pauli}, we need another version of the collapsing estimate
\begin{proposition}\label{collapsing,pauli,hermite}
Suppose $\gamma(t,x,y)=e^{-i(H_{x}-\bar{H}_{y})}\gamma_{0}(x,y)$ is the solution to the linear equation
\begin{align}\label{homogeneous,pauli,1}
    \left\{\begin{array}{l}
        i\,\partial_{t}\gamma=\left[H,\gamma\right],   \\
        \gamma(0,x,y)=\gamma_{0}(x,y)\in L^2_{x}L^2_{y},  
    \end{array}\right.\quad x, y\in\mathbb{R}^2,
\end{align}
the collapsing term $\rho_{\gamma}(t,x)=\gamma(t,x,x)$ satisfies
\begin{equation}\label{pauli,lens,estimate}
    \left\|\langle\tan bt\rangle^{-1/2-\epsilon}|\nabla_{x}|^{1/2+2\epsilon} \rho_{\gamma}(t,x)\right\|_{L^2_{[-\pi/2b,\pi/2b]}L^2_{x}}\lesssim_{\epsilon}\left\|\langle \nabla_x\rangle^{1/4+\epsilon}\langle \nabla_{y}\rangle^{1/4+\epsilon}\gamma_{0}(x,y)\right\|_{L^2_{x}L^2_{y}},
\end{equation}
where $\epsilon$ is any arbitrary small positive number.
\end{proposition}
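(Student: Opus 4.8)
\textbf{Proof proposal.} The plan is to imitate the proof of Theorem~\ref{pauli,collapsing,estimate}, but to feed in the Mehler-type kernel (\ref{constant,sol}) in place of the spectral series $\sum e^{-2b(j-k)it}\gamma_{jk}$, so that the time variable enters through $\tan bt$ (which is exactly why the weight $\langle\tan bt\rangle^{-1/2-\epsilon}$ and the interval $[-\pi/2b,\pi/2b]$ show up). Conceptually this is the same as first peeling off the commuting rotation $e^{-iH_{r}t}$ in the decomposition $H=H_{h}+H_{r}-b$ (an isometric rotation of $\mathbb{R}^2$ applied to both arguments, hence harmless for $L^2$- and $|\nabla|^{s}$-based norms) and then applying the lens transform to $e^{-iH_{h}t}$, which turns $\rho_{\gamma}(t,x)$ into $(\cos bt)^{-2}\rho_{\tilde\gamma}\!\left(\tfrac{\tan bt}{b},\tfrac{x}{\cos bt}\right)$ with $\tilde\gamma(s)=e^{is(\Delta_x-\Delta_y)}\tilde\gamma_0$; the estimate then becomes a weighted version, global in the new time $s$, of the flat collapsing estimate of Klainerman--Machedon / \cite{GM17} type. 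I will carry out the direct route.

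Inserting (\ref{constant,sol}) in $x$ and its complex conjugate in $y$ into $\rho_{\gamma}(t,x)=\gamma(t,x,x)$, the quadratic term $(x-x')^2-(x-y')^2$ is linear in $x$, so the only $x$-dependence of the phase is the linear term $-ix\cdot M(t)(x'-y')$, where $M(t)=\tfrac{b}{2\tan bt}I+\tfrac{b}{2}J$ and $|\det M(t)|=\tfrac{b^2}{4\sin^2 bt}$. Taking the Fourier transform in $x$ therefore yields a delta function forcing $x'-y'=-M(t)^{-1}\xi$, and $|\det M(t)|$ cancels the prefactor $\tfrac{b^2}{16\pi^2\sin^2 bt}$ exactly, leaving (with $g_{w}(x'):=\gamma_0(x',x'+w)$, $\widehat{g_w}$ the Fourier transform in $x'\in\mathbb{R}^2$, $R(bt)$ a rotation, $\Phi$ real)
\[
    \widehat{\rho_\gamma}(t,\xi)=e^{i\Phi(t,\xi)}\,\widehat{g_{u}}\bigl(\cos(bt)\,R(bt)\,\xi\bigr),\qquad u=u(t,\xi):=\tfrac{2\sin bt}{b}R(bt)\xi .
\]
Then I compute $\bigl\|\langle\tan bt\rangle^{-1/2-\epsilon}|\nabla_x|^{1/2+2\epsilon}\rho_\gamma\bigr\|_{L^2_{[-\pi/2b,\pi/2b]}L^2_x}^2$ by Plancherel in $x$, change variables $\xi\mapsto\kappa:=\cos(bt)R(bt)\xi$ (Jacobian $(\cos bt)^{-2}$, $|\xi|=|\kappa|/|\cos bt|$, $u=\tfrac{2\tan bt}{b}\kappa$), and use $\langle\tan bt\rangle=|\cos bt|^{-1}$. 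Collecting powers of $\cos bt$ turns the expression into
\[
    \int_{-\pi/2b}^{\pi/2b}|\cos bt|^{-2-2\epsilon}\int_{\mathbb{R}^2}|\kappa|^{1+4\epsilon}\bigl|\widehat{g_{(2\tan bt/b)\kappa}}(\kappa)\bigr|^2\,d\kappa\,dt .
\]

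Next, swap the integrations and, for each fixed $\kappa$, change variables in $t$ to $w:=\tfrac{2\tan bt}{b}\kappa$, which sweeps the whole line $\mathbb{R}\kappa$ as $t$ runs over $(-\pi/2b,\pi/2b)$; since $dt=\tfrac{\cos^2 bt}{2|\kappa|}\,d|w|$ one gets $|\cos bt|^{-2-2\epsilon}\,dt=\bigl(1+\tfrac{b^2|w|^2}{4|\kappa|^2}\bigr)^{\epsilon}\tfrac{d|w|}{2|\kappa|}$. Two things happen here: the factor $|\kappa|^{1}$ coming from $|\nabla_x|^{1/2}$ is exactly eaten by the $\tfrac{1}{|\kappa|}$ Jacobian — this is the $1/2$-derivative smoothing, realized geometrically — and the singular factor $|\cos bt|^{-2-2\epsilon}$ at the focal times $t=\pm\pi/2b$ has been tamed down to the polynomial weight $\bigl(1+\tfrac{b^2|w|^2}{4|\kappa|^2}\bigr)^{\epsilon}$, which is possible \emph{precisely} because of the $\langle\tan bt\rangle^{-1/2-\epsilon}$ in the statement. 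What remains is to bound $\int_{\mathbb{R}^2}|\kappa|^{4\epsilon}\int_{\mathbb{R}}\bigl(1+\tfrac{b^2|w|^2}{4|\kappa|^2}\bigr)^{\epsilon}|\widehat{g_w}(\kappa)|^2\,d|w|\,d\kappa$ by $\bigl\|\langle\nabla_x\rangle^{1/4+\epsilon}\langle\nabla_y\rangle^{1/4+\epsilon}\gamma_0\bigr\|_{L^2_xL^2_y}^2$. Writing $\widehat{g_w}(\kappa)\propto\int\widehat{\gamma_0}(\kappa-\zeta,\zeta)e^{i\zeta\cdot w}\,d\zeta$ and resolving $\zeta$ along $\hat\kappa$ and $\hat\kappa^{\perp}$, the $|w|$-integral is a one-dimensional Plancherel; after the measure-preserving (Jacobian $1$) change of variables $(\theta_\kappa,|\kappa|,\zeta\cdot\hat\kappa,\zeta\cdot\hat\kappa^{\perp})\mapsto(\eta,\zeta)=(\kappa-\zeta,\zeta)$ this reduces to an integral of $\langle\,\Omega(\eta,\zeta)/|\eta+\zeta|\,\rangle^{1+2\epsilon}$ against $|\widehat{\gamma_0}(\eta,\zeta)|^2$, with only $\langle\eta\rangle^{O(\epsilon)}\langle\zeta\rangle^{O(\epsilon)}$ extra weight coming from the $|\kappa|^{4\epsilon}$ and from the $\bigl(1+\tfrac{b^2|w|^2}{4|\kappa|^2}\bigr)^{\epsilon}$ (the latter becomes a directional fractional derivative of order $\epsilon$). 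Using $\Omega(\eta,\zeta)=\Omega(\eta+\zeta,\zeta)=\Omega(\eta,\eta+\zeta)$ one gets $|\Omega(\eta,\zeta)|/|\eta+\zeta|\le\min(|\eta|,|\zeta|)\le\sqrt{|\eta||\zeta|}$, hence $\langle\,\Omega(\eta,\zeta)/|\eta+\zeta|\,\rangle^{1+2\epsilon}\lesssim\langle\eta\rangle^{1/2+\epsilon}\langle\zeta\rangle^{1/2+\epsilon}$, and the leftover $\langle\eta\rangle^{\epsilon}\langle\zeta\rangle^{\epsilon}$ of room absorbs the extra weights and pays for the Cauchy--Schwarz in the $\hat\kappa^{\perp}$-variable; this is exactly the right-hand side of (\ref{pauli,lens,estimate}).

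The main obstacle is the third step: the propagator $e^{-iHt}$ degenerates at the focal times $t=\pm\pi/2b$, where $\cos bt\to0$, and naively $\||\nabla_x|^{1/2}\rho_\gamma(t)\|_{L^2_x}$ blows up like $(\cos bt)^{-3/2}$. The resolution is that the weight $\langle\tan bt\rangle^{-1/2-\epsilon}=|\cos bt|^{1/2+\epsilon}$ is tuned so that, after the time change of variables $t\mapsto w$, every negative power of $\cos bt$ is converted into a harmless (non-negative) polynomial weight in $w$, whose mild growth is then paid for by the arbitrarily small amount $\epsilon>0$ of extra smoothness on $\gamma_0$; that same $\epsilon$ is what makes the Cauchy--Schwarz in the transversal Fourier variable convergent. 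Everything else is bookkeeping of exponents analogous to Section~\ref{many,particle,H}.
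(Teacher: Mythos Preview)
Your proposal is correct in its main structure and reaches the result, but by a much more laborious route than the paper. The paper's proof is three short steps: (i) peel off the commuting rotation $e^{-i(H_{rx}+H_{ry})t}$ in the decomposition $H=H_h+H_r-b$, which acts as a joint spatial rotation of $(x,y)$ and is therefore invisible after restricting to the diagonal and taking $L^2_x$ norms involving $|\nabla|^s$; (ii) apply the lens transform to $e^{-iH_ht}$, which converts the weighted norm on $[-\pi/2b,\pi/2b]$ into the unweighted $L^2_{\mathbb{R}}$ norm of $|\nabla|^{1/2+2\epsilon}\rho_{\tilde\gamma}$ for the free Laplacian flow $\tilde\gamma(s)=e^{is(\Delta_x-\Delta_y)}\gamma_0$; (iii) quote the known flat collapsing estimate from \cite{GM17,CHP2017}. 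You describe exactly this route in your second sentence and then choose instead to unfold the kernel explicitly and reprove the flat estimate from scratch via the $(\kappa,s,\zeta_\parallel,\zeta_\perp)$ coordinates. What your route buys is a self-contained argument that does not cite the external result; what it costs is considerable length, since steps (i)--(iii) in the paper occupy only a few lines.

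One technical point in your write-up deserves care. The claim that the leftover weight $\bigl(1+\tfrac{b^2|w|^2}{4|\kappa|^2}\bigr)^{\epsilon}$ ``becomes a directional fractional derivative of order $\epsilon$'' and is then absorbed by $\langle\eta\rangle^{\epsilon}\langle\zeta\rangle^{\epsilon}$ is not right as written: after Plancherel in $s$ this weight acts as $\bigl\langle\tfrac{b}{2|\kappa|}\partial_{\zeta_\parallel}\bigr\rangle^{2\epsilon}$ on $\zeta_\parallel\mapsto\int\widehat{\gamma_0}(\kappa-\zeta,\zeta)\,d\zeta_\perp$, which in position space is a weight of the type $|x'-y'|^{2\epsilon}$ on $\gamma_0$ --- a \emph{position} weight, not controlled by the frequency weights $\langle\nabla_x\rangle^{1/4+\epsilon}\langle\nabla_y\rangle^{1/4+\epsilon}$ on the right-hand side. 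The clean fix is to note that this factor is exactly $|\cos bt|^{-2\epsilon}$ and vanishes entirely once the time-weight exponent is matched to the derivative exponent (i.e.\ $\langle\tan bt\rangle^{-1/2-2\epsilon}$ in place of $\langle\tan bt\rangle^{-1/2-\epsilon}$); since $\epsilon>0$ is arbitrary this is a harmless relabeling, and with it your remaining steps --- the change of variables $(\kappa,\zeta)\mapsto(\eta,\zeta)$, the Cauchy--Schwarz in $\zeta_\perp$ against $\langle\zeta_\perp\rangle^{1+2\epsilon}$, and the bound $|\zeta_\perp|=|\Omega(\eta,\zeta)|/|\eta+\zeta|\le\min(|\eta|,|\zeta|)$ --- close the argument cleanly.
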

\begin{proof}
The operator $H$ is decomposed as (\ref{pauli,operator,decompose}) and $[H_{h},H_{r}]=0$. Since the rotation generated by the vector field $-iH_{r}$ satisfies 
\begin{equation*}
    \left||\nabla|^s e^{-iH_{r}t}f\right|(x)=\left||\nabla|^s f\right|\left(e^{-iH_{r}t}x\right),\quad x\in\mathbb{R}^2
\end{equation*}
and $e^{-i(H_{rx}-\bar{H}_{ry})t}\gamma_{0}(x,y)=e^{-i(H_{rx}+H_{ry})t}\gamma_{0}(x,y)$,
\begin{align*}
    &\left\|\langle\tan bt\rangle^{-1/2-\epsilon}|\nabla_{x}|^{1/2+2\epsilon} \rho_{\gamma}(t,x)\right\|_{L^2_{[-\pi/2b,\pi/2b]}L^2_{x}}\\
    =&\left\|\langle\tan bt\rangle^{-1/2-\epsilon}|\nabla_{x}|^{1/2+2\epsilon} \left(e^{-i\left(H_{rx}-\bar{H}_{ry}\right)t}e^{-i(H_{hx}-\bar{H}_{hy})t}\gamma_{0}\right)(x,x)\right\|_{L^2_{[-\pi/2b,\pi/2b]}L^2_{x}}\\
    =&\left\|\langle\tan bt\rangle^{-1/2-\epsilon}|\nabla_{x}|^{1/2+2\epsilon} \left(e^{-i(H_{hx}-\bar{H}_{hy})t}\gamma_{0}\right)(x,x)\right\|_{L^2_{[-\pi/2b,\pi/2b]}L^2_{x}}.
\end{align*}
Then the estimate (\ref{pauli,lens,estimate}) reduces to 
\begin{equation*}
    \left\|\langle\tan bt\rangle^{-1/2-\epsilon}|\nabla_{x}|^{1/2+2\epsilon} \left(e^{-i(H_{hx}-\bar{H}_{hy})t}\gamma_{0}\right)(x,x)\right\|_{L^2_{[-\pi/2b,\pi/2b]}L^2_{x}}\lesssim_{\epsilon}\left\|\langle \nabla_x\rangle^{1/4+\epsilon}\langle \nabla_{y}\rangle^{1/4+\epsilon}\gamma_{0}(x,y)\right\|_{L^2_{x}L^2_{y}},
\end{equation*}
where the collapsing term corresponds to the equation $i\,\partial_{t}\gamma=\left[H_{h},\gamma\right]$. By the Lens transform \cite{Tao09}
\begin{equation}
    \L(u)(t,x):=\frac{1}{\cos bt}u\left(\frac{\tan bt}{b},\frac{x}{\cos bt}\right)e^{-\left(ib|x|^2\tan bt\right)/4},\quad t\in\mathbb{R},\ x\in\mathbb{R}^2,
\end{equation}
which maps the solution $u(t,x)$ of $i\,\partial_{t}u=-\Delta u$ to the solution of $i\,\partial_{t}\L(u)=H_{h}\L(u)$, we obtain the identity
\begin{align*}
     &\left\|\langle\tan bt\rangle^{-1/2-\epsilon}|\nabla_{x}|^{1/2+2\epsilon} \left(e^{-i(H_{hx}-\bar{H}_{hy})t}\gamma_{0}\right)(x,x)\right\|_{L^2_{[-\pi/2b,\pi/2b]}L^2_{x}}\\
     =& \left\||\nabla_{x}|^{1/2+2\epsilon} \left(e^{i(\Delta_{x}-\Delta_{y})t}\gamma_{0}\right)(x,x)\right\|_{L^2_{t}L^2_{x}}.
\end{align*}

Finally, the estimate (\ref{pauli,lens,estimate}) reduces to the Laplacian case 
\begin{equation*}
    \left\||\nabla_{x}|^{1/2+2\epsilon} \left(e^{i(\Delta_{x}-\Delta_{y})t}\gamma_{0}\right)(x,x)\right\|_{L^2_{t}L^2_{x}}\lesssim_{\epsilon}\left\|\langle \nabla_x\rangle^{1/4+\epsilon}\langle \nabla_{y}\rangle^{1/4+\epsilon}\gamma_{0}(x,y)\right\|_{L^2_{x}L^2_{y}},
\end{equation*}
which is proved in \cite{GM17,CHP2017}.
\end{proof}
Since the Hermite operator $H_{h}$ dominates $-\Delta+1$ in the sense $\left\|\langle-\Delta\rangle^{s/2}f\right\|_{L^2}\lesssim \left\|H_{h}^{s/2}f\right\|_{L^2}$ for $s\ge 0$, as a corollary of Proposition \ref{collapsing,pauli,hermite}
\begin{equation}\label{collapsing,pauli,herimte,estimate}
    \left\|\langle\tan bt\rangle^{-1/2-\epsilon}|\nabla_{x}|^{1/2+2\epsilon} \rho_{\gamma}(t,x)\right\|_{L^2_{[-\pi/2b,\pi/2b]}L^2_{x}}\lesssim_{\epsilon}\left\|H_{hx}^{1/8+\epsilon}H_{hy}^{1/8+\epsilon}\gamma_{0}(x,y)\right\|_{L^2_{x}L^2_{y}}.
\end{equation}
using this estimate (\ref{collapsing,pauli,herimte,estimate}) and the scheme in \cite{GM17}, Theorem \ref{local,low,regularity,pauli} follows.

When it comes to Equation (\ref{equation,pertubation}), if we expect to establish a local well-posedness result when
\[
    \left\| H_{hx}^{s/2} H_{hy}^{s/2}Q_{0}(x,y)\right\|_{L^2_{x}L^2_{y}}<\infty,
\]
we need to deal with terms, for example $\left\|\left|\nabla_{x}\right|^{s}\left(\rho_{Q}*v\right) H_{hy}^{s/2}\bar{\Pi}_{\phi}(x,y)\right\|_{L^2_xL^2_{y}}$. However $H_{hy}^{s/2}\bar{\Pi}_{\phi}$ is not translation invariant. After integrating over $y$, we are faced with $\left\| |x|^{s}\left|\nabla_{x}\right|^{s}\left(\rho_{Q}*v\right)\right\|_{L^2}$. For the linear equation $i\,\partial_{t}Q=[H+\rho_{Q}*v,Q]$ and $Q(t=0)=Q_0$, $\left\| |x|^{s}\left|\nabla_{x}\right|^{s}\left(\rho_{Q}*v\right)\right\|_{L^2_{I_{T}}L^2}$ is controlled by $\left\| H_{hx}^{s/2} H_{hy}^{s/2}Q_{0}(x,y)\right\|_{tr}$.  But it may not be controlled by $\left\| H_{hx}^{s/2} H_{hy}^{s/2}Q_{0}(x,y)\right\|_{L^2_{x}L^2_{y}}$. Therefore we can not close the argument to obtain a local well-posedness result of Equation (\ref{equation,pertubation}). That is why we stick to the structure of Equation (\ref{equation,pertubation}) and use norms arising from $H$, i.e. Definition \ref{def,main,norm}. The operator $H$ is more compatible with the stationary solution $\bar{\Pi}_{\phi}$ than $H_{h}$. Hence we can deal with $\langle H_{x}\rangle^{1/2}\langle \bar{H}_{y}\rangle^{1/2}[\rho_{Q}*v,\bar{\Pi}_{\phi}]$.

Next we prove Theorem \ref{pauli,local,wellposed} the local wellposedness result of Equation (\ref{equation,pertubation}).
\begin{proof}
By Duhamel's formulation, we define the solution map $\Phi$ and the solution ball $sol_{T}$ for the contraction mapping principle,
\begin{align}
    \Phi(Q)(t,x,y):= e^{-i(H_{x}-\bar{H}_{y})t}Q_{0} -i\int_{0}^{t}e^{-i(H_{x}-\bar{H}_{y})(t-\tau)}[v*\rho_{Q},Q+\bar{\Pi}_{\phi}](\tau)\,d\tau,\\
    sol_{T}:=\left\{ Q(t,x,y)\left| \|Q(t,x,y)\|_{\mathbf{N}_{T}} \le C\left\|\langle H_{x}\rangle^{1/2}\langle \bar{H}_{y}\rangle^{1/2}Q_{0}(x,y)\right\|_{L_{x}^2L_{y}^2} \right.\right\},
\end{align}
where parameters $T$ and $C>1$ are to be determined later.

1. Show $\Phi$ maps $sol_{T}$ to itself.

Suppose $Q\in sol_{T}$. By Theorem \ref{pauli,collapsing,estimate} and Proposition \ref{Strichartz,homogeneous},
\[
    \|e^{-i(H_{x}-\bar{H}_y)t}Q_{0}\|_{\mathbf{N}_{T}}\lesssim_{T} \left\|\langle H_{x}\rangle^{1/2}\langle \bar{H}_{y}\rangle^{1/2}Q_{0}\right\|_{L_{x}^2L_{y}^2}.
\]
Choosing $T= \pi/4b$, then $C$ is the constant such that \[
    \|e^{-i(H_{x}-\bar{H}_y)t}Q_{0}\|_{\mathbf{N}_{T}}\le C\left\|\langle H_{x}\rangle^{1/2}\langle \bar{H}_{y}\rangle^{1/2}Q_{0}\right\|_{L_{x}^2L_{y}^2}/2.
\]

For the nonlinear part, claim the estimate
\begin{equation}\label{nonlinear,reduction}
    \left\|\int_{0}^{t}e^{-i(H_x-\bar{H}_{y})(t-\tau)}[v*\rho_{Q},Q+\bar{\Pi}_{\phi}](\tau)\,d\tau\right\|_{\mathbf{N}_{T}}\lesssim \left\|\langle H_{x}\rangle^{1/2} \langle \bar{H}_{y}\rangle^{1/2}[v*\rho_{Q},Q+\bar{\Pi}_{\phi}]\right\|_{L^1_{I_{T}}L^2_{x}L^2_{y}}.
\end{equation}
The proof of (\ref{nonlinear,reduction}) is twofold. On one hand, to control the Strichartz norm,
\[
    \left\|\int_{0}^{t}e^{-i(H_x-\bar{H}_{y})(t-\tau)}\underbrace{\langle H_{x}\rangle^{1/2} \langle \bar{H}_{y}\rangle^{1/2}[v*\rho_{Q},Q+\bar{\Pi}_{\phi}](\tau)}_{F_1(\tau,x,y)}\,d\tau\right\|_{L^{q}_{I_{T}}L^{r}_{x}L^2_{y}},
\]
suppose $G(t,x,y)$ is in the dual Strichartz space $L_{I_{T}}^{q'}L_{x}^{r'}L^2_{y}$, where
\[
    \frac{1}{q}+\frac{1}{q'}=1,\quad \frac{1}{r}+\frac{1}{r'}=1.
\]
Using the dual characterization of $L^p$ spaces
\begin{align*}
    &\int_{I_{T}}\int_{\mathbb{R}^2\times\mathbb{R}^2}dtdxdy\,\bar{G}(t,x,y)\int_{0}^{t}e^{-i(H_{x}-\bar{H}_y)(t-\tau)}F_1(\tau,x,y)d\tau\\
    =&\int_{I_{T}}\int_{\mathbb{R}^2\times\mathbb{R}^2}d\tau dxdy\,F_1(\tau,x,y)\int_{\tau}^{T}\overline{e^{-i(H_{x}-\bar{H}_y)(\tau-t)}G}(t,x,y)\,dt\\
    \le &\int_{I_{T}}d\tau \,\left\|F_1(\tau,x,y)\right\|_{L^{2}_x L^2_y} \left\|\int_{\tau}^{T}\overline{e^{-i(H_{x}-\bar{H}_y)(\tau-t)}G}(t,x,y)\,dt\right\|_{L^{\infty}_{I_{T}}L^2_{x}L^2_{y}}\\
    \lesssim& \int_{I_{T}}d\tau\,\left\|F_1(\tau,x,y)\right\|_{L^{2}_x L^2_y}\left\|G(t,x,y)\right\|_{L^{q'}_{I_{T}}L^{r'}_{x}L^2_y},\\
     & \quad (\text{by Proposition \ref{Strichartz,homogeneous} the dual Strichartz estimate (\ref{dual,strichartz})})
\end{align*}
we obtain, 
\[
    \left\|\int_{0}^{t}e^{-i\left(H_{x}-\bar{H}_{y}\right)(t-\tau)}F_1(\tau,x,y)\,d\tau\right\|_{L^q_{I_{T}}L^r_{x}L^2_{y}}\lesssim \left\|F_1(t,x,y)\right\|_{L^1_{I_{T}}L^2_{x}L^2_{y}}.
\]
The argument for the norm $L^{q}_{I_{T}}L^r_{y}L^2_{x}$ is the same. On the other hand, to control the collapsing term
\[
    \left\|\langle\nabla_{x}\rangle^{9/8}\left(\int_{0}^{t}e^{-i(H_x-\bar{H}_{y})(t-\tau)}\underbrace{[v*\rho_{Q},Q+\bar{\Pi}_{\phi}](\tau)}_{F_2(\tau,x,y)}\,d\tau\right)(t,x,x)\right\|_{L^2_{I_{T}}L^2_{x}},
\]
applying Theorem \ref{pauli,collapsing,estimate} and the Minkowski inequality,
\begin{align*}
    &\left\|\langle\nabla_{x}\rangle^{9/8}\left(\int_{0}^{t}e^{-i(H_x-\bar{H}_{y})(t-\tau)}F_2(\tau,x,y)\,d\tau\right)(t,x,x)\right\|_{L^2_{I_{T}}L^2_x}\\\le &\left\|\int_{0}^{T}\left\|\langle\nabla_{x}\rangle^{9/8} \left(e^{-i(H_x-\bar{H}_{y})(t-\tau)}F_2(\tau,x,y)\right)(t,x,x)\right\|_{L^2_x}d\tau\right\|_{L^2_{I_{T}}}\\
    \le & \int_{0}^{T}d\tau \left\|\langle\nabla_{x}\rangle^{9/8} \left(e^{-i(H_x-\bar{H}_{y})(t-\tau)}F_2(\tau,x,y)\right)(t,x,x)\right\|_{L^2_{I_{T}}L^2_x} \\
    \lesssim &\int_{0}^{T}d\tau \left\|\langle H_{x}\rangle^{1/2} \langle \bar{H}_{y}\rangle^{1/2}F_2(\tau,x,y)\right\|_{L^2_{x}L^2_{y}} \quad(\text{by Theorem \ref{pauli,collapsing,estimate}}).
\end{align*}

According to the estimate (\ref{nonlinear,reduction}), the problem is reduced to estimate quantities 
\[
    \left\|\langle H_{x}\rangle^{1/2} \langle \bar{H}_{y}\rangle^{1/2}[v*\rho_{Q},Q]\right\|_{L^1_{I_{T}}L^2_{x}L^2_{y}},\quad \left\|\langle H_{x}\rangle^{1/2} \langle \bar{H}_{y}\rangle^{1/2}[v*\rho_{Q},\bar{\Pi}_{\phi}]\right\|_{L^1_{I_{T}}L^2_{x}L^2_{y}}.
\]
Since the commutation relation does not play a role of our analysis, we give proofs for one of the two terms in the commutation relation. The other one is dealt similarly. 

Considering $\left\|\langle H_{x}\rangle^{1/2} \langle \bar{H}_{y}\rangle^{1/2}\left(\left(v*\rho_{Q}\right)(t,x)Q(t,x,y)\right)\right\|_{L^1_{I_{T}}L^2_{x}L^2_{y}}$, based on the observation (\ref{L2,equivalent}), we estimate it by
\begin{align*}
    &\lesssim \left\|D_{x}\langle \bar{H}_{y}\rangle^{1/2} \left(v*\rho_{Q}\right)(t,x)Q(t,x,y)\right\|_{L^1_{I_{T}}L^2_{x}L^2_{y}}+ \left\|\langle \bar{H}_{y}\rangle^{1/2} \left(v*\rho_{Q}\right)(t,x)Q(t,x,y)\right\|_{L^1_{I_{T}}L^2_{x}L^2_{y}}\\
    &\lesssim\left\| 2\partial_{z_{x}} \left(v*\rho_{Q}\right)(t,x)\, \langle \bar{H}_{y}\rangle^{1/2}Q(t,x,y)\right\|_{L^1_{I_{T}}L^2_x L^2_y} +\left\|\left(v*\rho_{Q}\right)(t,x)\langle H_{x}\rangle^{1/2}\langle \bar{H}_{y}\rangle^{1/2}Q(t,x,y)\right\|_{L^{1}_{I_{T}}L_x^{2}L_{y}^2}\\
    &\lesssim T^{1/2}\left\||\nabla_{x}|(v*\rho_{Q})(t,x)\right\|_{L^2_{I_{T}}L^{\frac{16}{7}}_x}\left\|\langle \bar{H}_{y}\rangle^{1/2}Q(t,x,y)\right\|_{L^{\infty}_{I_{T}}L^{16}_x L^2_y}\\
    &\quad +T^{1/4}\left\|\left(v*\rho_{Q}\right)(t,x)\right\|_{L^2_{I_{T}}L^4_{x}}\left\|\langle H_{x}\rangle^{1/2}\langle \bar{H}_{y}\rangle^{1/2}Q(t,x,y)\right\|_{L^{4}_{I_{T}}L_x^{4}L_{y}^2} \quad (\text{by H\"older inequality})\\
        &\lesssim T^{1/2}\left\||\nabla_{x}|^{9/8}\rho_Q(t,x)\right\|_{L^2_{I_{T}}L^2_{x}}\|\langle H_{x}\rangle^{1/2}\langle \bar{H}_{y}\rangle^{1/2}Q(t,x,y)\|_{L^{\infty}_{I_{T}}L^2_x L^2_y}\\
        &\quad + T^{1/4}\left\||\nabla_{x}|^{1/2}\rho_Q(t,x)\right\|_{L^2_{I_{T}}L^2_{x}}\left\|\langle H_{x}\rangle^{1/2}\langle \bar{H}_{y}\rangle^{1/2}Q(t,x,y)\right\|_{L^4_{I_{T}}L^4_x L^2_y}\\
        &\qquad(\text{by Sobolev inequality, Lemma \ref{covariant,n-endpoint,sobolev} and Young's convolution inequality})\\
        &\lesssim \max\{T^{1/2},T^{1/4}\}\left\|Q(t)\right\|_{\mathbf{N}_{T}}^2\\
        &\lesssim \max\{T^{1/2},T^{1/4}\}\left\|\langle H_{x}\rangle^{1/2}\langle \bar{H}_{y}\rangle^{1/2}Q_{0}(x,y)\right\|_{L_{x}^2L_{y}^2}^2.
\end{align*}
Then for $\left\|\langle H_{x}\rangle^{1/2} \langle \bar{H}_{y}\rangle^{1/2}[v*\rho_{Q},\bar{\Pi}_{\phi}]\right\|_{L^1_{I_{T}}L^2_{x}L^2_{y}}$, because of direct computation
\begin{align*}
    \bar{D}_{y}\bar{\Pi}_{\phi}(x,y)&=\left(2\partial_{z}\phi(x-y)+\frac{b}{2}(\bar{z}_{x}-\bar{z}_{y})\phi(x-y)\right)e^{-ib\Omega(x,y)/2},\\
    D_{x}\bar{\Pi}_{\phi}(x,y)&=\left(-2\partial_{\bar{z}}\phi(x-y)-\frac{b}{2}(z_{x}-z_{y})\phi(x-y)\right)e^{-ib\Omega(x,y)/2},\\
    D_{x}\bar{D}_{y}\bar{\Pi}_{\phi}(x,y)&=\left(-4\partial_{\bar{z}}\partial_{z}\phi(x-y)-b(z_{x}-z_{y})\partial_{z}\phi(x-y)\right.\\
        &\quad \left.-b(\bar{z}_{x}-\bar{z}_{y})\partial_{\bar{z}}\phi(x-y)-\frac{b^2}{4}|x-y|^2\phi(x-y)\right)e^{-ib\Omega(x,y)/2},
\end{align*}
integrating over $x$ or $y$, we obtain
\begin{align*}
    &\left\|\bar{D}_{y}\bar{\Pi}_{\phi}(x,y)\right\|_{L^2_{x(y)}}=\left\|\bar{D}\phi\right\|_{L^2}\lesssim\left\|\langle\bar{H}\rangle^{1/2}\phi\right\|_{L^2}\\
    &\left\|D_{x}\bar{\Pi}_{\phi}(x,y)\right\|_{L^2_{x(y)}}=\left\|D\phi\right\|_{L^2}\lesssim\left\|\langle H\rangle^{1/2}\phi\right\|_{L^2},\\
    &\left\|D_{x}\bar{D}_{y}\bar{\Pi}_{\phi}(x,y)\right\|_{L^2_{x(y)}}=\left\|D\bar{D}\phi\right\|_{L^2}\lesssim\left\|\langle H\rangle^{1/2} \langle\bar{H}\rangle^{1/2}\phi\right\|_{L^2}.
\end{align*}
Combining the above estimates,
\begin{align*}
    \left\|\langle H_{x}\rangle^{1/2} \langle \bar{H}_{y}\rangle^{1/2}[v*\rho_{Q},\bar{\Pi}_{\phi}]\right\|_{L^1_{I_{T}}L^2_{x}L^2_{y}}&\lesssim \left\|\langle\nabla_{x}\rangle \rho_{Q}(t,x)\right\|_{L^1_{I_{T}}L^2_{x}} \left\|\langle H\rangle^{1/2}\langle \bar{H}\rangle^{1/2}\phi\right\|_{L^2}\\
     &\lesssim T^{1/2}\left\|\langle H_{x}\rangle^{1/2}\langle \bar{H}_{y}\rangle^{1/2}Q_{0}(x,y)\right\|_{L_{x}^2L_{y}^2}\left\|\langle H\rangle^{1/2}\langle \bar{H}\rangle^{1/2}\phi\right\|_{L^2}
\end{align*}

If necessary, shrink the interval $I_{T}$ such that
\[
    \left\|\int_{0}^{t}e^{-i(H_x-\bar{H}_{y})(t-\tau)}[v*\rho_{Q},Q+\bar{\Pi}_{\phi}](\tau)\,d\tau\right\|_{\mathbf{N}_{T}}\le \frac{C}{2}\left\|\langle H_{x}\rangle^{1/2}\langle \bar{H}_{y}\rangle^{1/2}Q_{0}(x,y)\right\|_{L_{x}^2L_{y}^2}.
\]
Thus $\Phi$ maps $sol_{T}$ to itself.

2. Show $\Phi$ is a contraction map.

For any $Q_{1},\, Q_{2}\in sol_{T}$, similarly as step 1, 
\begin{align*}
    \left\|\Phi(Q_{1})-\Phi(Q_{2})\right\|_{\mathbf{N}_{T}} &\le \left\|\int_{0}^{t}d\tau\, e^{-i(H_{x}-\bar{H}_{y})(t-\tau)} [v*\rho_{Q_1}-v*\rho_{Q_2},\bar{\Pi}_{\phi}]\right\|_{\mathbf{N}_{T}}\\
        &\quad+\left\|\int_{0}^{t}d\tau\, e^{-i(H_{x}-\bar{H}_{y})(t-\tau)} [v*\rho_{Q_1}-v*\rho_{Q_2},Q_1]\right\|_{\mathbf{N}_{T}}\\
     &\quad +\left\|\int_{0}^{t}d\tau\, e^{-i(H_{x}-\bar{H}_{y})(t-\tau)} [v*\rho_{Q_2},Q_1-Q_2]\right\|_{\mathbf{N}_{T}}\\
     &\lesssim T^{1/2}\|\langle \nabla_{x}\rangle (\rho_{Q_1}-\rho_{Q_2})\|_{L^2_{I_{T}}L^2_{x}}\left\|\langle H\rangle^{1/2}\langle \bar{H}\rangle^{1/2}\phi\right\|_{L^2}\\
     &\quad +\max\{T^{1/2},T^{1/4}\}\|Q_{1}-Q_{2}\|_{\mathbf{N}_{T}}\left(\|Q_1\|_{\mathbf{N}_{T}}+\|Q_2\|_{\mathbf{N}_{T}}\right)\\
     &\lesssim \max\{T^{1/2},T^{1/4}\}\|Q_{1}-Q_{2}\|_{\mathbf{N}_{T}}\left\|\langle H_{x}\rangle^{1/2}\langle \bar{H}_{y}\rangle^{1/2}Q_{0}(x,y)\right\|_{L_{x}^2L_{y}^2}.
\end{align*}
If needed, choose a smaller $T$ such that $\left\|\Phi(Q_{1})-\Phi(Q_{2})\right\|_{\mathbf{N}_{T}}\le \|Q_1-Q_2\|_{\mathbf{N}_{T}}/2$.

Then by the contraction mapping principle, $\Phi$ has a fixed point in $sol_{T}$, i.e. Equation (\ref{equation,pertubation}) is locally well-posed.
\end{proof}
\begin{remark}
There are two families of stationary solutions $\Pi_{\phi}$ and $\bar{\Pi}_{\phi}$ (see Section \ref{stationary,solutions}). The reason for only $\bar{\Pi}_{\phi}$ is used in our pertubation problem is twofold. On one hand, $\bar{\Pi}_{\phi}$ recovers the Fermi-Dirac distribution. On the other hand, suppose we use the stationary solution $\Pi_{\phi}$ instead of $\bar{\Pi}_{\phi}$. By the product rule of the covariant derivative $D$, $D(fg)=(Df)g-2f\partial_{\bar{z}}g$,
\begin{align}
    & D_{x}\bar{D}_{y}\left(\rho_{u}*v(x)\Pi_{\phi}(x,y)\right)=D_{x}\left(\rho_{u}*v\right)(x)\bar{D}_{y}\Pi_{\phi}(x,y)+\left(\rho_{u}*v\right)(x)\left(-2\partial_{\bar{z}_{x}}\right)\bar{D}_{y}\Pi_{\phi}(x,y)\nonumber\\
    \hbox{or }& D_{x}\bar{D}_{y}\left(\rho_{u}*v(x)\Pi_{\phi}(x,y)\right)=(-2\partial_{\bar{z}_{x}})\left(\rho_{u}*v\right)(x)\bar{D}_{y}\Pi_{\phi}(x,y)+\left(\rho_{u}*v\right)(x)D_{x}\bar{D}_{y}\Pi_{\phi}(x,y).\label{stationary,comparision}
\end{align}
Since we do not have an estimate for $D_{x}\rho_{u}(t,x)$, we use the form (\ref{stationary,comparision}) to continue our argument. A direct computation shows 
\[
    \bar{D}_{y}\Pi_{\phi}(x,y)=\left(2\partial_{z}\phi(x-y)-\frac{b}{2}(\bar{z}_{x}+\bar{z}_{y})\phi(x-y)\right)e^{ib\Omega(x,y)/2}.
\]
$\left|\bar{D}_{y}\Pi_{\phi}(x,y)\right|$ is not translation invariant. Therefore in order to estimate
\[
    \left\|-2\partial_{\bar{z}_{x}}(\rho_{u}*v)(t,x)\bar{D}_{y}\Pi_{\phi}(x,y)\right\|_{L^2_{x}L^2_{y}},
\]
we need to control $\left\|x|\nabla_{x}|\rho_{u}(t,x)\right\|_{L^2_{x}}$, which is not possible by using $\mathbf{N}_{T}$. 
\end{remark}

\section{Conclusion}\label{conclusion}
In this paper, we obtained a local well-posed result of Equation (\ref{equation,pertubation}) and a new collapsing estimate Theorem \ref{pauli,collapsing,estimate}. However the estimate is not sharp since we do not have an optimal control of associated Laguerre polynomials (see Remark \ref{associated,Laguerre,not,optimal}).

The ultimate goal of Theorem \ref{pauli,local,wellposed} is to acquire a low regularity result, for example a local well-posedness result for the initial data
\[
    \left\|\langle H_{x}\rangle^{s/2}\langle \bar{H}_{y}\rangle^{s/2}Q_{0}(x,y)\right\|_{L^2_{x}L^2_{y}}<\infty,\quad s<1,
\]
According to Remark \ref{associated,Laguerre,not,optimal} and the proof of Theorem \ref{pauli,collapsing,estimate}, we have a little gain of derivatives for the collapsing term when $s>1/3$. We conjecture that the best case might be $s=1/3+\epsilon$.  However it requires a fractional Leibniz rule for $\langle H\rangle^{s/2}(fg)$, which currently is beyond our ability. 

Another direction is to establish a global well-posedness result when
\[
    \left\|\langle H_{x}\rangle^{1/2}\langle \bar{H}_{y}\rangle^{1/2}Q_{0}(x,y)\right\|_{Tr}<\infty.
\]
A formal computation shows that the total energy of Equation (\ref{equation,pertubation}) is conserved 
\begin{equation}
    \mathcal{E}(Q)=Tr\left(H^{1/2}QH^{1/2}\right)+\frac{1}{2}\int_{\mathbb{R}^2}\left(v*\rho_{Q}\right)(x)\rho_{Q}(x)\,dx,
\end{equation}
which can be used for the global well-posedness result. However we lack of tools to estimate $\left\|\rho_{Q}\bar{\Pi}_{\phi}\right\|_{Tr}$ by the initial data.

\section{Appendix}
\subsection{Heisenberg Group}\label{Heisenberg,group}
\cite[Chapter 1]{Folland89}Let us review the Heisenberg group $H_{1}$ with the group law
\begin{equation*}
    (p_1,q_1,t_1)\cdot (p_2,q_2,t_2)=\left(p_1+p_2,q_1+q_2,t_1+t_2+b\frac{\Omega\left((p_1,q_1),(p_2,q_2)\right)}{2}\right),
\end{equation*}
where $p_{i},q_{i}\in \mathbb{R}$, $t_{i}\in\mathbb{R}$, and impose a complex structure on $\mathbb{R}^{2}$, $z=p+q\,i$. 

Identify the tangent space $TH_{1}$ with $\mathbb{R}^{3}\times\mathbb{R}^3$ and its basis by $\left\{\partial_{p},\partial_{q},\partial_{t}\right\}$. Then the differential of the left multiplication $L_{g}$, where $g=(p_{g},q_{g},t_{g})$, is
\begin{align*}
    DL_{g}\left(\partial_{p},\partial_{q},\partial_{t}\right)&=\left(\partial_{p},\partial_{q},\partial_{t}\right)\begin{pmatrix}
        1 & 0 & 0\\
        0 & 1 & 0\\
        -bq_{g}/2 & bp_{g}/2 & 1
    \end{pmatrix}.
\end{align*}
The Lie algebra $\mathfrak{h}_{1}$ consisting of left invariant vector fields is
\begin{align*}
    \mathfrak{h}_{1}&=\mathbb{R}\hbox{-span}\left\{ \partial_{p}-b\frac{q}{2}\partial_{t},\ \partial_{q}+b\frac{p}{2}\partial_{t},\ \partial_{t}\right\},
\end{align*}
and the corresponding complexified space is
\begin{align*}
    \mathfrak{h}_{1}^{\mathbb{C}}&=\mathbb{C}\hbox{-span}\left\{ 2\partial_{\bar{z}}+i\frac{bz}{2}\partial_{t},\ 2\partial_{z}-i\frac{b\bar{z}}{2}\partial_{t},\ \partial_{t}\right\}.
\end{align*}

We will think of $D$ and $D^*$ as vector fields of $\mathfrak{h}_{1}^{\mathbb{C}}$ in the following way. Denote 
\[
    D_{H_{1}}=-2\partial_{\bar{z}}-i\frac{bz}{2}\partial_{t},\quad D^*_{H_{1}}=2\partial_{z}-i\frac{b\bar{z}}{2}\partial_{t}.
\] 
Suppose $\tilde{f}\in \mathcal{S}(H_{1})$ and apply the inverse Fourier transform on $t$ variable,
\[
    \check{D}_{H_{1}}\check{\tilde{f}}=\frac{1}{\sqrt{2\pi}}\int_{\mathbb{R}}\left(-2\partial_{\bar{z}}-\frac{bz\tau}{2}\right)\tilde{f}(q,p,t)e^{it\tau}\,dt.
\]
On the piece $\tau=1$, $D_{H_{1}}$ and $D^*_{H_{1}}$ correspond to $D$ and $D^{*}$ respectively.

To make this correspondence rigorous, consider a quotient group $H_{1}^{red}$ of $H_{1}$ 
\begin{equation*}
    H_{1}^{red}:=H_{1}/\left\{\left.(0,0,t)\right| t\in 2\pi\mathbb{Z}\right\},\ \left\{\left.(0,0,t)\right| t\in 2\pi\mathbb{Z}\right\}\subset C(H_{1}).
\end{equation*}
For a $f$ on $\mathbb{R}^{2}$, it is lifted to $H_{1}$ by defining
\begin{equation}\label{R,H,correspondence}
    \tilde{f}(p,q,t):=\sqrt{2\pi}\exp(-ti)f(p,q).
\end{equation}
Through the definition (\ref{R,H,correspondence}), the correspondence between $D(D^*)$ and $D_{H_{1}}\left(D_{H_{1}}^*\right)$ is 
\begin{equation}\label{D,Lie algebra}
    D\tilde{f}(p,q,t)=D^*_{H_1}\tilde{f}(p,q,t), \quad D^*\tilde{f}(p,q,t)=D^*_{H_1}\tilde{f}(p,q,t).
\end{equation}
We can also relate the twisted convolution defined in (\ref{twisted,convolution,def}) to the group convolution on $H_{1}$,
\begin{align*}
    \left(\tilde{f}*\tilde{g}\right)(p,q,t) &=\int_{H_{1}^{red}}\tilde{f}\left((p,q,t)\cdot(\tilde{p},\tilde{q},\tilde{t})^{-1}\right)\tilde{g}(\tilde{p},\tilde{q},\tilde{t})\,d\tilde{p}d\tilde{q}d\tilde{t}\\
    &=\int_{H_{1}^{red}}\tilde{f}\left(p-\tilde{p},q-\tilde{q},t-\tilde{t}-b\frac{\Omega\left((p,q),(\tilde{p},\tilde{q})\right)}{2}\right)\tilde{g}(\tilde{p},\tilde{q},\tilde{t})\,d\tilde{p}d\tilde{q}d\tilde{t}\\
        &=2\pi\exp(-t i) \int_{\mathbb{R}^{2}}f(p-\tilde{p},q-\tilde{q})g(\tilde{p},\tilde{q})\exp\left(ib\frac{\Omega\left((p,q),(\tilde{p},\tilde{q})\right)}{2}\right)\,d\tilde{p}d\tilde{q}\\
        &=2\pi\exp(-ti)\left(f\natural g\right)(p,q),
\end{align*}
i.e. $\tilde{f}*\tilde{g}=\sqrt{2\pi}\widetilde{f\natural g}$.

\begin{lemma}\label{commutation,Lx,convolution}
Let $G$ be a Lie group endowed with a left invariant Haar measure $d\mu$, then
\begin{equation}\label{Lie group,commutation,convolution}
    L_{X}\left(f*g\right)=f*L_{X}g,\quad X\in \mathfrak{g}
\end{equation}
where $L_{X}$ denotes the Lie derivative by X and $*$ denotes the convolution on $G$
\[
    \left(f*g\right)(x):=\int_{G}f(xy^{-1})g(y)dy,\quad x\in G.
\]
Furthermore, (\ref{Lie group,commutation,convolution}) holds for the complexified Lie algebra $\mathfrak{g}_{\mathbb{C}}$. 
\end{lemma}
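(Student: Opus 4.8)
The plan is to realize $L_X$, for $X\in\mathfrak g=T_eG$, as the infinitesimal generator of the right-translation flow, and then exploit that left convolution by a fixed function commutes with right translations. Recall that the left-invariant vector field determined by $X$ has flow $\Phi_t=R_{\exp(tX)}$, where $R_ah(x):=h(xa)$, so that $(L_Xh)(x)=\frac{d}{dt}\big|_{t=0}h(x\exp(tX))$. The first step is therefore to record that the convolution operator $T_f\colon g\mapsto f*g$ satisfies $T_f\circ R_a=R_a\circ T_f$ for every $a\in G$. With the stated convention $(f*g)(x)=\int_Gf(xy^{-1})g(y)\,dy$, the substitution $y=za$ gives $(f*g)(xa)=\Delta(a)\,(f*R_ag)(x)$; since the groups to which the lemma is applied ($H_1$, $H_1^{red}$, and, more generally, any nilpotent Lie group) are unimodular, the modular character $\Delta$ is trivial and $T_f$ is right-translation invariant. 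Equivalently, one may phrase $*$ as $\int_Gf(y)g(y^{-1}x)\,dy$, which agrees with the stated convolution on unimodular $G$ and is manifestly right-invariant.

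Granting this, (\ref{Lie group,commutation,convolution}) is immediate: $T_f$ commutes with every element of the one-parameter subgroup $\{R_{\exp(tX)}\}_{t\in\mathbb{R}}$, hence with its infinitesimal generator, so $L_X(f*g)=L_X(T_fg)=T_f(L_Xg)=f*(L_Xg)$. Concretely, I would simply differentiate under the integral sign: from $(f*g)(x\exp(tX))=\int_Gf(xz^{-1})g(z\exp(tX))\,dz$, taking $\frac{d}{dt}\big|_{t=0}$ turns the integrand into $f(xz^{-1})(L_Xg)(z)$, which is $f*(L_Xg)$.

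For the complexified Lie algebra $\mathfrak g_{\mathbb{C}}$, I would write $X=X_1+iX_2$ with $X_1,X_2\in\mathfrak g$, set $L_X:=L_{X_1}+iL_{X_2}$, and conclude (\ref{Lie group,commutation,convolution}) for $X$ from the two real cases by $\mathbb{C}$-linearity of both sides in $X$. This is precisely the generality needed in the paper, since $D$ and $D^{*}$ are realized as the elements $D_{H_1}$, $D^{*}_{H_1}$ of $\mathfrak h_1^{\mathbb{C}}$.

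The only genuinely analytic point, and the one place I would be careful, is the interchange of $\frac{d}{dt}$ with $\int_G$. This is justified by dominated convergence, provided $g(z\exp(tX))$ and its $t$-derivative are bounded uniformly for $t$ near $0$ and integrably in $z$ against $f(xz^{-1})$, which holds for the Schwartz-type functions in play; the change-of-variables Jacobian (the modular factor) is harmless here precisely because we work on unimodular groups. No other obstacle arises.
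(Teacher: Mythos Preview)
Your argument is correct and follows essentially the same route as the paper: identify the flow of a left-invariant $X$ with right multiplication by $\exp(tX)$, show $(f*g)(x\exp(tX))=(f*(R_{\exp(tX)}g))(x)$ via a change of variables, and differentiate at $t=0$. You are in fact more careful than the paper on two points it leaves implicit: the modular factor arising in the substitution (handled by unimodularity of the Heisenberg group), and the extension to $\mathfrak g_{\mathbb C}$ by $\mathbb C$-linearity.
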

\begin{proof}
Suppose $X\in \mathfrak{g}$, let $\exp(tX)$ denote the one parameter subgroup generated by $X$ and $\exp(tX).x$ denote the action of $\exp(tX)$ on $G$, i.e. $x\in G$ travels along the flow generated by $X$. Then
\begin{align*}
    \int_{G}f\left((\exp(tX).x)y^{-1}\right)g(y)\,dy &=\int_{G}f\left(x\exp(tX)y^{-1}\right)g(y)\,dy \\
        &=\int_{G}f\left(x\left(y\exp(-tX)\right)^{-1}\right)g(y)\,dy \\
        &=\int_{G}f\left(xy^{-1}\right) g\left(\exp(tX).y\right)\,L^*_{\exp(tX)}dy \\
        &=\int_{G}f\left(xy^{-1}\right) g\left(\exp(tX).y\right)\,dy 
\end{align*}
which implies the identity (\ref{Lie group,commutation,convolution}).
\end{proof}

\subsection{Stationary Solutions}\label{stationary,solutions}
We use relations (\ref{D,Lie algebra}) to find two families of stationary solutions to Equation (\ref{hatree,density,form,main}).
\begin{proposition}\label{stationary,solution}
Suppose $v\in L^1\left(\mathbb{R}^2\right)$, there are two families of stationary solutions to Equation (\ref{hatree,density,form,main}),
\begin{enumerate}[label=(\roman*)]
    \item $\displaystyle \Pi_{\phi}(x,y)=\phi(x-y)\exp\left(ib\frac{\Omega(x,y)}{2}\right)$, for arbitrary $\phi$ on $\mathbb{R}^{2}$,
    \item $\displaystyle\bar{\Pi}_{\phi}(x,y)=\phi(x-y)\exp\left(-ib\frac{\Omega(x,y)}{2}\right)$, where $\phi$ is of radial symmetry, i.e. $\phi(x)=\phi(|x|)$.
\end{enumerate}
\end{proposition}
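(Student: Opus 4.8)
The plan is to verify the two candidates directly. Write $\gamma$ for either $\Pi_{\phi}$ or $\bar{\Pi}_{\phi}$; being a stationary solution of Equation~(\ref{hatree,density,form,main}) means $\partial_{t}\gamma=0$ together with $[H+\rho_{\gamma}*v,\gamma]=0$. The first thing to observe is that the nonlinearity is inert here: since $\Omega(x,x)=0$, both candidates satisfy $\rho_{\gamma}(x)=\gamma(x,x)=\phi(0)$, a constant, so — using $v\in L^{1}(\mathbb{R}^{2})$ — we get $\rho_{\gamma}*v=\big(\phi(0)\int_{\mathbb{R}^{2}}v\big)\,\mathrm{Id}$, a scalar multiple of the identity, which commutes with $\gamma$ trivially. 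Hence it suffices to prove $[H,\Pi_{\phi}]=0$ for arbitrary $\phi$ and $[H,\bar{\Pi}_{\phi}]=0$ for radial $\phi$; equivalently, $(H_{x}-\bar{H}_{y})\gamma(x,y)=0$, using that $[H,\gamma]$ has integral kernel $(H_{x}-\bar{H}_{y})\gamma(x,y)$ (consistent with the propagator $e^{-i(H_{x}-\bar{H}_{y})t}$ of Section~\ref{many,particle,H}).

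To compute $(H_{x}-\bar{H}_{y})\gamma$ I would use the decomposition $H=H_{h}+H_{r}-b$, hence $\bar{H}=H_{h}-H_{r}-b$, from Section~\ref{prop,H}, so that the constants drop out and $H_{x}-\bar{H}_{y}=(H_{h,x}-H_{h,y})+(H_{r,x}+H_{r,y})$ splits into a second-order (Hermite) part and a first-order (rotation) part. Working in complex coordinates $z_{x},z_{y}$ with $z_{u}=z_{x}-z_{y}$ and differentiating $\psi(x-y)e^{\pm\frac{ib}{2}\Omega(x,y)}$, the decisive bookkeeping point is a cancellation of quadratic terms: in $\Delta_{x}=4\partial_{z_{x}}\partial_{\bar{z}_{x}}$ the contribution where both derivatives hit the phase is $\propto|z_{y}|^{2}\psi(u)$, in $\Delta_{y}$ it is $\propto|z_{x}|^{2}\psi(u)$, and these combine with the potential term $\frac{b^{2}}{4}(|x|^{2}-|y|^{2})\psi(u)$ to give exactly zero, leaving $(H_{h,x}-H_{h,y})\gamma$ equal to a first-order ``drift'' $\pm b\big(z_{u}\partial_{z}\psi-\bar{z}_{u}\partial_{\bar{z}}\psi\big)e^{\pm\frac{ib}{2}\Omega}$; likewise all the $\frac{b^{2}}{4}$ cross-terms $\bar{z}_{x}z_{y}+z_{x}\bar{z}_{y}-\cdots$ produced by $H_{r,x}+H_{r,y}$ cancel, leaving $b\big(\bar{z}_{u}\partial_{\bar{z}}\psi-z_{u}\partial_{z}\psi\big)e^{\pm\frac{ib}{2}\Omega}$. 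For $\Pi_{\phi}$ (phase $e^{+\frac{ib}{2}\Omega}$) the Hermite drift and the rotation term are negatives of one another, so $(H_{x}-\bar{H}_{y})\Pi_{\phi}\equiv0$ for every $\phi$; for $\bar{\Pi}_{\phi}$ (phase $e^{-\frac{ib}{2}\Omega}$) the two contributions combine, and one obtains
\[
    \left(H_{x}-\bar{H}_{y}\right)\bar{\Pi}_{\phi}(x,y)=2\left(H_{r}\phi\right)(x-y)\,e^{-\frac{ib}{2}\Omega(x,y)},
\]
where $H_{r}$ is the rotation vector field $-ib(x^{2}\partial_{x^{1}}-x^{1}\partial_{x^{2}})$. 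Since $e^{-\frac{ib}{2}\Omega}$ is nowhere zero, the right-hand side vanishes iff $H_{r}\phi=0$, i.e.\ iff $\phi$ is rotation invariant — precisely the radial-symmetry hypothesis in (ii). Combined with the first paragraph, this establishes both (i) and (ii).

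Finally, I would record the structural reason behind (i) — the route indicated in Appendix~\ref{Heisenberg,group}, which is how these solutions arise. The operator with kernel $\phi(x-y)e^{\frac{ib}{2}\Omega(x,y)}$ is $f\mapsto\phi\,\natural\,f$, i.e.\ left twisted convolution, which under the lift $\tilde{f}(p,q,t)=\sqrt{2\pi}e^{-it}f(p,q)$ becomes left group convolution on the reduced Heisenberg group; since $D$ and $D^{*}$ correspond to left-invariant vector fields by~(\ref{D,Lie algebra}), Lemma~\ref{commutation,Lx,convolution} yields $D^{*}D(\phi\,\natural\,f)=\phi\,\natural\,(D^{*}Df)$, hence $[H,\Pi_{\phi}]=0$ at once, for all $\phi$. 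From this viewpoint $\bar{\Pi}_{\phi}$ is instead \emph{right} convolution by $\phi$, which commutes with $H$ iff $\phi$ commutes with the (radial) convolution symbol of $H$; that radial functions are central in the twisted-convolution algebra is classical, and in our setting it is also immediate from Lemma~\ref{H,spectral,structure}, since a radial $\phi$ expanded in the Laguerre functions $L_{k}(\frac{b}{2}|\cdot|^{2})e^{-\frac{b}{4}|\cdot|^{2}}$ yields $\bar{\Pi}_{\phi}=\sum_{k}l_{k}P_{k}=l(H)$, a function of $H$. The main obstacle is purely organizational — keeping the holomorphic/antiholomorphic derivatives straight so the quadratic terms visibly cancel; once the computation is arranged around the splitting $H_{x}-\bar{H}_{y}=(H_{h,x}-H_{h,y})+(H_{r,x}+H_{r,y})$, there is no analytic difficulty.
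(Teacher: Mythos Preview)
Your proposal is correct, and your direct computation via the splitting $H_{x}-\bar{H}_{y}=(H_{h,x}-H_{h,y})+(H_{r,x}+H_{r,y})$ lands on exactly the same endgame formula $2(H_{r}\phi)(x-y)e^{-ib\Omega/2}$ that the paper obtains for~(ii). The organization differs, though. The paper does \emph{not} compute the Hermite piece at all: for~(i) it invokes the Heisenberg-group/twisted-convolution argument (your final paragraph) as the \emph{primary} proof, and for~(ii) it writes $H_{x}-\bar{H}_{y}=\big(H_{x}-\bar{H}_{x}+H_{y}-\bar{H}_{y}\big)+\big(\bar{H}_{x}-H_{y}\big)$ and kills the second bracket in one stroke by observing that $\overline{\bar{\Pi}_{\phi}}=\Pi_{\bar\phi}$, so $(\bar{H}_{x}-H_{y})\bar{\Pi}_{\phi}=\overline{(H_{x}-\bar{H}_{y})\Pi_{\bar\phi}}=0$ by part~(i); only the rotation piece $2(H_{r,x}+H_{r,y})\bar{\Pi}_{\phi}$ survives and is computed directly. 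So the paper leverages~(i) inside~(ii) via complex conjugation, sidestepping the quadratic-term bookkeeping that you flag as the ``main obstacle.'' Your route is more elementary and treats both families uniformly by brute force; the paper's route is shorter but relies on the abstract Lemma~\ref{commutation,Lx,convolution} and the conjugation trick. Both are valid, and your closing remarks (left vs.\ right twisted convolution, and $\bar{\Pi}_{\phi}=l(H)$ for radial $\phi$) in fact go a bit beyond what the paper states.
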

\begin{proof}
By the correspondence (\ref{D,Lie algebra}), we regard $D$ and $D^*$ as vector fields of $H_1$. 
Since the Lebesgue measure on $H_{1}^{red}$ is bi-invariant and the group convolution on $H_1$ is related to the twisted convolution by $\tilde{f}*\tilde{g}=\sqrt{2\pi}\widetilde{f\natural g}$, using Lemma \ref{commutation,Lx,convolution}, we conclude that the Hamiltonian $H=D^*D$ commutes with the twisted convolution $\natural$. As a result, 
\begin{align*}
        & [D^*D,\Pi_{\phi}]=D^*[D,\Pi_{\phi}]+[D^{*},\Pi_{\phi}]D=0\\
    \implies & H_{x}\int_{\mathbb{R}^{2}}\Pi_{\phi}(x,y)f(y)dy-\int_{\mathbb{R}^{2}}\Pi_{\phi}(x,y)H_{y}f(y)dy\\
        & =\int_{\mathbb{R}^{2}}\left(H_{x}\Pi_{\phi}(x,y)-\bar{H}_{y}\Pi_{\phi}(x,y)\right)f(y)dy=0,\ \forall f\in\mathcal{S}(\mathbb{R}^{2})
\end{align*}
Besides $\Pi_{\phi}(x,x)=\phi(0)$ and $v*\phi(0)=\phi(0)\int v(x)\,dx$ are constant, $\Pi_{\phi}$ is a stationary solution to (\ref{hatree,density,form,main}).

Meanwhile, if we calculate $\left(H_{x}-\bar{H}_{y}\right)\bar{\Pi}_{\phi}$ directly, 
\begin{align*}
    \left(H_{x}-\bar{H}_{y}\right)\bar{\Pi}_{\phi}&=\left(H_{x}-\bar{H}_{x}-\bar{H}_{y}+H_{y}\right)\bar{\Pi}_{\phi}+\left(\bar{H}_{x}-H_{y}\right)\bar{\Pi}_{\phi}\\
        &=2ib\left(xJ\nabla_{x}+yJ\nabla_{y}\right)\bar{\Pi}_{\phi}\\
        &=2ib(x-y)^{T}J\left(\nabla_{x}\bar{\phi}(x-y)\right)\exp\left(-\frac{ib\Omega(x,y)}{2}\right),
\end{align*}
which vanishes if $\phi$ is a function of radial symmetry.
\end{proof}

\subsection{Transform}\label{transform}
We list some important results about the Fourier-Wigner transform $V$ and the Wigner transform $W$ from \cite[Chapter 1]{Folland89}. In the paper, we choose the reduced Planck constant $\hbar$ in \cite[Chapter 1]{Folland89} to be $b$ and use the following results when the dimension $d=1$.
\begin{proposition}{\cite[Proposition 1.42]{Folland89}}
\[
    \left\langle V(f_1,g_1),V(f_2,g_2) \right\rangle=\left(\frac{2\pi}{b}\right)^{d}\left\langle f_1,f_2\right\rangle\left\langle g_2,g_1\right\rangle,\quad f_{j},g_{j}\in L^2(\mathbb{R}^d),\ j=1,2.
\]
\end{proposition}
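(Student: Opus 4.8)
The plan is to prove the identity first for $f_j,g_j\in\mathcal{S}(\mathbb{R}^d)$, where every integral below converges absolutely and Fubini applies, and then to extend to general $L^2$ data by density; the identity itself exhibits $V$ as a constant multiple of an isometry, so this extension is harmless. The structural point driving the computation is that, for fixed $p$, the map $q\mapsto V(f,g)(p,q)$ is, up to the unimodular factor $e^{\frac{ib}{2}p\cdot q}$, the Fourier transform in $q$ of $x\mapsto f(x+pb)\overline{g(x)}$. So the $q$-integration reduces to Fourier inversion (equivalently Plancherel), and what remains is a single linear change of variables in $p$.

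Concretely, I would substitute the integral formula $V(f,g)(p,q)=\int_{\mathbb{R}^d}e^{iq\cdot x+\frac{ib}{2}p\cdot q}f(x+pb)\overline{g(x)}\,dx$ into $\langle V(f_1,g_1),V(f_2,g_2)\rangle=\int_{\mathbb{R}^{2d}}V(f_1,g_1)(p,q)\overline{V(f_2,g_2)(p,q)}\,dp\,dq$, using $y$ as the spatial variable in the conjugated factor. The phases $e^{\frac{ib}{2}p\cdot q}$ and $e^{-\frac{ib}{2}p\cdot q}$ cancel, leaving
\[
    \int_{\mathbb{R}^{2d}}\int_{\mathbb{R}^{2d}}e^{iq\cdot(x-y)}\,f_1(x+pb)\overline{g_1(x)}\,\overline{f_2(y+pb)}\,g_2(y)\,dx\,dy\,dp\,dq.
\]
Carrying out the $q$-integration via $\int_{\mathbb{R}^d}e^{iq\cdot(x-y)}\,dq=(2\pi)^d\delta(x-y)$ collapses $y$ to $x$ and gives
\[
    (2\pi)^d\int_{\mathbb{R}^d}\int_{\mathbb{R}^d}f_1(x+pb)\overline{f_2(x+pb)}\;\overline{g_1(x)}\,g_2(x)\,dx\,dp.
\]
Finally, for each fixed $x$ I would substitute $u=x+pb$ in the $p$-integral, which contributes a Jacobian factor $b^{-d}$; the $u$- and $x$-integrals then decouple, yielding
\[
    \Big(\tfrac{2\pi}{b}\Big)^d\Big(\int_{\mathbb{R}^d}f_1(u)\overline{f_2(u)}\,du\Big)\Big(\int_{\mathbb{R}^d}\overline{g_1(x)}\,g_2(x)\,dx\Big)=\Big(\tfrac{2\pi}{b}\Big)^d\langle f_1,f_2\rangle\,\langle g_2,g_1\rangle,
\]
which is the assertion.

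The only genuine subtlety is justifying the interchange of the $q$-integral with the $x,y,p$-integrals, i.e.\ the distributional identity for $\int e^{iq\cdot(x-y)}\,dq$. This is exactly why the computation is run on Schwartz data: there one may replace the $q$-step by the polarized Plancherel identity applied to the partial Fourier transforms in $q$ (a bona fide $L^2$ statement for each fixed $p$), then apply dominated convergence in $p$ and Fubini, and only afterwards pass to arbitrary $f_j,g_j\in L^2(\mathbb{R}^d)$ by approximation. Beyond that, the computation is routine; the one thing to watch is the Fourier normalization, so that the constant comes out as $(2\pi/b)^d$ rather than another power of $2\pi$.
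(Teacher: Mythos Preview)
Your proof is correct and is the standard direct computation: cancel the quadratic phase, apply Plancherel (or Fourier inversion) in the $q$-variable to collapse the spatial integrals, then change variables $u=x+pb$ to separate the $f$- and $g$-pairings and pick up the Jacobian $b^{-d}$. The density argument for passing from Schwartz data to $L^2$ is also handled appropriately.

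As for comparison: the paper does not give its own proof of this statement. It is quoted in the appendix as a background result from Folland's \emph{Harmonic Analysis in Phase Space} (Proposition~1.42 there), and the argument in Folland is essentially the same as yours. So there is nothing to contrast; your write-up fills in what the paper simply cites.
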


\begin{proposition}{\cite[Proposition 1.47]{Folland89}}
Suppose $f_{j},g_{j}\in L^2(\mathbb{R}^d)$,
        \begin{equation*}
            \overline{V(f_1,g_1)}\natural\overline{V(f_2,g_2)}=\left(\frac{2\pi}{b}\right)^{d}\langle g_2,f_1\rangle \overline{V(f_2,g_1)}.
        \end{equation*}
\end{proposition}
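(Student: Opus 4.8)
The plan is to recognize the left-hand side as a constant multiple of an $L^2(\mathbb{R}^{2d})$-pairing of two Fourier--Wigner transforms, and then to read off the result from the orthogonality relation of the preceding proposition (Folland's Proposition~1.42). Everything hinges on two elementary facts. First, directly from $\beta(p,q)f(x)=e^{iqx+ibpq/2}f(x+pb)$ one checks the Weyl-type composition law
\[
    \beta(w_1)\,\beta(w_2)=e^{ib\Omega(w_1,w_2)/2}\,\beta(w_1+w_2),\qquad w_1,w_2\in\mathbb{R}^{2d};
\]
in particular $\beta(w)^{-1}=\beta(-w)$ since $\Omega(w,-w)=0$, and equivalently $\beta(w-w')=e^{ib\Omega(w,w')/2}\,\beta(w)\,\beta(-w')$. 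Second, a translation of the integration variable in the definition of $V$ yields the conjugation symmetry $\overline{V(f,g)(w)}=V(g,f)(-w)$.

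Granting these, I would argue as follows. Expand the twisted convolution at a point $w$, writing $w'$ for the integration variable; the factor $\beta(w-w')=e^{ib\Omega(w,w')/2}\beta(w)\beta(-w')$ cancels exactly the twisting exponential $e^{ib\Omega(w,w')/2}$ built into $\natural$, so the expression collapses to
\[
    \int_{\mathbb{R}^{2d}}\langle\beta(-w)g_1,\,\beta(-w')f_1\rangle\,\langle g_2,\,\beta(w')f_2\rangle\,dw'.
\]
Setting $h:=\beta(-w)g_1$ and substituting $w'\mapsto-w'$, the first inner product becomes $\overline{V(f_1,h)(w')}$ and, by the conjugation symmetry, the second becomes $V(g_2,f_2)(w')$; hence the integral equals $\langle V(g_2,f_2),\,V(f_1,h)\rangle_{L^2(\mathbb{R}^{2d})}$. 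Applying Folland's Proposition~1.42 with arguments $(g_2,f_2;\,f_1,h)$ gives $(2\pi/b)^d\langle g_2,f_1\rangle\langle h,f_2\rangle$, and one finishes by noting $\langle h,f_2\rangle=\langle\beta(-w)g_1,f_2\rangle=V(g_1,f_2)(-w)=\overline{V(f_2,g_1)(w)}$, once more by the conjugation symmetry. Reassembling the constants produces the asserted identity. (One could instead substitute the integral definitions of $V$ and $\natural$ directly and grind through the resulting $\mathbb{R}^{4d}$-integral, but the route above keeps the algebra tractable.)

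The main obstacle here is bookkeeping rather than analysis: the one thing to watch carefully is sign consistency among the several $\Omega$-phases the composition law produces, and the one genuinely technical point is the use of Fubini needed to commute the $w'$-integral past the spatial integrals concealed inside the two inner products. I would handle the latter by first proving the identity for $f_j,g_j\in\mathcal{S}(\mathbb{R}^d)$, where every integrand is jointly Schwartz in $(w',x)$ and Fubini applies without scruple, and then extending to all $f_j,g_j\in L^2(\mathbb{R}^d)$ by density --- both sides being continuous and (sesqui)linear in each of the four arguments, the left side because $\|V(f,g)\|_{L^2}=(2\pi/b)^{d/2}\|f\|_{L^2}\|g\|_{L^2}$ (the diagonal case of Proposition~1.42) and $\|u\natural v\|_{L^\infty}\le\|u\|_{L^2}\|v\|_{L^2}$, the right side obviously.
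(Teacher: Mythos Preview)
Your argument is correct. The paper does not prove this proposition at all: it merely quotes it from Folland's book as a tool (Section~7.3, ``Transform''), so there is no paper-proof to compare against. Your route via the Weyl composition law $\beta(w_1)\beta(w_2)=e^{ib\Omega(w_1,w_2)/2}\beta(w_1+w_2)$, the conjugation symmetry $\overline{V(f,g)(w)}=V(g,f)(-w)$, and reduction to the orthogonality relation (Proposition~1.42) is exactly the standard one, and your bookkeeping of the phases and your density argument for the Fubini step are both sound.
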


\begin{proposition}{\cite[Proposition 1.94]{Folland89}}\label{V,W,properties}
\begin{align*}
    W\left(\beta(a,e)f,\beta(c,d)g\right)(\xi,x)
    &=\exp\left(-\frac{ib}{2}\Omega\left((a,e),(c,d)\right)+i\left\langle (a,e)-(c,d),(\xi,x)\right\rangle\right)\\
    &\quad \cdot W(f,g)\left(\xi-\frac{b(e+d)}{2},x+\frac{b(a+c)}{2}\right).
    \end{align*}
where $a,e,c,d,x,\xi\in\mathbb{R}^d$.
\end{proposition}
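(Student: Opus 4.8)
\emph{Proof proposal.} This is \cite[Proposition 1.94]{Folland89} (with the reduced constant taken to be $b$), so one may simply cite it; nonetheless, here is how the direct computation goes. The strategy is to establish the covariance first at the level of the Fourier--Wigner transform $V$ and then transfer it to $W$ through the Fourier transform.

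First I would record the composition law of the Heisenberg representation: from $\beta(p,q)f(x)=e^{iqx+ibpq/2}f(x+pb)$ a one-line computation gives
\[
    \beta(p_1,q_1)\beta(p_2,q_2)=e^{\frac{ib}{2}\Omega((p_1,q_1),(p_2,q_2))}\,\beta(p_1+p_2,q_1+q_2),
\]
and in particular, since $\Omega((c,d),(-c,-d))=0$ and $\beta$ is unitary, $\beta(c,d)^{*}=\beta(c,d)^{-1}=\beta(-c,-d)$. Next, writing $V(\beta(a,e)f,\beta(c,d)g)(p,q)=\langle\beta(-c,-d)\beta(p,q)\beta(a,e)f,g\rangle$ and applying the composition law twice, the two stray phases combine into a single affine function of $(p,q)$:
\[
    V(\beta(a,e)f,\beta(c,d)g)(p,q)=e^{\frac{ib}{2}[\,p(e+d)-q(a+c)+\Omega((a,e),(c,d))\,]}\,V(f,g)(p+a-c,\,q+e-d),
\]
the only slightly delicate point being the identity $\Omega((p,q),(a,e))+\Omega((-c,-d),(p+a,q+e))=p(e+d)-q(a+c)+\Omega((a,e),(c,d))$.

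Finally, since $W(\cdot,\cdot)$ is $\tfrac{1}{2\pi}$ times the Fourier transform of $V(\cdot,\cdot)$, I would substitute the previous display into the definition of $W$ and change variables $p\mapsto p-(a-c)$, $q\mapsto q-(e-d)$ so as to recover $V(f,g)$ evaluated at $(p,q)$. The argument shift contributes the modulation $e^{i\langle(a,e)-(c,d),(\xi,x)\rangle}$; the linear part of the phase $\tfrac{ib}{2}[p(e+d)-q(a+c)]$ translates the spectral variables to $\xi-\tfrac{b(e+d)}{2}$ and $x+\tfrac{b(a+c)}{2}$; and, using $\tfrac{b}{2}\Omega((a,e),(c,d))+\tfrac{b}{2}[(-a+c)(e+d)-(-e+d)(a+c)]=-\tfrac{b}{2}\Omega((a,e),(c,d))$, the residual constant phases collapse to $e^{-\frac{ib}{2}\Omega((a,e),(c,d))}$. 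This is precisely the asserted formula; for $d>1$ the identical computation runs coordinatewise. The main obstacle is entirely bookkeeping — propagating the signs, the factors of $b$, and the antisymmetry of $\Omega$ through the two uses of the composition law and the Fourier change of variables so that the residual $(p,q)$-linear phases land exactly as the two argument shifts of $W(f,g)$ while the constants cancel down to a single symplectic term. Integrability is automatic for $f,g\in\mathcal{S}$, and the $L^2$ statement follows by density.
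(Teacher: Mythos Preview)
Your computation is correct; the composition law, the $V$-covariance formula, the Fourier change of variables, and the collapse of the residual constant phase to $e^{-ib\Omega((a,e),(c,d))/2}$ all check out line by line. Note, however, that the paper does not give its own proof of this proposition: it is simply quoted from \cite[Proposition~1.94]{Folland89} as a tool, so there is no ``paper's proof'' to compare against. Your direct verification is essentially the standard argument (and is how Folland proves it, modulo his normalisation conventions), so citing the reference alone would have sufficed.
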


Hermite functions and associated Laguerre polynomials are related by the following two theorems.
\begin{theorem}{\cite[Theorem 1.104]{Folland89}}\label{V,Hermit,Laguerre}
Suppose $p,q\in\mathbb{R}$, and $w=p+iq$. Then
\begin{equation*}
    V(h_{j},h_{k})(p,q)=\left\{
    \begin{array}{lc}
        \displaystyle\sqrt{\frac{k!}{j!}}\left(\sqrt{\frac{b}{2}}w\right)^{j-k}e^{-b|w|^2/4}L^{j-k}_{k}\left(\frac{b|w|^2}{2}\right),& j\ge k \\
        \displaystyle (-1)^{j+k}\sqrt{\frac{j!}{k!}}\left(\sqrt{\frac{b}{2}}\bar{w}\right)^{k-j}e^{-b|w|^2/4}L^{k-j}_{j}\left(\frac{b|w|^2}{2}\right),& j\le k  
    \end{array}
    \right.
\end{equation*}
\end{theorem}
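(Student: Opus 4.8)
The plan is to prove the identity from the base cases together with two ladder recursions for $V(h_j,h_k)$, each of which collapses to a standard contiguous relation for the associated Laguerre polynomials; the branch $j\le k$ then follows from the branch $j\ge k$ by a conjugation symmetry of $V$. Throughout write $w=p+iq$ and $\lambda=b|w|^2/2$, and recall the ladder identities $a h_j=\sqrt{jb}\,h_{j-1}$ and $a^\dagger h_j=\sqrt{(j+1)b}\,h_{j+1}$, together with the commutators (obtained just as in the proof of Lemma \ref{Laguerre,asymptotic,c0})
\[
    [a,\beta(p,q)]=-\tfrac{b}{\sqrt2}\,\bar w\,\beta(p,q),\qquad [a^\dagger,\beta(p,q)]=-\tfrac{b}{\sqrt2}\,w\,\beta(p,q).
\]

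First I would settle the base cases. Inserting $h_0(x)=(b\pi)^{-1/4}e^{-x^2/2b}$ into the definition of $V$ produces a one–dimensional Gaussian integral, and completing the square gives $V(h_0,h_0)(p,q)=e^{-b|w|^2/4}$, which is the claimed formula at $j=k=0$ since $L_0^0\equiv1$. Next, pairing the two commutators against $h_j$ and $h_k$, using linearity of $\langle\beta(p,q)\,\cdot\,,\,\cdot\,\rangle$ in the first argument and the ladder identities, I obtain
\begin{align*}
    V(h_{j+1},h_k)&=\sqrt{\tfrac{k}{j+1}}\,V(h_j,h_{k-1})+\tfrac{\sqrt b\,w}{\sqrt{2(j+1)}}\,V(h_j,h_k),\\
    V(h_j,h_{k+1})&=\sqrt{\tfrac{j}{k+1}}\,V(h_{j-1},h_k)-\tfrac{\sqrt b\,\bar w}{\sqrt{2(k+1)}}\,V(h_j,h_k).
\end{align*}
Iterating the first recursion from $k=0$, where the leading term drops out, yields $V(h_j,h_0)=\tfrac1{\sqrt{j!}}(\sqrt{b/2}\,w)^{j}e^{-b|w|^2/4}$, and symmetrically $V(h_0,h_k)=\tfrac{(-1)^k}{\sqrt{k!}}(\sqrt{b/2}\,\bar w)^{k}e^{-b|w|^2/4}$ from the second; both agree with the statement.

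For the general case with $j\ge k$ I would induct on the total level $n=j+k$ (which is natural since $V(h_j,h_k)$ is an eigenfunction of $H_h$ with eigenvalue $(j+k+1)b$), the levels $n=0,1$ being covered above. Substituting the conjectured closed forms for $V(h_j,h_{k-1})$ (level $n-1$) and $V(h_j,h_k)$ (level $n$) into the first recursion, all the normalization factors $\sqrt{k!/(j+1)!}$, the monomial $(\sqrt{b/2}\,w)^{\,j+1-k}$ and the Gaussian $e^{-b|w|^2/4}$ factor out, and the whole identity collapses to
\[
    L_k^{\,j+1-k}(\lambda)=L_{k-1}^{\,j+1-k}(\lambda)+L_k^{\,j-k}(\lambda),
\]
i.e.\ the standard relation $L_n^{\alpha+1}=L_{n-1}^{\alpha+1}+L_n^{\alpha}$ with $n=k$, $\alpha=j-k$; this closes the induction on the $j\ge k$ side. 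Finally, for $j\le k$ I would invoke the elementary identity $\overline{V(f,g)(p,q)}=V(g,f)(-p,-q)$ — valid because $\beta(p,q)^{*}=\beta(-p,-q)$ — with $f=h_k$, $g=h_j$: since $(p,q)\mapsto(-p,-q)$ sends $w\mapsto-w$ while fixing $|w|^{2}$, this turns the already proved $j\ge k$ formula into exactly the $j\le k$ branch, the sign $(-1)^{k-j}=(-1)^{j+k}$ producing the stated factor.

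I expect the main obstacle to be organizational rather than conceptual: one must track the square–root normalizations, the factors of $\sqrt{b/2}$, and in particular the interchange of $w$ and $\bar w$ between the two index regimes, so that each recursion genuinely reduces to a single Laguerre contiguous relation rather than a combination that must then be further massaged; one must also be careful about exactly which data (the two lowest levels) are needed to seed the induction and about the degenerate coefficients at $k=0$. An alternative route bypasses the recursions: the closed form $\sum_{j\ge0}\frac{s^j}{\sqrt{j!}}h_j(x)=(b\pi)^{-1/4}\exp\!\left(-\frac{x^2}{2b}+\sqrt{2/b}\,s x-\frac{s^2}{2}\right)$ — itself derived from $a^\dagger h_j=\sqrt{(j+1)b}\,h_{j+1}$ by solving a first–order ODE — turns $\sum_{j,k}\frac{s^j\bar t^{\,k}}{\sqrt{j!\,k!}}V(h_j,h_k)(p,q)$ into one Gaussian integral, which after completing the square equals $e^{-b|w|^2/4}$ times the exponential of a bilinear form in $(s,\bar t)$; comparing this termwise with the Laguerre generating function $\sum_{n\ge0}L_n^{\alpha}(\lambda)r^{n}=(1-r)^{-\alpha-1}e^{-\lambda r/(1-r)}$ reads off all the coefficients at once, though the bilinear–form bookkeeping there is of comparable weight.
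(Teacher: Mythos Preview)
The paper itself does not prove this theorem: it is quoted verbatim from Folland's book in the appendix as a known result, so there is no proof in the paper against which to compare your proposal.

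Your approach via ladder recursions and the conjugation symmetry is sound, and the commutator identities, the two recursions, the base cases $V(h_j,h_0)$ and $V(h_0,h_k)$, and the final symmetry step $\overline{V(f,g)(p,q)}=V(g,f)(-p,-q)$ are all correct. There is, however, one small organizational gap: your induction on $n=j+k$ restricted to the regime $j\ge k$ is not self-contained at the diagonal. To obtain $V(h_k,h_k)$ from the first recursion you must feed in $V(h_{k-1},h_{k-1})$ and $V(h_{k-1},h_k)$, and the latter lies in the $j<k$ regime, which at that stage you have not yet established. The fix is easy and does not require new ideas: either handle the diagonal separately via your second recursion (setting $j=k+1$ there reduces the step to the contiguous relation $(k+1)L_{k+1}^{0}(\lambda)=(k+1)L_k^{0}(\lambda)-\lambda L_k^{1}(\lambda)$), or run the induction over all pairs $(j,k)$ simultaneously, checking the first recursion against the $j\ge k$ and $j\le k$ closed forms in their respective ranges. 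Your alternative generating-function route avoids this bookkeeping entirely and is the argument closest to the standard textbook derivation.
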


\begin{theorem}{\cite[Theorem 1.105]{Folland89}}\label{W,Hermite,Laguerre}
Suppose $x,\xi\in\mathbb{R}$ and $z=x+i\xi$. Then
\begin{equation*}
    W(h_j,h_k)(\xi,x)=\left\{\begin{array}{cc}
        \displaystyle (-1)^k\frac{2}{b}\sqrt{\frac{k!}{j!}}\left(\sqrt{\frac{2}{b}}\bar{z}\right)^{j-k}L^{j-k}_{k}\left(\frac{2|z|^2}{b}\right)e^{-|z|^2/b}, & j\ge k \\
        \displaystyle (-1)^j\frac{2}{b}\sqrt{\frac{j!}{k!}}\left(\sqrt{\frac{2}{b}}z\right)^{k-j}L_{j}^{k-j}\left(\frac{2|z|^2}{b}\right)e^{-|z|^2/b}, & j\le k
    \end{array}
    \right.
\end{equation*}
\end{theorem}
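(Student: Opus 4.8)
The plan is to deduce Theorem~\ref{W,Hermite,Laguerre} from the closed form of $V(h_j,h_k)$ recorded in Theorem~\ref{V,Hermit,Laguerre}, using the fact that, by definition, $W$ is a two–dimensional Fourier transform of $V$: $W(h_j,h_k)(\xi,x)=\frac{1}{2\pi}\int_{\mathbb{R}^2}V(h_j,h_k)(p,q)\,e^{-i\xi p-ixq}\,dp\,dq$. Writing $w=p+iq$ and, as in the statement, $z=x+i\xi$, one has $\xi p+xq=\operatorname{Im}(zw)$, so everything reduces to computing the Fourier transform, against the kernel $e^{-i\operatorname{Im}(zw)}$, of the Laguerre–Gaussian function $w^{m}e^{-b|w|^2/4}L_k^{m}(b|w|^2/2)$ for $m=j-k\ge 0$ (and of its complex conjugate when $m\le 0$).

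To carry out that transform I would pass to polar coordinates $w=re^{i\theta}$. The angular integral turns $e^{im\theta}$ into $i^{-m}$ times $e^{-im\phi}$, where $\phi$ is the argument of $z$ — the conjugate angle $-m\phi$, and hence the appearance of $\bar z$ rather than $z$ in the final answer, being exactly the effect of the imaginary (symplectic) pairing $\operatorname{Im}(zw)$ — and leaves an order–$m$ Hankel transform of the radial factor. The key input is then the classical self–reciprocity of Laguerre–Gaussians under the Hankel transform: up to a sign $(-1)^{k}$, $r^{m}L_k^{m}(ar^2)e^{-ar^2/2}$ is reproduced by its order–$m$ Hankel transform with the scale $a$ inverted. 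Taking $a=b/2$, the Gaussian $e^{-b|w|^2/4}$ becomes $e^{-|z|^2/b}$ and the Laguerre argument $b|w|^2/2$ becomes $2|z|^2/b$, matching the claimed form; the $\frac{1}{2\pi}$ in the definition of $W$ together with the Jacobian of the rescaling produces the $\frac2b$ prefactor and the powers of $\sqrt{2/b}$.

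The bookkeeping steps, in order, are: (i) reduce $W$ to the Fourier transform of $V$ and substitute Theorem~\ref{V,Hermit,Laguerre}; (ii) separate radial and angular parts in polar coordinates; (iii) invoke the Hankel–transform identity for Laguerre–Gaussians, which itself can be verified from the generating function $\sum_{k\ge0}L_k^{m}(s)t^{k}=(1-t)^{-m-1}\exp\!\bigl(-st/(1-t)\bigr)$ together with the elementary Fourier transform of a Gaussian carrying a linear phase, by matching powers of $t$; (iv) collect the normalisation constants coming from the Hermite functions $h_j$ (with $\hbar=b$), the factor $\frac{1}{2\pi}$, and the rescaling Jacobian; (v) obtain the case $j\le k$ either by repeating the computation with $\bar w$ in place of $w$, or more cheaply from the conjugation symmetry $\overline{V(h_j,h_k)}=V(h_k,h_j)$ and the corresponding symmetry of $W$.

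The substance, and the only real obstacle, is the constant-and-sign bookkeeping: pinning down the sign $(-1)^{k}$ (resp. $(-1)^{j}$), the prefactor $\frac2b$, and the exact exponent in $\bigl(\sqrt{2/b}\,\bar z\bigr)^{j-k}$, since the Fourier transform simultaneously reflects $w\mapsto\bar z$ and inverts the length scale $b\mapsto 2/b$ while every normalisation built into $h_j$ must be threaded through consistently. None of the individual steps is deep; an equivalent route that avoids polar coordinates is to do the $q$–integral in the definition of $W$ first so as to recover the spatial (Wigner–distribution) form of $W(h_j,h_k)$, substitute the generating function for the $h_j$, evaluate the resulting Gaussian integral in $x$, and again read off the Laguerre polynomials by matching powers of the generating-function parameter.
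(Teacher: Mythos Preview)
The paper does not prove this statement at all: Theorem~\ref{W,Hermite,Laguerre} is quoted verbatim from \cite[Theorem~1.105]{Folland89} in the appendix (Section~\ref{transform}) and used as a black box, so there is no ``paper's own proof'' to compare against. Your outline---Fourier transforming the known formula for $V(h_j,h_k)$ from Theorem~\ref{V,Hermit,Laguerre}, passing to polar coordinates, and invoking the Hankel self-reciprocity of Laguerre--Gaussians---is a standard and correct route to the result, and is close in spirit to how Folland derives it; the alternative you mention (computing the Wigner distribution directly via generating functions) is also viable.
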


Let $\mu$ be the Metaplectic representation from $Mp(2d,\mathbb{R})$ to $U\left(L^2(\mathbb{R}^d)\right)$, with infinitesimal representation 
\[
    d\mu: \mathcal{A}=\begin{pmatrix}A & B\\ C& -A^{T}\end{pmatrix}\in \mathfrak{sp}(2d,\mathbb{R}) \mapsto -\frac{1}{2i}\begin{pmatrix}\hat{Q} & \hat{P}\end{pmatrix}\begin{pmatrix}
    A & B\\
    C & -A^{T}
    \end{pmatrix}\begin{pmatrix}0 & id\\ -id & 0 \end{pmatrix}
    \begin{pmatrix}\hat{Q} \\ \hat{P}\end{pmatrix},
\]
where $\hat{Q}=x$, $\hat{P}=-i\nabla_{x}$, $x\in\mathbb{R}^d$ and $id$ is the identity matrix on $\mathbb{R}^d$.
\begin{theorem}{\cite[Theorem 4.51]{Folland89}}\label{metaplectic,expression}
Suppose $\begin{pmatrix}
        A(t) & B(t)\\
        C(t) & D(t)\\
    \end{pmatrix}=\exp\left(\begin{pmatrix}
    A & B\\
    C & -A^{T}
    \end{pmatrix}t\right)$, where $\begin{pmatrix}
    A & B\\
    C & -A^{T}
    \end{pmatrix}\in\mathfrak{sp}(2d,\mathbb{R})$. For any time $T>0$ such that when $t\in[0,T]$, $\det(D(t))>0$, then
\begin{align}
    &\quad \mu\begin{pmatrix}
        A(t) & B(t)\\
        C(t) & D(t)\\
    \end{pmatrix}f(x)\\
    &=\frac{1}{\det\left(D(t)\right)^{1/2}(2\pi)^{n/2}}\int_{\mathbb{R}^n}\exp\left(-iS(x,\xi)\right)\hat{f}(-\xi)\,d\xi,\ x\in\mathbb{R}^n,\ t\in[0,T]\nonumber
\end{align}
where
\[
    S(x,\xi)=\frac{-\xi D(t)^{-1}C(t)\xi}{2}+\xi D(t)^{-1}x+\frac{xB(t)D(t)^{-1}x}{2},\quad x,\xi\in\mathbb{R}^d.
\]
\end{theorem}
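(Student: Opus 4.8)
The plan is to recognise the right-hand side of the claimed identity as the solution of a Schr\"odinger equation whose generator is the quadratic operator $d\mu(\mathcal{A})$, and then to invoke uniqueness of one-parameter unitary groups. Since $\mathcal{A}\in\mathfrak{sp}(2d,\mathbb{R})$, the curve $t\mapsto\exp(\mathcal{A}t)$ lies in $Sp(2d,\mathbb{R})$, so $t\mapsto\mu(\exp(\mathcal{A}t))=\exp\big(t\,d\mu(\mathcal{A})\big)$ is the one-parameter unitary group $e^{-itH}$ with $H:=i\,d\mu(\mathcal{A})$ self-adjoint. From the displayed formula for $d\mu$, $H$ is a second-order differential operator with polynomial coefficients of degree at most two; more precisely it is the Weyl quantization of the classical quadratic Hamiltonian $K$ on $\mathbb{R}^{2d}$ whose Hamiltonian flow is exactly $\exp(\mathcal{A}t)$. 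Writing $U(t)f(x)$ for the right-hand side of the asserted formula — well defined on the given interval because $\det D(t)>0$ there — it suffices to show, for $f$ in a core of $H$ (Schwartz functions will do), that $U(0)f=f$ and $i\partial_t U(t)f=H\,U(t)f$; then uniqueness finishes the proof.

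First I would check the initial condition: at $t=0$ one has $A(0)=I$, $B(0)=C(0)=0$, $D(0)=I$, so $S(0,x,\xi)=\xi\cdot x$ and the prefactor is $(2\pi)^{-d/2}$, whence $U(0)f(x)=(2\pi)^{-d/2}\int e^{-i\xi x}\hat f(-\xi)\,d\xi=f(x)$ by Fourier inversion. The core of the argument is the time differentiation. Using $\tfrac{d}{dt}\exp(\mathcal{A}t)=\mathcal{A}\exp(\mathcal{A}t)$ one gets the linear ODE system for $A(t),B(t),C(t),D(t)$, and from it the $t$-derivatives of $S$ and of $\log\det D(t)$. The content is that $S$ solves the Hamilton--Jacobi (eikonal) equation $\partial_t S+K(x,\partial_x S)=0$ and $|\det D(t)|^{-1/2}$ solves the associated transport equation; because $K$ is quadratic the WKB/van Vleck ansatz is \emph{exact}, i.e.\ there are no subprincipal corrections, and after one integration by parts in $\xi$ — which turns the multiplications by $\xi$ produced when $\partial_x$ hits the phase into $\xi$-derivatives landing on $\hat f(-\xi)$, matching the momentum part of $H$ — one recovers precisely $i\partial_t U(t)f=H\,U(t)f$. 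It is a good consistency check that $\partial_\xi\partial_x S=D(t)^{-T}$, so the natural half-density amplitude $|\det\partial_\xi\partial_x S|^{1/2}$ equals $|\det D(t)|^{-1/2}$, exactly the prefactor in the statement.

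Once the generator equation holds on a core, uniqueness of the strongly continuous unitary group solving $i\partial_t v=Hv$ (valid because $H$ is self-adjoint) forces $U(t)=e^{-itH}=\mu(\exp(\mathcal{A}t))$ on all of $L^2(\mathbb{R}^d)$ for $t\in[0,T]$, which is the claim. The step I expect to be the main obstacle is the choice of branch of $\det(D(t))^{1/2}$: the metaplectic representation is defined only up to a sign over $Sp(2d,\mathbb{R})$, so the prefactor must be pinned down, and this is precisely where the hypothesis $\det D(t)>0$ on $[0,T]$ is used — $\det D(0)=1$, the scalar $\det D(t)$ never vanishes, so the positive square root is continuous on $[0,T]$ and is the branch compatible with $\mu(\mathrm{id})=\mathrm{id}$. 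At a caustic time, where $\det D(t)=0$, the symplectic map ceases to be free relative to this position/momentum splitting, the oscillatory-integral representation degenerates, and a different generating function together with a Maslov sign correction would be required; that regime is excluded by hypothesis. An alternative route I would keep in reserve is to factor $\exp(\mathcal{A}t)$ into the standard generators of $Sp(2d,\mathbb{R})$ (Fourier transform, multiplication by $e^{ix^{T}Mx/2}$, and $f\mapsto|\det a|^{1/2}f(ax)$), apply the known explicit forms of $\mu$ on each factor, and compose; this reduces the identity to a Gaussian-integral evaluation of the composed phase plus cocycle bookkeeping of the signs, but the Schr\"odinger-equation argument above is cleaner to present.
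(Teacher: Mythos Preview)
The paper does not give its own proof of this statement: Theorem~\ref{metaplectic,expression} is quoted verbatim from \cite[Theorem 4.51]{Folland89} in the appendix and used as a black box in the proof of Theorem~\ref{pauli,fundamental solution}. So there is nothing in the paper to compare your argument against.

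That said, your proposal is a correct and standard route to the result. The key observations are exactly the ones you identify: (i) at $t=0$ the formula reduces to Fourier inversion; (ii) for a \emph{quadratic} Hamiltonian the WKB ansatz is exact, so the phase $S$ solves the Hamilton--Jacobi equation and the amplitude $(\det D(t))^{-1/2}$ solves the transport equation with no remainder; (iii) the hypothesis $\det D(t)>0$ on $[0,T]$ fixes the branch of the square root and keeps the oscillatory integral non-degenerate, resolving the metaplectic sign ambiguity by continuity from $t=0$. Uniqueness of the unitary group generated by the self-adjoint $H=i\,d\mu(\mathcal{A})$ then closes the argument. Your alternative route---factoring $\exp(\mathcal{A}t)$ into elementary symplectic generators and composing the known formulas for $\mu$ on each---is in fact closer to how Folland organises the proof, but the Schr\"odinger-equation argument you outline is equally valid and arguably more direct for the one-parameter setting. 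The only place I would ask you to be more explicit is the ``one integration by parts in $\xi$'' step: the verification that $i\partial_t U(t)f=HU(t)f$ involves matching several quadratic terms, and while it is routine, it is where errors creep in if the block ODEs $\dot A=AA(t)+BC(t)$, $\dot D=CB(t)-A^TD(t)$, etc., are not written out carefully.
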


\subsection{Global Well-posedness}\label{global,trace,class}
We establish a global well-posedness result for Equation (\ref{hatree,density,form}) when 
\[
    \left\|h^{1/2}\Gamma_{0}h^{1/2}\right\|_{tr}<\infty,\quad \Gamma_{0}^*=\Gamma_{0},\quad \Gamma_{0}\ge 0 \text{ and } w(x)=\frac{1}{|x|}.
\]
The associated total energy is 
\begin{equation}\label{total,energy}
    \mathcal{E}_{HF}\left(\Gamma(t)\right)=Tr\left(h^{1/2}\Gamma(t)h^{1/2}\right)+\frac{1}{2}\int_{\mathbb{R}^3}\left(\rho_{\Gamma}*w\right)(t,x)\rho_{\Gamma}(t,x)\,dx.
\end{equation}
The outline of the proof is that we first establish two local well-posedness results for Equation (\ref{hatree,density,form}): one is at the energy level and another one is for smooth data. Then we verify the conservation law of the total energy for smooth data and use a limiting argument to pass the law to the energy level. Finally, the global well-posedness follows from the conservation of energy.
All estimates involved are based on time-independent arguments. 

Note that $h=L^*L$, where $L=\left(-i\partial_{x^1}+\frac{b}{2}x^2,-i\partial_{x^2}-\frac{b}{2}x^1,-i\partial_{x^3}\right)$ and $x=(x^1,x^2,x^3)$, and the covariant derivative $L$ is metric. The pointwise Kato's inequality holds
\begin{equation}\label{covariant,derivative,relation}
    \left|\nabla|f|\right|\lesssim \left|Lf\right|.
\end{equation}

Let us define the following operator norms for the discussion
\begin{equation}
    \left\|\Gamma\right\|_{\mathcal{L}^{s,p}}:=\left\|h^{s/2}\Gamma h^{s/2}\right\|_{\mathcal{L}^p}=\left(Tr\left|h^{s/2}\Gamma h^{s/2}\right|^{p}\right)^{1/p}
\end{equation}
where $s\ge0$, $1\le p\le\infty$ and $\mathcal{L}^p$ is the $p$-th Schatten norm.

1. The local well-posedness at the energy level.

To deal with the nonlinear term in Equation (\ref{hatree,density,form}), we first show a bilinear estimate for functions, then generalize it to operators.
\begin{proposition}\label{estimate,local,energy,level}
\begin{equation}
    \left\|h^{1/2}\left(\left(|\phi_1|^2*w\right) \phi_2\right)\right\|_{L^2} \lesssim \left\|h^{1/2}\phi_{1}\right\|_{L^2}^2\left\|h^{1/2}\phi_{2}\right\|_{L^2}.
\end{equation}
\end{proposition}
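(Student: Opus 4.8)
The plan is to convert the weighted bound into Lebesgue-norm estimates. First, since $h=L^*L$ one has $\|h^{1/2}f\|_{L^2}^2=\langle L^*Lf,f\rangle=\|Lf\|_{L^2}^2$, so it suffices to bound $\|L(Vf)\|_{L^2}$ with $V:=|\phi_1|^2*w$ and $f:=\phi_2$. Moreover $h=-\partial_{x^3}^2+D^*D+b\ge b>0$, hence $\|h^{1/2}\phi\|_{L^2}\gtrsim_b\|\phi\|_{L^2}$; combined with the pointwise Kato inequality (\ref{covariant,derivative,relation}), $\|\nabla|\phi|\|_{L^2}\lesssim\|L\phi\|_{L^2}=\|h^{1/2}\phi\|_{L^2}$, so $|\phi|\in H^1(\mathbb{R}^3)$ with $\||\phi|\|_{H^1}\lesssim_b\|h^{1/2}\phi\|_{L^2}$, and by Sobolev embedding
\[
    \|\phi\|_{L^r(\mathbb{R}^3)}=\big\||\phi|\big\|_{L^r(\mathbb{R}^3)}\lesssim_{r,b}\|h^{1/2}\phi\|_{L^2},\qquad 2\le r\le 6 .
\]
This is the only structural input; the rest is classical estimates for the Coulomb potential.

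Next I would use that each component of $L=-i\partial_j+a_j$ has real $a_j$, so $L(Vf)=(-i\nabla V)f+V\,Lf$ pointwise (here $V$ is a real scalar). Therefore
\[
    \|h^{1/2}(Vf)\|_{L^2}\le\|(\nabla V)f\|_{L^2}+\|V\,Lf\|_{L^2}\le\|\nabla V\|_{L^3}\|f\|_{L^6}+\|V\|_{L^\infty}\|h^{1/2}f\|_{L^2},
\]
by Hölder ($\tfrac13+\tfrac16=\tfrac12$) and $\|Lf\|_{L^2}=\|h^{1/2}f\|_{L^2}$. Since $\|f\|_{L^6}=\|\phi_2\|_{L^6}\lesssim\|h^{1/2}\phi_2\|_{L^2}$, the proposition reduces to the two bounds $\|\nabla V\|_{L^3}\lesssim\|h^{1/2}\phi_1\|_{L^2}^2$ and $\|V\|_{L^\infty}\lesssim\|h^{1/2}\phi_1\|_{L^2}^2$.

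For the first, $\nabla|x|^{-1}=-x|x|^{-3}$ gives $|\nabla V|\le|\phi_1|^2*|x|^{-2}$, and Hardy--Littlewood--Sobolev (the Riesz potential $I_1$ maps $L^{3/2}(\mathbb{R}^3)\to L^3(\mathbb{R}^3)$) yields $\|\nabla V\|_{L^3}\lesssim\big\||\phi_1|^2\big\|_{L^{3/2}}=\|\phi_1\|_{L^3}^2\lesssim\|h^{1/2}\phi_1\|_{L^2}^2$, using $3\in[2,6]$. For the second I would split $w=w_1+w_2$ with $w_1=|x|^{-1}\mathbf{1}_{\{|x|\le1\}}\in L^2(\mathbb{R}^3)$ and $w_2=|x|^{-1}\mathbf{1}_{\{|x|\ge1\}}\in L^4(\mathbb{R}^3)$; then by Young's inequality
\[
    \|V\|_{L^\infty}\le\big\||\phi_1|^2\big\|_{L^2}\|w_1\|_{L^2}+\big\||\phi_1|^2\big\|_{L^{4/3}}\|w_2\|_{L^4}=\|\phi_1\|_{L^4}^2\|w_1\|_{L^2}+\|\phi_1\|_{L^{8/3}}^2\|w_2\|_{L^4}\lesssim\|h^{1/2}\phi_1\|_{L^2}^2,
\]
since $4,\tfrac83\in[2,6]$. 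Assembling the pieces gives the claim; all manipulations are first performed for Schwartz $\phi_1,\phi_2$ and then extended by density.

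The only mildly delicate point — the ``main obstacle'' — is that $\nabla V$ is \emph{not} bounded uniformly: the Young/HLS estimate for $|\phi_1|^2*|x|^{-2}$ degenerates at the $L^\infty$ endpoint because $|x|^{-2}\in L^{3/2,\infty}$ while $|\phi_1|^2$ is only controlled in $L^3$ (not $L^{3,1}$). This is circumvented by keeping $\nabla V$ in $L^3$ and pairing it against the $\dot H^1\hookrightarrow L^6$ control on $\phi_2$ rather than trying to place $V$ or $\nabla V$ in $L^\infty$; the corresponding term for $Lf$, where the derivative does fall on $f$, is harmless since there $V$ itself need only be in $L^\infty$, which the splitting argument supplies. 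Everything else is routine.
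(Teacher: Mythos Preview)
Your proof is correct and follows essentially the same route as the paper: both reduce $\|h^{1/2}(V\phi_2)\|_{L^2}$ to $\|V\|_{L^\infty}\|h^{1/2}\phi_2\|_{L^2}+\|\nabla V\|_{L^3}\|\phi_2\|_{L^6}$ via the Leibniz rule for $L$, and then handle $\|\nabla V\|_{L^3}$ by Hardy--Littlewood--Sobolev and $\|\phi_2\|_{L^6}$ by Kato's inequality plus Sobolev. The only difference is cosmetic: for $\|V\|_{L^\infty}$ the paper invokes Hardy's inequality $\int|\phi_1|^2/|y|\,dy\lesssim\||\nabla|^{1/2}|\phi_1|\|_{L^2}^2\le\||\phi_1|\|_{H^1}^2$, whereas you split $|x|^{-1}$ into $L^2$ and $L^4$ pieces and apply Young's inequality---both yield the same bound $\|h^{1/2}\phi_1\|_{L^2}^2$.
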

\begin{proof}
Applying the H\"older inequality, 
\begin{align*}
    \left\|h^{1/2}\left(\left(|\phi_1|^2*w\right) \phi_2\right)\right\|_{L^2}&\lesssim \left\||\phi_1|^2*w\right\|_{L^{\infty}}\left\|h^{1/2}\phi_2\right\|_{L^2}+\left\||\nabla_{x}|\left(|\phi_1|^2*w\right)\right\|_{L^3}\left\|\phi_2\right\|_{L^6},
\end{align*}
while 
\begin{align*}
    \left(|\phi_1|^2*w\right)(x)&=\int_{\mathbb{R}^3}|\phi_1|^2(x-y)w(y)\,dy \\
        &\lesssim \int_{\mathbb{R}^3}\left||\nabla_{y}|^{1/2}|\phi_1|(x-y)\right|^2\,dy \quad(\text{by the Hardy's inequality})\\
        &\le \int_{\mathbb{R}^3}\left(\left| |\nabla||\phi_1|\right|^2(x) +|\phi_1|^2(x)\right)\,dx \\
        &\lesssim \left\|h^{1/2}\phi_1\right\|_{L^2}^2, \quad(\text{by the inequality (\ref{covariant,derivative,relation})})
\end{align*}
and by the inequality (\ref{covariant,derivative,relation}), the Sobolev inequality and the Hardy-Littlewood-Sobolev inequality,
\begin{align*}
    &\left\|\phi_2\right\|_{L^6}\lesssim \left\||\phi_2|\right\|_{H^1}\lesssim \left\|h^{1/2}\phi_2\right\|_{L^2},\\
    &\left\||\nabla_{x}|\left(|\phi_1|^2*w\right)\right\|_{L^3}=\left\||\phi_1|^2*\left(|\nabla|w\right)\right\|_{L^3}\lesssim \left\||\phi_1|^2\right\|_{L^{3/2}}=\left\|\phi_1\right\|_{L^3}^2\lesssim \left\|h^{1/2}\phi_1\right\|^2_{L^2},
\end{align*}
we obtain the desired estimate,
\begin{equation*}
    \left\|h^{1/2}\left(\left(|\phi_1|^2*w\right) \phi_2\right)\right\|_{L^2} \lesssim \left\|h^{1/2}\phi_{1}\right\|_{L^2}^2\left\|h^{1/2}\phi_{2}\right\|_{L^2}.
\end{equation*}
\end{proof}

\begin{proposition}\label{estimate,operator,local,energy,level}
Suppose $\Gamma_1$ and $\Gamma_{2}$ are self-adjoint, 
\begin{equation}
    \left\|\left[\rho_{\Gamma_{1}}*w,\Gamma_{2}\right]\right\|_{\mathcal{L}^{1,1}}\lesssim \left\|\Gamma_{1}\right\|_{\mathcal{L}^{1,1}}\left\|\Gamma_{2}\right\|_{\mathcal{L}^{1,1}}
\end{equation}
\end{proposition}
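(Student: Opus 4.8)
The plan is to reduce the commutator estimate to a single operator--norm bound for $\rho_{\Gamma_1}*w$, and then to obtain that bound by recycling Proposition~\ref{estimate,local,energy,level}. Put $v:=\rho_{\Gamma_1}*w$; since $\Gamma_1=\Gamma_1^{*}$ its density $\rho_{\Gamma_1}(x)=\Gamma_1(x,x)$ is real and $w=|x|^{-1}\ge0$, so multiplication by $v$, call it $M_v$, is self-adjoint. Because $h=-\partial_{x^3}^2+D^{*}D+b\ge b>0$ the operator $h^{-1/2}$ is bounded. Writing $[M_v,\Gamma_2]=M_v\Gamma_2-\Gamma_2M_v$ and factoring
\[
h^{1/2}M_v\Gamma_2h^{1/2}=\bigl(h^{1/2}M_vh^{-1/2}\bigr)\bigl(h^{1/2}\Gamma_2h^{1/2}\bigr),\qquad
h^{1/2}\Gamma_2M_vh^{1/2}=\bigl(h^{1/2}\Gamma_2h^{1/2}\bigr)\bigl(h^{-1/2}M_vh^{1/2}\bigr),
\]
the Schatten--H\"older inequality $\|AB\|_{\mathcal L^1}\le\|A\|_{\mathcal L^\infty}\|B\|_{\mathcal L^1}$, together with $\|h^{-1/2}M_vh^{1/2}\|_{\mathcal L^\infty}=\|(h^{1/2}M_vh^{-1/2})^{*}\|_{\mathcal L^\infty}=\|h^{1/2}M_vh^{-1/2}\|_{\mathcal L^\infty}$, reduces the claim to $\|h^{1/2}M_vh^{-1/2}\|_{\mathcal L^\infty}\lesssim\|\Gamma_1\|_{\mathcal L^{1,1}}$.

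Since $\|h^{1/2}M_vh^{-1/2}\|_{\mathcal L^\infty}=\sup\{\|h^{1/2}(v\chi)\|_{L^2}:\|h^{1/2}\chi\|_{L^2}\le1\}$, the proof of Proposition~\ref{estimate,local,energy,level} applies verbatim with $v$ in place of $|\phi_1|^2*w$ and $\chi$ in place of $\phi_2$: it uses only $\|v\|_{L^\infty}$, $\||\nabla_x|v\|_{L^3}$, and the embedding $\|\chi\|_{L^6}\lesssim\||\chi|\|_{H^1}\lesssim\|h^{1/2}\chi\|_{L^2}$ (the last being the diamagnetic inequality~(\ref{covariant,derivative,relation}) and Sobolev). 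So the whole statement comes down to the two density estimates
\[
\|\rho_{\Gamma_1}*w\|_{L^\infty(\mathbb R^3)}\lesssim\|\Gamma_1\|_{\mathcal L^{1,1}},\qquad
\bigl\||\nabla_x|(\rho_{\Gamma_1}*w)\bigr\|_{L^3(\mathbb R^3)}\lesssim\|\Gamma_1\|_{\mathcal L^{1,1}}.
\]

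These follow from the pairing identity: for a nonnegative $\varphi$, writing $M_\varphi$ for multiplication by $\varphi$,
\[
\int_{\mathbb R^3}\rho_{\Gamma_1}\varphi\,dz=Tr(\Gamma_1M_\varphi)=Tr\bigl((h^{1/2}\Gamma_1h^{1/2})(h^{-1/2}M_\varphi h^{-1/2})\bigr),
\]
so that $\bigl|\int\rho_{\Gamma_1}\varphi\bigr|\le\|\Gamma_1\|_{\mathcal L^{1,1}}\,\|h^{-1/2}M_\varphi h^{-1/2}\|_{\mathcal L^\infty}$ and it remains to bound $\|h^{-1/2}M_\varphi h^{-1/2}\|_{\mathcal L^\infty}=\sup_{\|\psi\|_{L^2}=1}\int\varphi\,|h^{-1/2}\psi|^2$ for the relevant $\varphi$. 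Taking $\varphi=|x-\cdot|^{-1}$ produces $v(x)$ on the left, and Cauchy--Schwarz, Hardy's inequality, $h=L^{*}L$ with $(\ref{covariant,derivative,relation})$, and the identity $\|Lh^{-1/2}\psi\|_{L^2}=\|\psi\|_{L^2}$ give
\[
\int\frac{|h^{-1/2}\psi(z)|^2}{|x-z|}\,dz\le\Bigl(\int\frac{|h^{-1/2}\psi|^2}{|x-z|^2}\,dz\Bigr)^{1/2}\|h^{-1/2}\psi\|_{L^2}\lesssim\bigl\|\nabla|h^{-1/2}\psi|\bigr\|_{L^2}\|\psi\|_{L^2}\lesssim\|\psi\|_{L^2}^2,
\]
uniformly in $x$, which gives the first estimate. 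For the second, $|\nabla_x|(\rho_{\Gamma_1}*w)=\rho_{\Gamma_1}*(|\nabla|w)$ with $|\nabla|w$ a constant multiple of $|x|^{-2}$, so Hardy--Littlewood--Sobolev reduces it to $\|\rho_{\Gamma_1}\|_{L^{3/2}}\lesssim\|\Gamma_1\|_{\mathcal L^{1,1}}$; dualizing, $\|\rho_{\Gamma_1}\|_{L^{3/2}}=\sup_{\|f\|_{L^3}\le1}|\int\rho_{\Gamma_1}f|$, and the pairing identity with $\varphi=|f|$ bounds this by $\|\Gamma_1\|_{\mathcal L^{1,1}}\sup_f\|h^{-1/2}M_{|f|}h^{-1/2}\|_{\mathcal L^\infty}$, where $\|h^{-1/2}M_{|f|}h^{-1/2}\|_{\mathcal L^\infty}\le\|f\|_{L^3}\sup_{\|\psi\|_{L^2}=1}\|h^{-1/2}\psi\|_{L^3}^2\lesssim\|f\|_{L^3}$ since $h^{-1/2}$ maps $L^2$ boundedly into $L^2\cap L^6$ (again by $(\ref{covariant,derivative,relation})$ and Sobolev), hence into $L^3$.

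The main obstacle is not any single inequality but the rigor of the operator manipulations: I would first prove all of the above for $\Gamma_1,\Gamma_2$ of finite rank with ranges in $C_c^\infty(\mathbb R^3)$, where $M_v\Gamma_2$, $\Gamma_2M_v$, the traces, and the factorizations are unambiguous, and then pass to general self-adjoint $\Gamma_i$ by density in $\|\cdot\|_{\mathcal L^{1,1}}$, using continuity of $\Gamma\mapsto\rho_\Gamma$, of $\rho\mapsto\rho*w$, and of the commutator. It is worth noting why the estimates are routed through the trace pairing against nonnegative multiplication operators rather than through a spectral decomposition of $\Gamma_1$: expanding $\Gamma_1=\sum_n\lambda_n|\phi_n\rangle\langle\phi_n|$ and summing Proposition~\ref{estimate,local,energy,level} term by term produces $\sum_n|\lambda_n|\,\|h^{1/2}\phi_n\|_{L^2}^2$, which equals $\|\Gamma_1\|_{\mathcal L^{1,1}}$ only for $\Gamma_1\ge0$ and can be strictly larger when $\Gamma_1$ is indefinite.
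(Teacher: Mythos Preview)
Your proof is correct and takes a genuinely different route from the paper's.

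The paper argues by spectral decomposition: since $\Gamma_j^*=\Gamma_j$ and $\|\Gamma_j\|_{\mathcal L^{1,1}}<\infty$, the trace-class self-adjoint operator $h^{1/2}\Gamma_j h^{1/2}$ admits an eigenexpansion $\sum_k\lambda_{k,j}\,f_{k,j}\otimes\bar f_{k,j}$ with $\{f_{k,j}\}_k$ orthonormal, hence $\Gamma_j=\sum_k\lambda_{k,j}\,(h^{-1/2}f_{k,j})\otimes\overline{h^{-1/2}f_{k,j}}$. One then applies Proposition~\ref{estimate,local,energy,level} termwise with $\phi_1=h^{-1/2}f_{l,1}$, $\phi_2=h^{-1/2}f_{k,2}$ (so $\|h^{1/2}\phi_i\|_{L^2}=1$) and sums, obtaining $(\sum_l|\lambda_{l,1}|)(\sum_k|\lambda_{k,2}|)=\|\Gamma_1\|_{\mathcal L^{1,1}}\|\Gamma_2\|_{\mathcal L^{1,1}}$.

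You instead factor $h^{1/2}M_v\Gamma_2h^{1/2}=(h^{1/2}M_vh^{-1/2})(h^{1/2}\Gamma_2h^{1/2})$ and reduce everything to the single operator-norm bound $\|h^{1/2}M_vh^{-1/2}\|_{\mathcal L^\infty}\lesssim\|\Gamma_1\|_{\mathcal L^{1,1}}$, which you then establish via the trace pairing $\int\rho_{\Gamma_1}\varphi=Tr\bigl((h^{1/2}\Gamma_1h^{1/2})(h^{-1/2}M_\varphi h^{-1/2})\bigr)$. This is cleaner in that it isolates the relevant multiplier bound and, along the way, produces the standalone density inequalities $\|\rho_{\Gamma_1}*w\|_{L^\infty}\lesssim\|\Gamma_1\|_{\mathcal L^{1,1}}$ and $\|\rho_{\Gamma_1}\|_{L^{3/2}}\lesssim\|\Gamma_1\|_{\mathcal L^{1,1}}$, which are useful in their own right. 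The paper's argument, on the other hand, is shorter and stays closer to the function-level Proposition~\ref{estimate,local,energy,level} without any additional operator machinery.

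One small correction to your closing remark: the concern you raise about the spectral route---that $\sum_n|\lambda_n|\|h^{1/2}\phi_n\|_{L^2}^2$ may exceed $\|\Gamma_1\|_{\mathcal L^{1,1}}$ for indefinite $\Gamma_1$---applies only if one diagonalizes $\Gamma_1$ itself. The paper avoids this by diagonalizing $h^{1/2}\Gamma_1h^{1/2}$ instead, so that $\sum_k|\lambda_{k,1}|$ is \emph{exactly} the trace norm $\|h^{1/2}\Gamma_1h^{1/2}\|_{tr}=\|\Gamma_1\|_{\mathcal L^{1,1}}$, with no positivity assumption needed.
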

\begin{proof}
Since $\Gamma_{j}$ is self-adjoint and $\left\|\Gamma_{j}\right\|_{\mathcal{L}^{1,1}}<\infty$ for $j=1,2$, there are orthonormal bases $\{f_{k,j}\}_{k=1}^{\infty}$ $j=1,2$, such that
\[
    \left(h^{1/2}\Gamma_{j}h^{1/2}\right)(x,y)=\sum_{k=1}^{\infty}\lambda_{k,j}f_{k,j}(x)\bar{f}_{k,j}(y).
\] 
Then
\[
   \Gamma_{j}(x,y)=\sum_{k=1}^{\infty}\lambda_{k,j}\left(h^{-1/2}f_{k,j}\right)(x)\left(\overline{h^{-1/2}f}_{k,j}\right)(y), 
\]
and by the Minkowski's inequality,
\begin{align*}
    \left\|\left(\rho_{\Gamma_{1}}*w\right)\Gamma_{2}\right\|_{\mathcal{L}^{1,1}} &=\left\|h^{1/2}_{x}\left(\left(\rho_{\Gamma_{1}}*w\right)(x) \sum_{k=1}^{\infty}\lambda_{k,2}\left(h^{-1/2}f_{k,2}\right)(x)\left(\bar{f}_{k,2}\right)(y)\right)\right\|_{tr}\\
    &\le\sum_{k=1}^{\infty}|\lambda_{k,2}|\left\|h^{1/2}_{x}\left(\left(\rho_{\Gamma_{1}}*w\right)(x)\left(h^{-1/2}f_{k,2}\right)(x)\left(\bar{f}_{k,2}\right)(y)\right)\right\|_{tr}\\
    &\le \sum_{k=1}^{\infty}|\lambda_{k,2}|\left\|h^{1/2}_{x}\left(\left(\rho_{\Gamma_{1}}*w\right)(x)\left(h^{-1/2}f_{k,2}\right)(x)\right)\right\|_{L^2_{x}}\\
    &\le \sum_{k=1}^{\infty}|\lambda_{k,2}|\sum_{l=1}^{\infty}|\lambda_{l,1}|\left\|h^{1/2}\left(\left(\left|h^{-1/2}f_{l,1}\right|^2*w\right)h^{-1/2}f_{k,2}\right)\right\|_{L^2}\\
    &\lesssim \sum_{k=1}^{\infty}|\lambda_{k,2}|\sum_{l=1}^{\infty}|\lambda_{l,1}| \left\|f_{l,1}\right\|_{L^2}^2\left\|f_{k,2}\right\|_{L^2}\quad (\text{by Proposition \ref{estimate,local,energy,level}})\\
    &\le \sum_{k=1}^{\infty}|\lambda_{k,2}|\sum_{l=1}^{\infty}|\lambda_{l,1}|.
\end{align*}
The other term $\left\|\Gamma_{2}\left(\rho_{\Gamma_{1}}*w\right)\right\|_{\mathcal{L}^{1,1}}$ can be estimated in the same way.
\end{proof}

Based on Proposition \ref{estimate,operator,local,energy,level}, we obtain the following local well-posedness result as an application of the contraction mapping principle.
\begin{theorem}\label{local,energy,level}
For any initial data $\left\|\Gamma_{0}\right\|_{\mathcal{L}^{1,1}}<\infty$ and $\Gamma_{0}^*=\Gamma_0$, Equation (\ref{hatree,density,form}) has a mild solution in the Banach space $\mathbf{N}_{1T}$, where the norm $\mathbf{N}_{1T}$ is defined as 
\begin{equation}
    \left\| \Gamma(t)\right\|_{\mathbf{N}_{1T}} :=\left\| \Gamma(t)\right\|_{L^{\infty}\left([0,T];\mathcal{L}^{1,1}\right)},
\end{equation}
while the existence time $T$ depends on $\left\|h^{1/2}\Gamma_{0}h^{1/2}\right\|_{tr}$. To be more precise, the solution $\Gamma(t)\in C^{0}\left([0,T];\mathcal{L}^{1,1}\right)$.
\end{theorem}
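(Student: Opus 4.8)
The plan is to run a Picard iteration via the contraction mapping principle in the closed subspace $C^{0}\!\left([0,T];\mathcal{L}^{1,1}\right)$ of $\mathbf{N}_{1T}$, based on the Duhamel representation
\[
    \Gamma(t)=e^{-iht}\Gamma_{0}e^{iht}-i\int_{0}^{t}e^{-ih(t-\tau)}\left[\rho_{\Gamma}*w,\Gamma\right](\tau)\,e^{ih(t-\tau)}\,d\tau,
\]
and the solution map $\Phi(\Gamma)$ defined as the right-hand side. Two structural facts drive everything. First, since $h$ commutes with $e^{-iht}$ one has $h^{1/2}e^{-iht}\Gamma e^{iht}h^{1/2}=e^{-iht}\!\left(h^{1/2}\Gamma h^{1/2}\right)e^{iht}$, so conjugation by $e^{-iht}$ is an isometry of $\mathcal{L}^{1,1}$ and, applied to the fixed trace-class operator $h^{1/2}\Gamma_{0}h^{1/2}$, is continuous in trace norm in $t$ by strong continuity of the unitary group; hence $t\mapsto e^{-iht}\Gamma_{0}e^{iht}$ lies in $C^{0}\!\left([0,T];\mathcal{L}^{1,1}\right)$ with constant norm $\|\Gamma_{0}\|_{\mathcal{L}^{1,1}}$. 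Second, Proposition \ref{estimate,operator,local,energy,level} gives the bilinear bound $\left\|[\rho_{\Gamma_{1}}*w,\Gamma_{2}]\right\|_{\mathcal{L}^{1,1}}\lesssim\|\Gamma_{1}\|_{\mathcal{L}^{1,1}}\|\Gamma_{2}\|_{\mathcal{L}^{1,1}}$ for self-adjoint $\Gamma_{1},\Gamma_{2}$, which is the only analytic input needed for the nonlinearity.

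First I would show $\Phi$ maps the ball $B_{R}=\{\Gamma\in C^{0}([0,T];\mathcal{L}^{1,1}):\|\Gamma\|_{\mathbf{N}_{1T}}\le R\}$ into itself, with $R=2\|\Gamma_{0}\|_{\mathcal{L}^{1,1}}$. By the isometry property the linear term contributes exactly $\|\Gamma_{0}\|_{\mathcal{L}^{1,1}}$; by Minkowski's integral inequality, the isometry property, and Proposition \ref{estimate,operator,local,energy,level}, the Duhamel term is bounded in $\mathcal{L}^{1,1}$ by $\int_{0}^{t}\left\|[\rho_{\Gamma}*w,\Gamma](\tau)\right\|_{\mathcal{L}^{1,1}}d\tau\lesssim T\|\Gamma\|_{\mathbf{N}_{1T}}^{2}\le CTR^{2}$, so choosing $T$ with $CTR^{2}\le\|\Gamma_{0}\|_{\mathcal{L}^{1,1}}$ closes this step. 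Continuity in $t$ of the Duhamel term follows since the integrand is norm-bounded and the map $(t,\tau)\mapsto e^{-ih(t-\tau)}F(\tau)e^{ih(t-\tau)}$ is jointly strongly continuous; and self-adjointness is propagated, because $\rho_{\Gamma}*w$ is a real multiplication operator and the commutator of a self-adjoint operator with it, conjugated by the unitary flow, stays self-adjoint, so the hypothesis of Proposition \ref{estimate,operator,local,energy,level} is available at each iteration.

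Next I would establish the contraction estimate: for $\Gamma_{1},\Gamma_{2}\in B_{R}$, writing
\[
    [\rho_{\Gamma_{1}}*w,\Gamma_{1}]-[\rho_{\Gamma_{2}}*w,\Gamma_{2}]=[\rho_{\Gamma_{1}-\Gamma_{2}}*w,\Gamma_{1}]+[\rho_{\Gamma_{2}}*w,\Gamma_{1}-\Gamma_{2}]
\]
and using linearity of $\rho$, self-adjointness of $\Gamma_{1}-\Gamma_{2}$, and Proposition \ref{estimate,operator,local,energy,level}, one gets $\left\|\Phi(\Gamma_{1})-\Phi(\Gamma_{2})\right\|_{\mathbf{N}_{1T}}\lesssim TR\,\|\Gamma_{1}-\Gamma_{2}\|_{\mathbf{N}_{1T}}$; shrinking $T$ further (still only in terms of $\|\Gamma_{0}\|_{\mathcal{L}^{1,1}}=\|h^{1/2}\Gamma_{0}h^{1/2}\|_{tr}$) makes $\Phi$ a contraction, and its fixed point is the claimed mild solution in $C^{0}([0,T];\mathcal{L}^{1,1})$.

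Since the substantive bilinear bound is already packaged in Propositions \ref{estimate,local,energy,level} and \ref{estimate,operator,local,energy,level}, the remaining work is essentially bookkeeping, and I expect the only mildly delicate point to be upgrading $L^{\infty}_{t}$ bounds to genuine strong continuity in the $\mathcal{L}^{1,1}$ topology — which, as noted, reduces to the commutation $[e^{-iht},h^{1/2}]=0$ together with the continuity of unitary conjugation in trace norm on a fixed trace-class operator — everything else being a standard Banach fixed-point argument.
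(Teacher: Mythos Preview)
Your proposal is correct and matches the paper's approach exactly: the paper's proof consists of a single sentence stating that Theorem~\ref{local,energy,level} follows ``as an application of the contraction mapping principle'' from Proposition~\ref{estimate,operator,local,energy,level}, and you have correctly fleshed out precisely that argument, including the careful bookkeeping (isometry of the $\mathcal{L}^{1,1}$ norm under conjugation by $e^{-iht}$, propagation of self-adjointness so that Proposition~\ref{estimate,operator,local,energy,level} remains applicable at each iterate, and the upgrade to strong $C^{0}_{t}$ continuity).
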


2. The local well-posedness for smooth data.

Similarly as Step 1,  we first show a bilinear estimate for functions, then generalize it to operators.
\begin{proposition}\label{estimate,smooth}
\begin{equation}
    \left\|h\left(\left(|\phi_1|^2*w\right) \phi_2\right)\right\|_{L^2} \lesssim \left\|h^{1/2}\phi_1\right\|_{L^2}^2\left\|h\phi_2\right\|_{L^2}
\end{equation}
\end{proposition}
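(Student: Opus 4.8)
The plan is to run the argument of Proposition \ref{estimate,local,energy,level} with one extra derivative shifted onto $\phi_2$. Write $V:=|\phi_1|^2*w$, a nonnegative real function, and recall $h=L^*L$ with $L=\left(-i\partial_{x^1}+\tfrac{b}{2}x^2,-i\partial_{x^2}-\tfrac{b}{2}x^1,-i\partial_{x^3}\right)$ the metric covariant gradient. Since $L_j(fg)=-i(\partial_jf)g+fL_jg$ for a scalar function $f$, a direct computation gives the magnetic Leibniz identity
\[
    h(V\phi_2)=V(h\phi_2)-(\Delta V)\phi_2-2i\,\nabla V\cdot L\phi_2 .
\]
Hence it suffices to bound the three terms on the right by $\|h^{1/2}\phi_1\|_{L^2}^2\|h\phi_2\|_{L^2}$. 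Two elementary facts will be used repeatedly: $h\ge b>0$, so $\|h^{1/2}\phi_2\|_{L^2}^2=\langle h\phi_2,\phi_2\rangle\le b^{-1}\|h\phi_2\|_{L^2}^2$ and similarly $\|\phi_2\|_{L^2}\lesssim\|h\phi_2\|_{L^2}$; and Kato's inequality (\ref{covariant,derivative,relation}), $|\nabla|f||\lesssim|Lf|$.

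The first term is immediate: $\|V(h\phi_2)\|_{L^2}\le\|V\|_{L^\infty}\|h\phi_2\|_{L^2}$, and the proof of Proposition \ref{estimate,local,energy,level} already shows $\|V\|_{L^\infty}\lesssim\|h^{1/2}\phi_1\|_{L^2}^2$ via Hardy's inequality and (\ref{covariant,derivative,relation}). For the second term, $w=|x|^{-1}$ satisfies $\Delta w=-4\pi\delta$ in $\mathbb{R}^3$, so $\Delta V=-4\pi|\phi_1|^2$; by H\"older with exponents $(6,6,6)$, the Sobolev embedding $\dot H^1(\mathbb{R}^3)\hookrightarrow L^6$ and (\ref{covariant,derivative,relation}),
\[
    \|(\Delta V)\phi_2\|_{L^2}\lesssim\|\phi_1\|_{L^6}^2\,\|\phi_2\|_{L^6}\lesssim\|L\phi_1\|_{L^2}^2\,\|L\phi_2\|_{L^2}\lesssim\|h^{1/2}\phi_1\|_{L^2}^2\,\|h\phi_2\|_{L^2}.
\]

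The genuinely new term is $\|\nabla V\cdot L\phi_2\|_{L^2}$. Writing $\nabla V=|\phi_1|^2*\nabla w$ with $|\nabla w|=|x|^{-2}\in L^{3/2,\infty}(\mathbb{R}^3)$ and applying the Hardy--Littlewood--Sobolev inequality (with $d=3$, $\lambda=2$, $p=\tfrac32$, $q=3$), $\|\nabla V\|_{L^3}\lesssim\||\phi_1|^2\|_{L^{3/2}}=\|\phi_1\|_{L^3}^2\lesssim\|h^{1/2}\phi_1\|_{L^2}^2$, the last step interpolating $L^3$ between $L^2$ and $L^6$ and using the two facts above. Then by H\"older, the Sobolev embedding applied to each component $|L_j\phi_2|$, and (\ref{covariant,derivative,relation}),
\[
    \|\nabla V\cdot L\phi_2\|_{L^2}\le\|\nabla V\|_{L^3}\|L\phi_2\|_{L^6}\lesssim\|h^{1/2}\phi_1\|_{L^2}^2\sum_{j}\|L(L_j\phi_2)\|_{L^2}.
\]
The remaining estimate $\sum_j\|L(L_j\phi_2)\|_{L^2}^2\lesssim\|h\phi_2\|_{L^2}^2$ is the main obstacle; I expect to prove it using that the field is constant, so each $c_{ij}:=[L_i,L_j]$ is a constant and $[h,L_j]=2\sum_i c_{ij}L_i$ is first order with constant coefficients, whence (using $L_j^*=L_j$ and $h=\sum_j L_j^2$)
\[
    \sum_{i,j}\|L_iL_j\phi_2\|_{L^2}^2=\sum_j\langle hL_j\phi_2,L_j\phi_2\rangle=\|h\phi_2\|_{L^2}^2+\sum_j\langle[h,L_j]\phi_2,L_j\phi_2\rangle,
\]
and the error sum is $\lesssim\|L\phi_2\|_{L^2}^2=\langle h\phi_2,\phi_2\rangle\lesssim\|h\phi_2\|_{L^2}^2$. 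Combining the three bounds yields the claim; a density argument reduces the manipulations to Schwartz $\phi_1,\phi_2$, for which they are justified.
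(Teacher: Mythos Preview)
Your proof is correct and follows the same overall strategy as the paper: the magnetic Leibniz decomposition of $h(V\phi_2)$ into the three terms $V(h\phi_2)$, $(\Delta V)\phi_2$, and the first-order cross term, each handled by H\"older, Hardy/Hardy--Littlewood--Sobolev, Sobolev, and Kato's inequality. Two minor differences are worth noting. For the $(\Delta V)\phi_2$ term you use $\Delta|x|^{-1}=-4\pi\delta$ to get $\Delta V=-4\pi|\phi_1|^2$ directly, which is cleaner than the paper's route of writing $\Delta V=(|\nabla||\phi_1|^2)*(|\nabla|w)$ and invoking HLS a second time. For the cross term, the paper simply asserts $\|D\phi_2\|_{L^6}+\|D^*\phi_2\|_{L^6}+\|\partial_{x^3}\phi_2\|_{L^6}\lesssim\|h\phi_2\|_{L^2}$ (implicitly relying on the magnetic Riesz-transform bounds (\ref{Lp,H,D,relation}) from \cite{Ali2010}), whereas you supply a self-contained argument via Kato on $|L_j\phi_2|$ and the commutator identity $\sum_{i,j}\|L_iL_j\phi_2\|_{L^2}^2=\|h\phi_2\|_{L^2}^2+O(\|L\phi_2\|_{L^2}^2)$, which exploits only that the field is constant; this is more elementary and avoids the external citation.
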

\begin{proof}
A direct computation shows
\begin{align*}
    h\left(\left(|\phi_1|^2*w\right)\phi_2\right) &=-\Delta\left(|\phi_1|^2*w\right)\phi_2+\left(|\phi_1|^2*w\right)h\phi_2\\
    &\quad +\underbrace{\left(-2\partial_{\bar{z}}\left(|\phi_1|^2*w\right)\right)D^*\phi_2+\left(2\partial_{z}\left(|\phi_1|^2*w\right)\right)D\phi_2-2\partial_{x^3}\left(|\phi_1|^2*w\right)\partial_{x^3}\phi_2}_{\text{first-order terms}}.
\end{align*}
By the proof of Proposition \ref{estimate,local,energy,level}, 
\begin{equation*}
    \left\|\left(|\phi_1|^2*w\right)h\phi_2\right\|_{L^2}\le\left\||\phi_1|^2*w\right\|_{L^{\infty}}\left\|h\phi_2\right\|_{L^2} \lesssim\left\|h^{1/2}\phi_1\right\|_{L^2}^2\left\|h\phi_2\right\|_{L^2},
\end{equation*}
and
\begin{align*}
    \left\|\text{first-order terms}\right\|_{L^2}&\lesssim\left\||\nabla|\left(|\phi_1|^2*w\right)\right\|_{L^3}\left(\|D^*\phi_2\|_{L^6}+\|D\phi_2\|_{L^6}+\|\partial_{x^3}\phi_2\|_{L^6}\right)\\
        &\lesssim\left\|h^{1/2}\phi_1\right\|_{L^2}^2\left\|h\phi_2\right\|_{L^2}.
\end{align*}
Analyzing $-\Delta\left(|\phi_1|^2*w\right)\phi_2$, by the Hardy-Littlewood-Sobolev inequality and the Sobolev inequality,
\begin{align*}
    \left\|-\Delta\left(|\phi_1|^2*w\right)\phi_2\right\|_{L^2}&\le \left\|\left(|\nabla||\phi_1|^2\right)*(|\nabla|w)\right\|_{L^3}\left\|\phi_2\right\|_{L^6}\\
    &\lesssim \left\|\nabla|\phi_1|^2\right\|_{L^{3/2}}\left\|\phi_2\right\|_{L^6}\\
    &\lesssim \left\|\nabla\left|\phi_1\right|\right\|_{L^2}\left\|\phi_1\right\|_{L^6}\left\|\phi_2\right\|_{L^6}\\
    &\lesssim \left\|h^{1/2}\phi_1\right\|_{L^2}^2\left\|h^{1/2}\phi_2\right\|_{L^2}.
\end{align*}
\end{proof}

Using the same argument in Proposition \ref{estimate,operator,local,energy,level}, we generalize Proposition \ref{estimate,smooth} to operators.
\begin{proposition}\label{estimate,smooth,operator}
Suppose $\Gamma_1$ and $\Gamma_{2}$ are self-adjoint, 
\begin{equation}
    \left\|\left[\rho_{\Gamma_{1}}*w,\Gamma_{2}\right]\right\|_{\mathcal{L}^{2,1}}\lesssim \left\|\Gamma_{1}\right\|_{\mathcal{L}^{1,1}}\left\|\Gamma_{2}\right\|_{\mathcal{L}^{2,1}}.
\end{equation}
\end{proposition}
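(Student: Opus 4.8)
The plan is to run the argument of Proposition~\ref{estimate,operator,local,energy,level} essentially verbatim, replacing the bilinear bound of Proposition~\ref{estimate,local,energy,level} by the higher-regularity bound of Proposition~\ref{estimate,smooth} and keeping careful track of which of the two Schatten-weighted norms is attached to which factor. First I would spectrally decompose the self-adjoint operators. Since $\|\Gamma_1\|_{\mathcal{L}^{1,1}}=\|h^{1/2}\Gamma_1 h^{1/2}\|_{tr}<\infty$, write $h^{1/2}\Gamma_1 h^{1/2}=\sum_l\lambda_{l,1}\,f_{l,1}\otimes\bar f_{l,1}$ with $\{f_{l,1}\}$ orthonormal and $\sum_l|\lambda_{l,1}|=\|\Gamma_1\|_{\mathcal{L}^{1,1}}$; similarly $h\Gamma_2 h=\sum_k\lambda_{k,2}\,f_{k,2}\otimes\bar f_{k,2}$ with $\sum_k|\lambda_{k,2}|=\|\Gamma_2\|_{\mathcal{L}^{2,1}}$. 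Peeling off the weights (using self-adjointness of $h$ to handle its action on the conjugated slot, exactly as in Proposition~\ref{estimate,operator,local,energy,level}) yields the kernels $\Gamma_1(x,y)=\sum_l\lambda_{l,1}(h^{-1/2}f_{l,1})(x)\overline{(h^{-1/2}f_{l,1})(y)}$ and $\Gamma_2(x,y)=\sum_k\lambda_{k,2}(h^{-1}f_{k,2})(x)\overline{(h^{-1}f_{k,2})(y)}$, whence $\rho_{\Gamma_1}=\sum_l\lambda_{l,1}\,|h^{-1/2}f_{l,1}|^2$ and $\rho_{\Gamma_1}*w=\sum_l\lambda_{l,1}\bigl(|h^{-1/2}f_{l,1}|^2*w\bigr)$.

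Next I would estimate $\|(\rho_{\Gamma_1}*w)\Gamma_2\|_{\mathcal{L}^{2,1}}=\|h(\rho_{\Gamma_1}*w)\Gamma_2 h\|_{tr}$. Because the right multiplication by $h$ converts $h^{-1}f_{k,2}$ back to $f_{k,2}$, the operator $h(\rho_{\Gamma_1}*w)\Gamma_2 h$ has kernel $\sum_k\lambda_{k,2}\,\psi_k(x)\,\overline{f_{k,2}(y)}$ with $\psi_k:=h\bigl[(\rho_{\Gamma_1}*w)\,h^{-1}f_{k,2}\bigr]$, so by the triangle inequality for the trace norm (legitimate since $\sum_k|\lambda_{k,2}|<\infty$) together with $\|\psi_k\otimes\bar f_{k,2}\|_{tr}=\|\psi_k\|_{L^2}\|f_{k,2}\|_{L^2}=\|\psi_k\|_{L^2}$, we get $\|(\rho_{\Gamma_1}*w)\Gamma_2\|_{\mathcal{L}^{2,1}}\le\sum_k|\lambda_{k,2}|\,\|\psi_k\|_{L^2}$. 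Inserting the expansion of $\rho_{\Gamma_1}*w$ and applying Proposition~\ref{estimate,smooth} to each term with $\phi_1=h^{-1/2}f_{l,1}$ and $\phi_2=h^{-1}f_{k,2}$ gives
\[
    \bigl\|h\bigl[(|h^{-1/2}f_{l,1}|^2*w)\,h^{-1}f_{k,2}\bigr]\bigr\|_{L^2}\lesssim \|h^{1/2}h^{-1/2}f_{l,1}\|_{L^2}^2\,\|h\,h^{-1}f_{k,2}\|_{L^2}=\|f_{l,1}\|_{L^2}^2\|f_{k,2}\|_{L^2}=1,
\]
hence $\|\psi_k\|_{L^2}\lesssim\sum_l|\lambda_{l,1}|=\|\Gamma_1\|_{\mathcal{L}^{1,1}}$ and therefore $\|(\rho_{\Gamma_1}*w)\Gamma_2\|_{\mathcal{L}^{2,1}}\lesssim\|\Gamma_1\|_{\mathcal{L}^{1,1}}\|\Gamma_2\|_{\mathcal{L}^{2,1}}$.

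Finally, the other half of the commutator follows by symmetry: $\rho_{\Gamma_1}$ is real-valued (as $\Gamma_1$ is self-adjoint) and $w$ is real, so $\rho_{\Gamma_1}*w$ defines a self-adjoint multiplication operator, and since $\|A\|_{\mathcal{L}^{2,1}}=\|hAh\|_{tr}=\|(hAh)^*\|_{tr}=\|A^*\|_{\mathcal{L}^{2,1}}$ we obtain $\|\Gamma_2(\rho_{\Gamma_1}*w)\|_{\mathcal{L}^{2,1}}=\|(\rho_{\Gamma_1}*w)\Gamma_2\|_{\mathcal{L}^{2,1}}$; adding the two bounds gives $\|[\rho_{\Gamma_1}*w,\Gamma_2]\|_{\mathcal{L}^{2,1}}\lesssim\|\Gamma_1\|_{\mathcal{L}^{1,1}}\|\Gamma_2\|_{\mathcal{L}^{2,1}}$. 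I do not expect a genuine obstacle here — the statement is designed to be the operator lift of Proposition~\ref{estimate,smooth} — so the point to be careful about, rather than to struggle with, is the asymmetric bookkeeping of the weights (one power of $h$ on $\Gamma_1$ supplied through the density, two powers on $\Gamma_2$ supplied by the commutator, matched exactly to the exponents in Proposition~\ref{estimate,smooth}) together with the standard justification of the formal manipulations, namely that $h^{-1}$ is applied only to elements of the range of $h$ and that the trace-norm triangle inequality may be taken termwise over the absolutely summable spectral series, both handled precisely as in Proposition~\ref{estimate,operator,local,energy,level}.
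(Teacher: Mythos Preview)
Your proposal is correct and is precisely the argument the paper has in mind: the paper's proof of Proposition~\ref{estimate,smooth,operator} consists of the single sentence ``Using the same argument in Proposition~\ref{estimate,operator,local,energy,level}, we generalize Proposition~\ref{estimate,smooth} to operators,'' and your write-up carries this out with the obvious asymmetric adaptation (decomposing $h^{1/2}\Gamma_1 h^{1/2}$ but $h\Gamma_2 h$). The only cosmetic difference is that for the second term of the commutator the paper would repeat the computation directly, whereas you dispatch it by the adjoint identity $\|A^*\|_{\mathcal{L}^{2,1}}=\|A\|_{\mathcal{L}^{2,1}}$; both are fine.
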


\begin{theorem}\label{local,smooth}
For any initial data $\left\|\Gamma_{0}\right\|_{\mathcal{L}^{2,1}}<\infty$ and $\Gamma_{0}^*=\Gamma_{0}$, Equation (\ref{hatree,density,form}) has a mild solution in the Banach space $\mathbf{N}_{2T}$, where the norm $\mathbf{N}_{2T}$ is defined as 
\begin{equation}
    \left\| \Gamma(t)\right\|_{\mathbf{N}_{2T}} :=\left\| \Gamma(t)\right\|_{L^{\infty}\left([0,T];\mathcal{L}^{2,1}\right)},
\end{equation}
while $I_{T}=[0,T]$ and the existence time $T$ depends on $\left\|\Gamma_{0}\right\|_{\mathcal{L}^{1,1}}$. More precisely, the solution $\Gamma(t)\in C^0\left([0,T],\mathcal{L}^{2,1}\right)\cap C^1\left([0,T],\mathcal{L}^{0,1}\right)$
\end{theorem}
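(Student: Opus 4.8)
The plan is to reproduce, in the stronger norm $\mathbf{N}_{2T}=L^{\infty}([0,T];\mathcal{L}^{2,1})$, the Duhamel/contraction-mapping scheme behind Theorem~\ref{local,energy,level}, feeding in the bilinear estimate of Proposition~\ref{estimate,smooth,operator} in place of Proposition~\ref{estimate,operator,local,energy,level}. First I would record the linear facts. Since $h=-\partial_{x^3}^2+D^*D+b$ is self-adjoint with $h\ge bI>0$, the group $e^{-iht}$ is unitary on $L^2(\mathbb{R}^3)$ and commutes with every $h^{s/2}$, so the conjugation flow $\mathcal{U}(t)\Gamma:=e^{-iht}\Gamma e^{iht}$ satisfies $h^{s/2}\big(\mathcal{U}(t)\Gamma\big)h^{s/2}=e^{-iht}\big(h^{s/2}\Gamma h^{s/2}\big)e^{iht}$; hence $\mathcal{U}(t)$ is an isometry of each $\mathcal{L}^{s,p}$ and is strongly continuous in $t$ on $\mathcal{L}^{s,1}$ (conjugation by a strongly continuous unitary group is continuous on trace class, by density of finite-rank operators). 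Writing $h^{1/2}\Gamma h^{1/2}=h^{-1/2}(h\Gamma h)h^{-1/2}$ and $\|h^{-1/2}\|_{\mathrm{op}}\le b^{-1/2}$ also yields the elementary embeddings $\|\Gamma\|_{\mathcal{L}^{1,1}}\le b^{-1}\|\Gamma\|_{\mathcal{L}^{2,1}}$ and $\|\Gamma\|_{\mathcal{L}^{0,1}}\le b^{-1}\|\Gamma\|_{\mathcal{L}^{1,1}}$.

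Next I would set
\[
    \Phi(\Gamma)(t):=\mathcal{U}(t)\Gamma_0-i\int_0^t \mathcal{U}(t-\tau)\big[\rho_{\Gamma}*w,\Gamma\big](\tau)\,d\tau
\]
and run the contraction in the ball $\mathcal{B}_T$ of self-adjoint $\Gamma\in\mathbf{N}_{2T}$ with $\sup_{[0,T]}\|\Gamma(t)\|_{\mathcal{L}^{1,1}}\le 2\|\Gamma_0\|_{\mathcal{L}^{1,1}}$ \emph{and} $\sup_{[0,T]}\|\Gamma(t)\|_{\mathcal{L}^{2,1}}\le 2\|\Gamma_0\|_{\mathcal{L}^{2,1}}$. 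For $\Gamma\in\mathcal{B}_T$, Proposition~\ref{estimate,operator,local,energy,level} gives $\|[\rho_{\Gamma}*w,\Gamma](\tau)\|_{\mathcal{L}^{1,1}}\lesssim\|\Gamma(\tau)\|_{\mathcal{L}^{1,1}}^2$ and Proposition~\ref{estimate,smooth,operator} gives $\|[\rho_{\Gamma}*w,\Gamma](\tau)\|_{\mathcal{L}^{2,1}}\lesssim\|\Gamma(\tau)\|_{\mathcal{L}^{1,1}}\|\Gamma(\tau)\|_{\mathcal{L}^{2,1}}$; integrating in $\tau$ and using that $\mathcal{U}$ is isometric on both $\mathcal{L}^{1,1}$ and $\mathcal{L}^{2,1}$, both bounds defining $\mathcal{B}_T$ are reproduced by $\Phi(\Gamma)$ as soon as $CT\|\Gamma_0\|_{\mathcal{L}^{1,1}}\le\tfrac14$. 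Crucially the same smallness condition controls both estimates, so the existence time depends only on $\|\Gamma_0\|_{\mathcal{L}^{1,1}}$, as claimed.

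For the contraction property, given $\Gamma_1,\Gamma_2\in\mathcal{B}_T$ I would split $[\rho_{\Gamma_1}*w,\Gamma_1]-[\rho_{\Gamma_2}*w,\Gamma_2]=[\rho_{\Gamma_1-\Gamma_2}*w,\Gamma_1]+[\rho_{\Gamma_2}*w,\Gamma_1-\Gamma_2]$, apply Proposition~\ref{estimate,smooth,operator} to each summand, and bound the factor occupying the ``density'' slot by $\|\cdot\|_{\mathcal{L}^{1,1}}\le b^{-1}\|\cdot\|_{\mathcal{L}^{2,1}}$; this produces $\|\Phi(\Gamma_1)-\Phi(\Gamma_2)\|_{\mathbf{N}_{2T}}\lesssim_b T\|\Gamma_0\|_{\mathcal{L}^{1,1}}\|\Gamma_1-\Gamma_2\|_{\mathbf{N}_{2T}}$, a contraction after shrinking $T$ once more. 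The fixed point is the desired mild solution; it stays self-adjoint, and $0\le\Gamma(t)$ is propagated because the nonlinear flow is conjugation by a unitary family. For the regularity statements I would argue that $\Gamma\in C^0([0,T];\mathcal{L}^{2,1})$ from strong continuity of $\mathcal{U}$ on $\mathcal{L}^{2,1}$ together with continuity of the Duhamel integral (its integrand lies in $L^{\infty}_\tau\mathcal{L}^{2,1}$ by the bilinear bounds), and that $\Gamma\in C^1([0,T];\mathcal{L}^{0,1})$ from $i\partial_t\Gamma=[h,\Gamma]+[\rho_{\Gamma}*w,\Gamma]$, noting that $h\Gamma=(h\Gamma h)h^{-1}$ and $\Gamma h=h^{-1}(h\Gamma h)$ give $[h,\Gamma]\in\mathcal{L}^{0,1}$ with norm $\lesssim_b\|\Gamma\|_{\mathcal{L}^{2,1}}$, while $[\rho_{\Gamma}*w,\Gamma]\in\mathcal{L}^{1,1}\hookrightarrow\mathcal{L}^{0,1}$, so the right-hand side is continuous into $\mathcal{L}^{0,1}$.

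The main obstacle is the asymmetry of Proposition~\ref{estimate,smooth,operator}, whose ``density'' factor is only controlled in the weaker $\mathcal{L}^{1,1}$ norm: closing the iteration purely in $\mathcal{L}^{2,1}$ would make the existence time depend on $\|\Gamma_0\|_{\mathcal{L}^{2,1}}$. The fix is to carry the $\mathcal{L}^{1,1}$ bound along in the same ball (using Proposition~\ref{estimate,operator,local,energy,level} for that slot) and to use $\mathcal{L}^{2,1}\hookrightarrow\mathcal{L}^{1,1}$ in the difference estimate. The only other delicate point is the strong continuity of the conjugation semigroups on the Schatten classes, which is routine after reduction to finite-rank operators.
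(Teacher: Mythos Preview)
Your overall scheme is the same as the paper's, and the self-map part of your argument is correct: the smallness condition $CT\|\Gamma_0\|_{\mathcal{L}^{1,1}}\le\tfrac14$ indeed suffices for $\Phi$ to preserve both the $\mathcal{L}^{1,1}$ and $\mathcal{L}^{2,1}$ bounds on $\mathcal{B}_T$. The slip is in the contraction step. For the summand $[\rho_{\Gamma_1-\Gamma_2}*w,\Gamma_1]$, Proposition~\ref{estimate,smooth,operator} gives $\lesssim\|\Gamma_1-\Gamma_2\|_{\mathcal{L}^{1,1}}\,\|\Gamma_1\|_{\mathcal{L}^{2,1}}$; after you pass the density factor through the embedding $\mathcal{L}^{2,1}\hookrightarrow\mathcal{L}^{1,1}$, the coefficient left in front of $\|\Gamma_1-\Gamma_2\|_{\mathbf{N}_{2T}}$ is $\|\Gamma_1\|_{\mathcal{L}^{2,1}}\le 2\|\Gamma_0\|_{\mathcal{L}^{2,1}}$, \emph{not} $\|\Gamma_0\|_{\mathcal{L}^{1,1}}$. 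So the Lipschitz constant you actually obtain is $\lesssim_b T\,\|\Gamma_0\|_{\mathcal{L}^{2,1}}$, and ``shrinking $T$ once more'' makes the existence time depend on $\|\Gamma_0\|_{\mathcal{L}^{2,1}}$, contrary to the statement.

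The paper repairs this differently: it first runs the contraction in $\mathcal{L}^{2,1}$ (accepting an a~priori dependence of $T$ on $\|\Gamma_0\|_{\mathcal{L}^{2,1}}$), and then upgrades via Gr\"onwall. From the Duhamel formula and Proposition~\ref{estimate,smooth,operator},
\[
\|\Gamma(t)\|_{\mathcal{L}^{2,1}}\le\|\Gamma_0\|_{\mathcal{L}^{2,1}}+C\Big(\sup_{\tau\in I_T}\|\Gamma(\tau)\|_{\mathcal{L}^{1,1}}\Big)\int_0^t\|\Gamma(\tau)\|_{\mathcal{L}^{2,1}}\,d\tau,
\]
so $\|\Gamma(t)\|_{\mathcal{L}^{2,1}}\le\|\Gamma_0\|_{\mathcal{L}^{2,1}}\exp\big(Ct\sup_{I_T}\|\Gamma\|_{\mathcal{L}^{1,1}}\big)$. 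Since Theorem~\ref{local,energy,level} already controls $\sup_{I_T}\|\Gamma\|_{\mathcal{L}^{1,1}}$ on an interval whose length depends only on $\|\Gamma_0\|_{\mathcal{L}^{1,1}}$, the $\mathcal{L}^{2,1}$ norm cannot blow up there, and the existence time inherits the correct dependence. An alternative fix within your framework is to run the \emph{contraction} in $L^\infty_t\mathcal{L}^{1,1}$ (where Proposition~\ref{estimate,operator,local,energy,level} gives Lipschitz constant $\lesssim T\|\Gamma_0\|_{\mathcal{L}^{1,1}}$) while keeping the $\mathcal{L}^{2,1}$ bound only as a self-map constraint, and then observe that the $\mathcal{L}^{2,1}$-ball is closed for the $\mathcal{L}^{1,1}$ topology; but the Gr\"onwall argument is what the paper actually does. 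Your regularity discussion for $C^0([0,T];\mathcal{L}^{2,1})\cap C^1([0,T];\mathcal{L}^{0,1})$ is fine and more explicit than the paper's one-line appeal to semigroup theory.
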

\begin{proof}
Based on Proposition \ref{estimate,smooth,operator}, we use the contraction mapping principle to obtain the local well-posedness result. 

To show the existence time $T$ depends on $\left\|\Gamma_{0}\right\|_{\mathcal{L}^{1,1}}$, consider the integral form of the solution $\Gamma(t)$
\begin{equation*}
    \Gamma(t)=e^{-i\,ht}\Gamma_{0}e^{i\,ht}-i\int_{0}^{t}e^{-i\,h(t-\tau)} \left[\rho_{\Gamma(\tau)}*w,\Gamma(\tau)\right]e^{i\,h(t-\tau)} \,d\tau,
\end{equation*}
then by the Minkowski's inequality, 
\begin{align*}
    \left\|\Gamma(t)\right\|_{\mathcal{L}^{2,1}}&\le \left\|e^{-i\,ht}\Gamma_{0}e^{i\,ht}\right\|_{\mathcal{L}^{2,1}}+\int_{0}^{t}\left\|\left[\rho_{\Gamma(\tau)}*w,\Gamma(\tau)\right]\right\|_{\mathcal{L}^{2,1}}\,d\tau\\
    &\le \left\|\Gamma_{0}\right\|_{\mathcal{L}^{2,1}}+C\left(\sup_{\tau\in I_{T}}\left\|\Gamma(\tau)\right\|_{\mathcal{L}^{1,1}}\right)\int_{0}^{t}\left\|\Gamma(\tau)\right\|_{\mathcal{L}^{2,1}}\,d\tau \qquad (\text{Proposition \ref{estimate,smooth,operator}}),
\end{align*}
where $C$ is a constant. Using the Gr\"onwall's inequality, for $0\le t\le T$,
\begin{equation*}
    \left\|\Gamma(t)\right\|_{\mathcal{L}^{2,1}}\le \left\|\Gamma_{0}\right\|_{\mathcal{L}^{2,1}}\exp\left(Ct\sup_{\tau\in I_{T}}\left\|\Gamma(\tau)\right\|_{\mathcal{L}^{1,1}}\right).
\end{equation*}
Since Theorem \ref{local,energy,level} says that the existence $T$ depends on $\left\|\Gamma_{0}\right\|_{\mathcal{L}^{1,1}}$, with the above estimate, so is the case for Theorem \ref{local,smooth}. By the semi-group theory, the solution $\Gamma(t)\in C^0\left([0,T],\mathcal{L}^{2,1}\right)\cap C^1\left([0,T],\mathcal{L}^{0,1}\right)$.
\end{proof}

3. The conservation law.

We first verify the conservation law of energy for smooth data, then pass it to the energy level by the limiting argument.
\begin{proposition}\label{energy,conservation,smooth}
Suppose that $\Gamma(t)\in C^0\left([0,T],\mathcal{L}^{2,1}\right)\cap C^1\left([0,T],\mathcal{L}^{0,1}\right)$ is a solution to Equation (\ref{hatree,density,form}), then the total energy (\ref{total,energy}) $\mathcal{E}_{HF}\left(\Gamma(t)\right)$ is conserved for $t\in [0,T]$.
\end{proposition}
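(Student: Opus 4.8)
The plan is to show that $t\mapsto\mathcal{E}_{HF}(\Gamma(t))$ is differentiable on $[0,T]$ with vanishing derivative, by checking that the time derivative of the kinetic energy $Tr\!\left(h^{1/2}\Gamma h^{1/2}\right)$ is cancelled exactly by that of the interaction energy $\tfrac12\int(\rho_{\Gamma}*w)\rho_{\Gamma}$. Throughout write $V(t):=\rho_{\Gamma(t)}*w$; applying the bound $\||\phi|^2*w\|_{L^{\infty}}\lesssim\|h^{1/2}\phi\|_{L^2}^2$ from the proof of Proposition~\ref{estimate,local,energy,level} term by term to a spectral decomposition of $h^{1/2}\Gamma(t)h^{1/2}$ (as in the proof of Proposition~\ref{estimate,operator,local,energy,level}) gives $V(t)\in L^{\infty}_x$ with $\|V(t)\|_{L^{\infty}}\lesssim\|h^{1/2}\Gamma(t)h^{1/2}\|_{tr}$, so $V(t)$ is a bounded self-adjoint multiplication operator depending continuously on $t$. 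Since $\Gamma$ is a mild solution lying in $C^1([0,T],\mathcal{L}^{0,1})$, it satisfies $i\,\partial_t\Gamma=[h+V,\Gamma]$ in $\mathcal{L}^{0,1}$.

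\textbf{Kinetic term.} To avoid the (possibly infinite) quantity $Tr(h^2\Gamma)$, I would differentiate the kinetic energy through the Duhamel identity for $h^{1/2}\Gamma h^{1/2}$. Because $h^{1/2}$ commutes with $e^{-iht}$,
\[
h^{1/2}\Gamma(t)h^{1/2}=e^{-iht}\,h^{1/2}\Gamma_0 h^{1/2}\,e^{iht}-i\int_0^t e^{-ih(t-\tau)}\,h^{1/2}[V(\tau),\Gamma(\tau)]h^{1/2}\,e^{ih(t-\tau)}\,d\tau .
\]
Taking the trace and using that conjugation by the unitary $e^{-iht}$ preserves $Tr$ yields $Tr\!\left(h^{1/2}\Gamma(t)h^{1/2}\right)=Tr\!\left(h^{1/2}\Gamma_0 h^{1/2}\right)-i\int_0^t Tr\!\left(h^{1/2}[V(\tau),\Gamma(\tau)]h^{1/2}\right)d\tau$; differentiating in $t$ gives $\frac{d}{dt}Tr\!\left(h^{1/2}\Gamma h^{1/2}\right)=-i\,Tr\!\left(h^{1/2}[V,\Gamma]h^{1/2}\right)=-i\,Tr\!\left(h[V,\Gamma]\right)$.

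\textbf{Interaction term and cancellation.} Writing $\tfrac12\int(\rho_{\Gamma}*w)\rho_{\Gamma}=\tfrac12 Tr(\Gamma V)$ and using the product rule, $\frac{d}{dt}\tfrac12 Tr(\Gamma V)=\tfrac12 Tr\!\left((\partial_t\Gamma)V\right)+\tfrac12 Tr(\Gamma\,\partial_t V)$; since $w$ is even, $Tr(\Gamma\,\partial_t V)=\int\rho_{\Gamma}(\rho_{\partial_t\Gamma}*w)=\int\rho_{\partial_t\Gamma}(\rho_{\Gamma}*w)=Tr\!\left((\partial_t\Gamma)V\right)$, so this derivative equals $Tr\!\left((\partial_t\Gamma)V\right)=-i\,Tr\!\left(V[h,\Gamma]\right)-i\,Tr\!\left(V[V,\Gamma]\right)$, and the last trace is $0$ by cyclicity ($V$ bounded, $[V,\Gamma]\in\mathcal{L}^{0,1}$). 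Adding the two contributions, $\frac{d}{dt}\mathcal{E}_{HF}(\Gamma(t))=-i\left(Tr(h[V,\Gamma])+Tr(V[h,\Gamma])\right)$; expanding the commutators and applying cyclicity once more, $Tr(h[V,\Gamma])=Tr([h,V]\Gamma)=-Tr(V[h,\Gamma])$, so the bracket vanishes and $\mathcal{E}_{HF}(\Gamma(t))\equiv\mathcal{E}_{HF}(\Gamma_0)$ on $[0,T]$.

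\textbf{Main obstacle.} The real work is making this trace calculus rigorous with only $\Gamma\in\mathcal{L}^{2,1}$: one must verify that $h^{1/2}[V,\Gamma]h^{1/2}$ and $[h,V]\Gamma$ are trace class, that the map $\tau\mapsto Tr(h^{1/2}[V(\tau),\Gamma(\tau)]h^{1/2})$ is continuous, and that the cyclic rearrangements above are legitimate. This reduces to commutator/boundedness estimates of the type $\|[h^{1/2},V]h^{-1/2}\|_{L^2\to L^2}$ and $\|h^{-1}[h,V]\|_{L^2\to L^2}$ being finite, which follow from $\nabla V\in L^3$ and $\Delta V=-4\pi\rho_{\Gamma}\in L^3$ (both controlled by $\|h^{1/2}\Gamma h^{1/2}\|_{tr}$ via Kato's inequality (\ref{covariant,derivative,relation}), Hardy's inequality, the Sobolev inequality and the Hardy-Littlewood-Sobolev inequality, exactly in the spirit of Propositions~\ref{estimate,local,energy,level} and~\ref{estimate,smooth}) together with $h\ge b>0$ so that $h^{-1/2}$ is bounded. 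An alternative, which I would fall back on if these commutator bounds prove cumbersome, is to establish conservation first for regularized data (say $h^{3/2}\Gamma_0 h^{3/2}\in\mathcal{L}^1$, or finite-rank $\Gamma_0$ with range in $C_c^{\infty}$), where every manipulation is manifestly valid, and then pass to general $\mathcal{L}^{2,1}$ initial data using continuity of $\Gamma_0\mapsto\mathcal{E}_{HF}(\Gamma_0)$ on $\mathcal{L}^{1,1}$ together with the continuous dependence on the data furnished by Theorems~\ref{local,energy,level} and~\ref{local,smooth}.
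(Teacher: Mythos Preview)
Your proposal is correct and follows essentially the same route as the paper: both arguments differentiate the energy through the Duhamel (mild) formulation and reduce the cancellation to cyclicity of the trace, ending with $Tr(h[V,\Gamma])+Tr(V[h,\Gamma])=0$ and $Tr(V[V,\Gamma])=0$. The paper's write-up is terser---it records the time derivative of $Tr(h\Gamma)=Tr(\Gamma h)$ directly from the integral formula, observes the free-evolution contributions cancel in pairs, and then invokes cyclicity---whereas you organize the computation into a kinetic piece and an interaction piece and use the evenness of $w$ to symmetrize the latter; but the substance is the same. Your discussion of the ``main obstacle'' (trace-class membership of $h^{1/2}[V,\Gamma]h^{1/2}$ and legitimacy of the cyclic rearrangements) is in fact more careful than the paper, which treats these manipulations formally under the hypothesis $\Gamma\in\mathcal{L}^{2,1}$ without spelling out the commutator bounds you mention.
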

\begin{proof}
The trick is to express (\ref{total,energy}) in the following way
\begin{align*}
    \mathcal{E}_{HF}\left(\Gamma\right)&=Tr\left(h\Gamma\right)+\frac{1}{2}Tr\left(\left(\rho_{\Gamma}*w\right)\Gamma\right)=Tr\left(\Gamma h\right)+\frac{1}{2}Tr\left(\Gamma\left(\rho_{\Gamma}*w\right)\right),
\end{align*}
and use the mild formulation
\begin{equation*}
    \Gamma(t)=e^{-ih t}\Gamma_{0}e^{iht}-i\int_{0}^{t}e^{-ih(t-\tau)}\left[\rho_{\Gamma(\tau)}*w,\Gamma(\tau)\right]e^{ih(t-\tau)}\,d\tau.
\end{equation*}
Taking the time derivative
\begin{align*}
    \frac{d\,\mathcal{E}_{HF}(\Gamma(t))}{dt} &=-i\,Tr\left(h e^{-ih t}\Gamma_{0}e^{iht}h \right)-\int_{0}^{t}d\tau\, Tr\left(h e^{-ih(t-\tau)}\left[\rho_{\Gamma(\tau)}*w,\Gamma(\tau)\right]e^{ih(t-\tau)}h\right)\\
    &\quad +i\,Tr\left(h e^{-ih t}\Gamma_{0}e^{iht}h \right)+\int_{0}^{t}d\tau\, Tr\left(h e^{-ih(t-\tau)}\left[\rho_{\Gamma(\tau)}*w,\Gamma(\tau)\right]e^{ih(t-\tau)}h\right)\\
    &\quad -i\,Tr\left(\left[\rho_{\Gamma(t)}*w,\Gamma(t)\right]h\right)+Tr\left(\dot{\Gamma}(t)\left(\rho_{\Gamma(t)}*w\right)\right)\\
    &=-i\,Tr\left(\left[\rho_{\Gamma(t)}*w,\Gamma(t)\right]h\right)-i\,Tr\left(\left[h+\rho_{\Gamma(t)}*w,\Gamma(t)\right]\left(\rho_{\Gamma(t)}*w\right)\right)\\
    &=0 \qquad (\text{cyclicity of Tr}).
\end{align*}
By the fundamental theorem of calculus, $\mathcal{E}_{HF}\left(\Gamma(t)\right)=\mathcal{E}_{HF}\left(\Gamma_{0}\right)$ for $0\le t\le T$.
\end{proof}

For any initial data $\Gamma_{0}$ at the energy level, i.e.
\[
    \left\|\Gamma_{0}\right\|_{\mathcal{L}^{1,1}}<\infty,\quad \Gamma_{0}^*=\Gamma_{0},
\]
there exists a sequence $\{\Gamma_{0,k}\}_{k=1}^{\infty}\subset \mathcal{L}^{2,1}$ such that 
\[
    \lim_{k\rightarrow\infty}\left\|\Gamma_{0,k}-\Gamma_{0}\right\|_{\mathcal{L}^{1,1}}=0.
\]
Denote the solution of Equation (\ref{hatree,density,form}) associated to the initial data $\Gamma_{0,k}$ by $\Gamma_{k}(t)$. Since the existence time of $\Gamma_{k}(t)$ depends on $\left\|\Gamma_{0,k}\right\|_{\mathcal{L}^{1,1}}$ (Theorem \ref{local,smooth}), there is a uniform time $T$ such that all solutions $\Gamma_{k}(t)$ exist in the sense of Theorem \ref{local,smooth}.  By the continuous dependence on initial data (from Theorem \ref{local,energy,level}), for any $0\le t \le T$,
\[
    \lim_{k\rightarrow\infty}\left\|\Gamma_{k}(t)-\Gamma(t)\right\|_{\mathcal{L}^{1,1}}=0.
\]
While the total energy $\mathcal{E}_{HF}$ is continuous with respect to the norm $\mathcal{L}^{1,1}$, by Proposition \ref{energy,conservation,smooth},
\begin{equation}\label{energy,conservation}
    \mathcal{E}_{HF}\left(\Gamma(t)\right)=\lim_{k\rightarrow\infty}\mathcal{E}_{HF}\left(\Gamma_{k}(t)\right)=\lim_{k\rightarrow\infty}\mathcal{E}_{HF}\left(\Gamma_{0,k}\right)=\mathcal{E}_{HF}\left(\Gamma_0\right).
\end{equation}

4. The global well-posedness at the energy level.

Note that when the initial data $\Gamma_{0}$ is non-negative, i.e. it satisfies the operator inequality $\Gamma_{0}\ge 0$, the condition of being non-negative is preserved under Equation (\ref{hatree,density,form}). Thus $Tr\left(h^{1/2}\Gamma(t)h^{1/2}\right)=\left\|\Gamma(t)\right\|_{\mathcal{L}^{1,1}}$ and the energy $\mathcal{E}_{HF}(\Gamma(t))\sim \left\|\Gamma(t)\right\|_{\mathcal{L}^{1,1}}$. Using the conservation law (\ref{energy,conservation}), we improve the local well-posedness result Theorem \ref{local,energy,level} to the following global statement.
\begin{theorem}
Suppose that the initial data $\Gamma_{0}$ satisfies
\[
    \left\|\Gamma_{0}\right\|_{\mathcal{L}^{1,1}}<\infty,\quad \Gamma_{0}^*=\Gamma_{0},\quad \Gamma_{0}\ge 0,
\]
then Equation (\ref{hatree,density,form}) has a global mild solution $\Gamma(t)\in C^{0}\left([0,\infty),\mathcal{L}^{1,1}\right)$.
\end{theorem}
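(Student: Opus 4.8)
The plan is to upgrade the local result Theorem \ref{local,energy,level} to a global one by means of the conserved energy (\ref{energy,conservation}) together with a standard continuation argument, using crucially that $w(x)=1/|x|\ge 0$ makes the interaction energy non-negative and that the Hartree flow preserves the operator inequality $\Gamma\ge 0$.

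First I would record the propagation of non-negativity. Writing Equation (\ref{hatree,density,form}) as $i\,\partial_t\Gamma=[h+\rho_\Gamma*w,\Gamma]$, the generator $h+\rho_{\Gamma(t)}*w$ is for each fixed $t$ a self-adjoint operator, since $\rho_\Gamma*w$ acts by multiplication by a real function; hence the solution can be written as $\Gamma(t)=\mathcal{U}(t)\,\Gamma_0\,\mathcal{U}(t)^\ast$ for the unitary propagator $\mathcal{U}(t)$ of this time-dependent Hamiltonian, and conjugation by a unitary preserves both $\Gamma\ge 0$ and $Tr\,\Gamma$. At the energy level, where the mild solution does not manifestly display this structure, I would instead transport the property along the approximating sequence $\Gamma_{0,k}\in\mathcal{L}^{2,1}$ used to establish (\ref{energy,conservation}) --- there the smooth solutions of Theorem \ref{local,smooth} are genuine unitary conjugations --- and pass to the limit in $\mathcal{L}^{1,1}$. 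In particular, once $\Gamma(t)\ge 0$ one has the identity $\|\Gamma(t)\|_{\mathcal{L}^{1,1}}=Tr\bigl(h^{1/2}\Gamma(t)h^{1/2}\bigr)$.

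Next I would extract the uniform a priori bound. Since $w=1/|x|$ has Fourier transform $c/|\xi|^2$ with $c>0$ in $\mathbb{R}^3$, the interaction energy is non-negative,
\[
    \frac{1}{2}\int_{\mathbb{R}^3}\bigl(\rho_{\Gamma(t)}*w\bigr)(x)\,\rho_{\Gamma(t)}(x)\,dx=\frac{c}{2}\int_{\mathbb{R}^3}\frac{\bigl|\widehat{\rho_{\Gamma(t)}}(\xi)\bigr|^2}{|\xi|^2}\,d\xi\ge 0,
\]
so from the expression (\ref{total,energy}) for the energy and the conservation law (\ref{energy,conservation}),
\[
    \|\Gamma(t)\|_{\mathcal{L}^{1,1}}=Tr\bigl(h^{1/2}\Gamma(t)h^{1/2}\bigr)\le \mathcal{E}_{HF}\bigl(\Gamma(t)\bigr)=\mathcal{E}_{HF}(\Gamma_0)=:R_0<\infty
\]
for every $t$ in the interval of existence; the $\mathcal{L}^{1,1}$-norm of the solution therefore never exceeds the finite, initial quantity $R_0$.

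Finally I would run the continuation argument. Theorem \ref{local,energy,level} furnishes a local existence time $\delta(R)>0$ depending only on an upper bound $R$ for the $\mathcal{L}^{1,1}$-norm of the datum; let $T^\ast\in(0,\infty]$ be the maximal time of existence and suppose $T^\ast<\infty$. For $t_0<T^\ast$, restarting Equation (\ref{hatree,density,form}) at time $t_0$ with datum $\Gamma(t_0)$, whose $\mathcal{L}^{1,1}$-norm is $\le R_0$ by the previous step, produces a solution on $[t_0,t_0+\delta(R_0)]$ which by uniqueness prolongs $\Gamma$; choosing $t_0>T^\ast-\delta(R_0)$ contradicts the maximality of $T^\ast$. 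Hence $T^\ast=\infty$, and patching the local $C^0\bigl([0,T];\mathcal{L}^{1,1}\bigr)$ regularity of Theorem \ref{local,energy,level} over $[0,\infty)$ yields $\Gamma\in C^0\bigl([0,\infty),\mathcal{L}^{1,1}\bigr)$. I expect the one genuinely delicate point to be the rigorous propagation of $\Gamma\ge 0$ at the energy level --- and with it the identity $\|\Gamma(t)\|_{\mathcal{L}^{1,1}}=Tr(h^{1/2}\Gamma(t)h^{1/2})$ that converts the energy bound into a norm bound --- for which the approximation scheme already set up for (\ref{energy,conservation}) is the natural tool.
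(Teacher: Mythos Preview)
Your proposal is correct and follows essentially the same route as the paper: preservation of non-negativity under the flow gives $\|\Gamma(t)\|_{\mathcal{L}^{1,1}}=Tr\bigl(h^{1/2}\Gamma(t)h^{1/2}\bigr)$, positivity of the Coulomb interaction then makes the conserved energy dominate this norm, and the fact that the local existence time in Theorem \ref{local,energy,level} depends only on $\|\Gamma_0\|_{\mathcal{L}^{1,1}}$ closes the continuation argument. Your treatment is in fact more careful than the paper's brief sketch, particularly in flagging that propagation of $\Gamma\ge 0$ at the energy level is most cleanly obtained via the same $\mathcal{L}^{2,1}$ approximation already used for the conservation law.
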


\bibliographystyle{amsalpha}
\bibliography{reference}

\end{document}